\newtheorem{mytheorem}{Theorem}
\newtheorem{mylemma}{Lemma}
\newtheorem{mycorollary}{Corollary}
\newtheorem{myproposition}{Proposition}
\newtheorem{mydefinition}{Definition}
\def\Tr{\mathrm{Tr}}
\newcommand{\mcN}{\mathcal{N}}
\newcommand{\mcU}{\mathcal{U}}
\newcommand{\mcV}{\mathcal{V}}
\newcommand{\mcH}{\mathcal{H}}
\newcommand{\mcB}{\mathcal{B}}
\newcommand{\mc}{\mathcal}
\newcommand{\vect}[1]{\boldsymbol{#1}}
\newcommand{\eps}{\ensuremath{\varepsilon}}
\newcommand{\bes} {\begin{subequations}}
\newcommand{\ees} {\end{subequations}}
\newcommand{\beq}{\begin{equation}}
\newcommand{\eeq}{\end{equation}}
\def\a{\alpha}
\def\k{\kappa}
\def\r{\rho}     
\def\s{\sigma}
\def\Ph{\mcN}
\def\O{\Omega}
\def\>{\rangle}
\def\<{\langle}
\def\Tr{\mathrm{Tr}}
\def\Pr{\mathrm{Pr}}
\newcommand{\ketb}[2]{|{#1}\>\!\<#2|}
\begin{document}

\title{Beyond unital noise in variational quantum algorithms: noise-induced barren plateaus and limit sets}

\author{Phattharaporn Singkanipa}
\affiliation{Department of Physics and Center for Quantum Information Science \& Technology,
University of Southern California, Los Angeles, CA 90089, USA}

\author{Daniel A. Lidar}
\affiliation{Departments of Electrical Engineering, Chemistry, Physics \& Astronomy, and Center for Quantum Information Science \& Technology,
University of Southern California, Los Angeles, CA 90089, USA}

\maketitle

\begin{abstract}
  Variational quantum algorithms (VQAs) hold much promise but face the challenge of exponentially small gradients. Unmitigated, this barren plateau (BP) phenomenon leads to an exponential training overhead for VQAs. Perhaps the most pernicious are noise-induced barren plateaus (NIBPs), a type of unavoidable BP arising from open system effects, which have so far been shown to exist for unital noise maps. Here, we generalize the study of NIBPs to more general completely positive, trace-preserving maps, investigating the existence of NIBPs in the unital case and a class of non-unital maps we call Hilbert-Schmidt (HS)-contractive. The latter includes amplitude damping. We identify the associated phenomenon of noise-induced limit sets (NILS) of the VQA cost function and prove its existence for both unital and HS-contractive non-unital noise maps. Along the way, we extend the parameter shift rule of VQAs to the noisy setting. We provide rigorous bounds in terms of the relevant variables that give rise to NIBPs and NILSs, along with numerical simulations of the depolarizing and amplitude-damping maps that illustrate our analytical results.
\end{abstract}

\section{Introduction}

Variational quantum algorithms (VQAs) are promising applications of quantum computing in the NISQ era~\cite{Preskill2018,Cerezo2021,Endo2021,McClean:2016aa}. These algorithms leverage a customizable quantum circuit design, integrating both quantum and classical computation capabilities. Using parameterized quantum circuits, they compute problem-specific cost functions, followed by classical optimization to iteratively update the parameters. This hybrid quantum-classical optimization process continues until predefined termination criteria are met.

Previous studies have demonstrated that VQA circuits operable within the existing noise levels and hardware connectivity limitations of the NISQ era, already find applications across diverse domains, such as quantum optimization~\cite{farhi2014quantum,Moll2018,Wang2018}, quantum optimal control~\cite{Li:2017aa}, linear systems~\cite{Bravo-Prieto2019,Huang2021,Xu2021-lin}, quantum metrology~\cite{Koczor2020,Meyer2021}, quantum compiling~\cite{Khatri2019,Sharma2020}, quantum error correction~\cite{qvector,Xu2021}, quantum machine learning~\cite{Mitarai2018,Farhi2018} and quantum simulation~\cite{Peruzzo:2014aa,Kandala:2017aa,Cerezo2022}. Moreover, VQA has been established as a universal model of quantum computation~\cite{Biamonte2021}. 

Despite their comparable computational power to other quantum models and demonstrated advantages, VQAs exhibit inherent constraints that present scalability challenges for problems of arbitrary scale. Specifically, VQAs for random circuits suffer from exponentially vanishing gradients, commonly referred to as the Barren Plateau (BP) phenomenon~\cite{McClean2018,Holmes2022,Cerezo2021-bp,Ortiz2021,Cerezo2021-highbp,Arrasmith2021,CerveroMartin2023barrenplateausin,Wang2021,Arrasmith:2022aa,fefferman2023effect}. This phenomenon renders the parameter training step asymptotically impossible for circuits with a sufficiently large number of qubits $n$, even at shallow circuit depth.

Here, we study noise-induced barren plateaus (NIBPs), which emerge under decoherence-induced noise~\cite{Breuer:book}. NIBPs were previously shown to be present in sufficiently deep circuits subjected to unital maps~\cite{Wang2021}. This holds true even in constant-width or non-random circuits. Alternatively, NIBPs exist under strictly contracting noise maps when the parameter shift rule (PSR)~\cite{Mitarai2018,Schuld2019} is applicable~\cite{schumann2023emergence}. 

Unlike other BP types, for which mitigation strategies have been proposed~\cite{Volkoff2021,Holmes2022,Grant2019,Zhang2020,Pesah2021,Patti2021,Bharti2021,cichy2023nonrecursive,Wiersema2023,Mele2022,LIU2022128169}, it remains unclear whether NIBPs can be similarly mitigated. 
Experimental investigations on small systems have suggested that error mitigation (EM) techniques enable VQAs to more closely approach the true ground-state energy~\cite{Rosenberg2022}. Clifford Data Regression has proven effective in mitigating errors and reversing the concentration of cost function values~\cite{Czarnik2021errormitigation}. Nevertheless, it is noteworthy that the majority of EM protocols do not enhance trainability or even exacerbate the lack of trainability. Additionally, post-processing expectation values of noisy circuits is not advantageous in the context of NIBP~\cite{Wang2021-EM}. Previous work suggests that stochastic noise could be helpful for training VQAs \cite{liu2023stochastic}; an interesting open question that remains is whether there is an intermediate noise regime where we can train VQAs by exploiting noise. 

In this work, we extend the study of NIBPs to completely positive trace-preserving (CPTP) maps, including both unital maps and a class of non-unital maps we call HS-contractive. A rigorous definition of this class is given in \cref{def:HS-contractive} below, but intuitively, this is the class of maps under which the Bloch vector (or more generally, the coherence vector) is shrunk before it is shifted, just as in the case of the amplitude damping map. We analytically derive the scaling of the cost function gradient as a function of circuit width $n$, circuit depth $L$, and noise strength. We find that HS-contractive non-unital noise need not necessarily give rise to NIBPs, but instead exhibits a different phenomenon, which we refer to as a noise-induced limit set (NILS). Moreover, we simplify the NIBP derivation compared to Ref.~\cite{Wang2021}, guided by the intuition gained by considering the effect of noise on the single-qubit Bloch sphere. We generalize this to $n$-qubit systems via the coherence vector and compute derivatives of the cost function via the PSR. In addition, we investigate the applicability of the PSR under control noise and random unitary noise and assess the impact of these noise types on the bounds we derive. We find analytical expressions for the dependence of the circuit depth $L$ on relevant noise and circuit parameters that give rise to NIBP and NILS. Our analytical results are supported by numerical simulations.

This paper is organized as follows. Background results regarding VQA, PSR, the coherence vector, and characteristics of CPTP noise maps are presented in \cref{sec:pre}. \cref{sec:psr-w-noise} extends the PSR analysis to scenarios involving noise. We study the effects of HS-contractive non-unital and general unital noise in \cref{sec:NILS} and \cref{sec:NIBP}, respectively. In the unital case, we reprove that NIBP is always present. In the HS-contractive non-unital case, we find new results, in particular the phenomenon of NILS. Our theoretical findings are supported by numerical simulations in \cref{sec:sim}. We summarize our findings in \cref{sec:conclusions}.

\section{Preliminaries}
\label{sec:pre}

In this section, we review pertinent technical details and establish the notation we use to derive our results.

\subsection{VQA and PSR}

We adopt the variational quantum algorithm (VQA) framework of Ref.~\cite{Wang2021} and consider a general class of parameterized unitary ansatzes:
\bes
\label{eq:1}
\begin{align}
\label{eq:1a}
U(\vect{\theta}) &= \Pi_{l=L}^1 U_l(\vect{\theta}_l),\\
\label{eq:1b}
U_l(\vect{\theta}_l) &= \Pi_{m=G_l}^1e^{-\frac{i}{2}\theta_{lm}H_{lm}} W_{lm},
\end{align}
\ees
where $L$ is the circuit depth and the $U_l(\vect{\theta}_l)$ are unitaries sequentially applied by layers. The $l$'th layer consists of $G_l$ gates: the unparametrized gates denoted by $W_{lm}$ (such as CNOT) and the gates generated by dimensionless Hamiltonians denoted by $H_{lm}$ (the $m$'th gate in the $l$'th layer). In writing the various gates in \cref{eq:1} we implicitly assume that they are in a tensor product with the identity operator acting on all the qubits that are not explicitly labeled. The set $\vect{\theta} = \{\vect{\theta}_l\}_{l=1}^L$ consists of vectors of dimensionless continuous parameters $\vect{\theta}_l = \{\theta_{lm}\}_{m=1}^{G_l}$ that are optimized to minimize a cost function $C_\Omega$ expressed as the expectation value of an operator $\O$:
\beq
C_\Omega(\vect{\theta}) =\Tr[\Omega\ \mathcal{U}(\vect{\theta})(\rho_0)] .
\label{eq:C}
\eeq
Here,
\beq
\mathcal{U}(\vect{\theta})(\rho_0) \equiv U(\vect{\theta})\rho_0 U^\dag(\vect{\theta}) = \rho(\vect{\theta})
\label{eq:mcU}
\eeq 
is the unitary superoperator acting on the initial state $\rho_0$.
An important special case, which we focus on, is when $\Omega=H$, the ``problem Hamiltonian'' whose energy one is trying to minimize, and in this case we simply write $C$ for the cost function.

For $n$ qubits, we can always parametrize the traceless gate-generating Hamiltonians as
\beq
H_{lm} = \sum_{j=1}^{d^2-1} h_{lmj} P_j = \vect{h}_{lm}\cdot \vect{P},
\label{eq:Hlm}
\eeq
where $P_j\in\{I,\sigma^x,\sigma^y,\sigma^z\}^{\otimes n}$ is a Pauli string, i.e., a tensor product of up to $n$ Pauli matrices, $P_0=I^{\otimes n}$, $I$ is the identity operator, and $d=2^n$. We assume that the $P_j$'s are ordered such that $j$ increases with the Hamming weight of the Pauli string, i.e., the number of non-identity terms in $P_j$ (the manner in which $j$ increases at fixed Hamming weight does not matter for our purposes), and $\vect{P} = (P_1,\dots, P_{d^2-1})$.

In most cases of interest, the $h_{lmj}\in\mathbb{R}$ vanish for strings involving more than two Pauli matrices, i.e., the Hamiltonians are two-local.
This framework includes the Quantum Approximate Optimization Algorithm or Quantum Alternating Operator Ansatz (QAOA)~\cite{farhi2014quantum,Wang2018}, where $H_{l}=H_1\delta_{l,\text{odd}}+H_2\delta_{l,\text{even}}$ $\forall l,m$ with $[H_1,H_2]\ne 0$,
the Unitary Coupled Cluster (UCC) ansatz~\cite{Lee:2019aa}, where the $h_{lmj}$ are coefficients derived from one- and two-electron integrals, which is used in the Variational Quantum Eigensolver (VQE) algorithm~\cite{Peruzzo:2014aa} with applications in quantum chemistry~\cite{Cao:2019aa}, and the Hardware Efficient VQE Ansatz, which tries to minimize the circuit depth (i.e., the set of non-zero $\theta_{lm}$) given a predefined gate-set tailored to particular hardware~\cite{Kandala:2017aa}.

The parameter shift rule (PSR) is frequently used in evaluating derivatives of cost functions in VQAs~\cite{Li:2017aa,Mitarai2018,Schuld2019}. For a cost function $C(\vect{\theta})$ as in \cref{eq:C}, the PSR states that (see, e.g.,~\cite[Table~2]{Schuld2019}):
\begin{equation}
    \begin{aligned}
        \frac{\partial C(\vect{\theta})}{\partial \theta_{lm}}
        &=\frac{1}{2}[C(\vect{\theta}+\vect{\theta}_{lm}^{\pi/2})-C(\vect{\theta}-\vect{\theta}_{lm}^{\pi/2})],
    \end{aligned}
    \label{eq:psr}
\end{equation}
where 
\beq
\vect{\theta}_{lm}^{\pi/2}= \frac{\pi}{2} \hat{e}_{lm} 
\eeq 
and $\{\hat{e}_{lm}\}$ are standard unit vectors [i.e., the $(l,m)$th component of $\vect{\theta}_{lm}^{\pi/2}$ is $\pi/2$ and the rest are zero]. We reprove this result in \cref{sec:noiseless-PSR}. The essential point is that we can compute the derivative by means of a finite difference.

\subsection{Nice operator basis and Schatten $p$-norms}

Consider a Hilbert space $\mc{H}$ of dimension $d<\infty$. The space of bounded linear operators acting on $\mc{H}$ is denoted $\mcB(\mc{H})$. Let $\mc{M}(d, \mathbb{F})$ denote the vector space of $d\times d$ matrices with coefficients
in $\mathbb{F}$, where $\mathbb{F} \in \{\mathbb{R}, \mathbb{C}\}$. For our purposes it suffices to identify $\mcB(\mc{H})$ with $\mc{M}(d,\mathbb{C})$. The Hilbert-Schmidt inner product is $\<A,B\> \equiv \Tr(A^\dag B)$ for any two operators $A,B\in \mcB(\mc{H})$.

We define a ``nice operator basis'' as a set $\{F_j\}_{j=0}^{d^2-1}\in\mathcal{B}(\mc{H})$, where $F_0=\frac{1}{\sqrt{d}}I$, $\Tr(F_{j})=0$ $\forall j\ge 1$, that in addition satisfies the following properties:
\begin{equation}
    F_j=F_j^\dag, \ \ \<F_j, F_k\>=\Tr(F_j F_k)=\delta_{jk} \quad \forall j,k . 
    \label{eq:nice-op}
\end{equation}
The normalized Pauli strings $\{\frac{1}{\sqrt{d}}P_j\}_{j=0}^{d^2-1}$ (where $d=2^n$) are a convenient explicit choice for the nice operator basis. 
Another convenient choice is the set of generalized $d\times d$ Gell-Mann matrices~\cite{Gell-Mann1962,generalized-Gell-Mann}, normalized such that $\Tr(F_j F_k)=\delta_{jk}$ is satisfied. 

Let $|A| \equiv \sqrt{A^\dag A}$. Recall that the Schatten $p$-norm $\|A\|_p$ is given, for $1 \leq p < \infty$, by the $p$-norm of the singular values $\sigma_i$ of the operator $A\in\mcB(\mc{H})$:
\beq
\|A\|_p = \Tr(|A|^p)^{1/p} = ( \sum_i \sigma_i^p )^{1/p}  .
\label{eq:Schatten-p}
\eeq
$\|A\|_1 =\Tr(|A|)$ is the trace norm (sum of the singular values), $\|A\|_2 = \sqrt{\<A,A\>} = \sqrt{\Tr(|A|^2)}$ is the Hilbert-Schmidt or Frobenius norm, and $\|A\|_\infty$ is the operator norm (largest singular value). Without risk of confusion, we use $\|A\|$ to denote $\|A\|_\infty$ and also use $\|\vect{v}\|$ to denote the Euclidian norm (i.e., $2$-norm) of any \emph{vector} $\vect{v}\in\mc{H}$ from hereon.

\subsection{The coherence vector}

Quantum states are represented by density operators $\rho\in \mcB_+(\mcH)$ (positive trace-class operators acting on $\mc{H}$) with unit trace: $\Tr\r =1$.
Elements of $\mathcal{B}[\mathcal{B}(\mc{H})]$, i.e., linear transformations $\mcN:\mathcal{B}(\mc{H})\mapsto \mathcal{B}(\mc{H})$, are called superoperators, or maps. 

Complete positivity of a superoperator $\mcN$ is equivalent to the statement that  $\mcN$ has a Kraus representation~\cite{Kraus:83}: $\forall X\in \mcB(\mc{H})$, 
\beq
\mcN(X) = \sum_\alpha K_\alpha X K_\alpha^\dag ,
\label{eq:KOSR}
\eeq
where the $\{K_\a\}$ are called Kraus operators. When they satisfy $\sum_\a K_\a^\dag K_\a = I$, the map $\mcN$ is trace-preserving. 

The density operator can be expanded in an arbitrary nice operator basis as
\begin{equation}
    \rho=\frac{1}{d}I + \sum_{j=1}^{d^2-1}v_j F_j = \frac{1}{\sqrt{d}}F_0+\vect{v}\cdot\vect{F},
    \label{eq:rho-nice}
\end{equation}
where $\vect{F}=\{F_1,\dots,F_{d^2-1}\}^T$, and $\vect{v}=\{v_1,\dots,v_{d^2-1}\}$ is called the \emph{coherence vector}. 

We summarize two well-known facts about the coherence vector. First, the purity 
\beq
P \equiv \Tr\r^2 = \<\r,\r\> = \|\r\|_2^2
\eeq
is bounded by
\begin{align}
0\leq \|\vect{v}\|= \left(P-\frac{1}{d}\right)^{1/2} &\leq \left(1-\frac{1}{d}\right)^{1/2} < 1 \ .
\label{eq:v-bounded}
\end{align}
See \cref{app:purity-proof} for a proof.

Second, 
let $\mc{M}(d, F)$ denote the vector space of $d\times d$ matrices with coefficients in the field $F$.

\begin{myproposition}
\label{prop:cohvec-affine}
The CPTP map $\r'=\mcN(\rho)$ is equivalent to the affine coherence vector transformation
    \beq
    \label{eq:v'}
    \vect{v}' = M\vect{v}+\vect{c} ,
    \eeq
where $M\in \mc{M}(d^2-1, \mathbb{R})$ and $\vect{c}\in \mathbb{R}^{d^2-1}$ have elements given by
\bes
\label{eq:M-c}
    \begin{align}
      \label{eq:M-c-1} 
              M_{ij} &=\< F_i,\mcN(F_j)\> = \sum_\alpha
        \Tr(F_i K_\alpha F_j K_\alpha^\dag)
        \\
      \label{eq:M-c-2} 
        c_i &=\frac{1}{d} \< F_i,\mcN(I)\> = \frac{1}{{d}}\sum_\alpha
        \Tr(F_i K_\alpha K_\alpha^\dag )\ .
            \end{align}
\ees
\end{myproposition}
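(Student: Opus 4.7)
The plan is to compute $\vect{v}'$ directly by expanding $\rho'=\mcN(\rho)$ in the nice operator basis and exploiting linearity of $\mcN$ together with the orthonormality relations in \cref{eq:nice-op}.

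First, I would write $\rho$ in the form given by \cref{eq:rho-nice}, apply $\mcN$, and use linearity to get
\[
\rho' = \frac{1}{\sqrt{d}}\mcN(F_0) + \sum_{j=1}^{d^2-1} v_j \mcN(F_j) .
\]
Because $\mcN$ is trace-preserving, $\Tr\rho' = 1$, so $\rho'$ admits its own expansion $\rho' = \frac{1}{\sqrt{d}}F_0 + \sum_i v'_i F_i$ with coefficients extractable by the Hilbert-Schmidt inner product: $v'_i = \<F_i,\rho'\>$ for $i\geq 1$ (orthogonality to $F_0$ kills the identity part since $\Tr F_i=0$).

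Second, I would substitute the previous display into this formula, obtaining
\[
v'_i = \sum_{j=1}^{d^2-1}\<F_i,\mcN(F_j)\>\, v_j + \frac{1}{\sqrt{d}}\<F_i,\mcN(F_0)\> .
\]
Using $F_0=\frac{1}{\sqrt{d}}I$, the constant piece becomes $\frac{1}{d}\<F_i,\mcN(I)\>$. Defining $M_{ij}$ and $c_i$ as in \cref{eq:M-c-1,eq:M-c-2}, this is exactly $\vect{v}' = M\vect{v} + \vect{c}$. The explicit Kraus forms in \cref{eq:M-c} then follow by substituting the Kraus representation \cref{eq:KOSR} for $\mcN$ and using the cyclicity of the trace.

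The only point that requires care — and what I would flag as the main thing to verify rather than a real obstacle — is that $M\in\mc{M}(d^2-1,\mathbb{R})$ and $\vect{c}\in\mathbb{R}^{d^2-1}$. This follows because complete positivity implies Hermiticity preservation, so $\mcN(F_j)$ and $\mcN(I)$ are Hermitian whenever $F_j$ and $I$ are. Then each matrix element $\<F_i,\mcN(F_j)\> = \Tr[F_i\,\mcN(F_j)]$ is the trace of a product of two Hermitian operators, hence real; the same argument applies to $c_i$. Trace preservation is not actually needed for the affine form itself, only for the interpretation of $\rho'$ as a state; the derivation goes through for any linear map.
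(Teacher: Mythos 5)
Your proposal is correct and follows essentially the same route as the paper's proof in \cref{app:cM-proof}: expand $\rho$ in the nice operator basis, apply the map, and extract the coefficients $v_i'$ by taking Hilbert--Schmidt inner products with $F_i$ using the orthonormality relations; the only cosmetic difference is that you keep $\mcN$ abstract and substitute the Kraus representation at the end, whereas the paper inserts it from the start. Your added remark on why $M$ and $\vect{c}$ are real (Hermiticity preservation) is a correct and worthwhile observation that the paper leaves implicit.
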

See \cref{app:cM-proof} for a proof.

The Gell-Mann matrices reduce to the standard Pauli matrices for $d=2$, normalized such that $\vect{F}=\vect{\sigma}/\sqrt{2}=(\sigma^x,\sigma^y,\sigma^z)/\sqrt{2}$.
Therefore, in the case of single qubit, we can write $\rho$ in the well-known form
\begin{equation}
        \rho =\frac{1}{2}(I+\Bar{\vect{v}}\cdot\vect{\sigma})=\frac{1}{\sqrt{2}}F_0+{\vect{v}}\cdot\vect{F},
\end{equation}
where ${\vect{v}}=\Bar{\vect{v}}/\sqrt{2}$. 
Note that $\|\Bar{\vect{v}}\| \leq 1$, which is the convention for the Bloch sphere representation. We avoid this normalization and instead use the nice operator basis convention from here on, even for $d=2$, so that $\|\vect{v}\| \leq 1/\sqrt{2}$ [\cref{eq:v-bounded}].

\subsection{Unital maps}
A unital map $\mcN$ is defined as satisfying $\mcN(I)=I$, hence $\sum_\alpha K_\alpha K_\alpha^\dag =I$. 

\begin{mylemma}
\label{lemma1}
Unital CPTP maps are purity non-increasing: $P'\leq P$, where $P$ and $P'$ are, respectively, the purity of $\r$ and $\r' = \mcN(\r)$. Equality holds for all $\rho$ iff the map is unitary.
\end{mylemma}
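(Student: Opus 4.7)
The plan is to exploit the Kadison-Schwarz inequality, which holds for any $2$-positive unital map $\mcN$: $\mcN(A^\dag A) \geq \mcN(A)^\dag \mcN(A)$ for all $A\in\mcB(\mcH)$. Since every CPTP map is $2$-positive, applying this with the Hermitian choice $A = \r$ gives the operator inequality $\mcN(\r^2) \geq \mcN(\r)^2$. Taking traces and using trace preservation yields
\beq
P = \Tr\r^2 = \Tr\mcN(\r^2) \geq \Tr\mcN(\r)^2 = P',
\eeq
which is exactly purity non-increase. In the language of \cref{prop:cohvec-affine} this is equivalent to $\|M\vect v\| \leq \|\vect v\|$, since unitality forces $\vect c = 0$ (because $\Tr F_i = 0$ for $i \geq 1$) and $P = 1/d + \|\vect v\|^2$.

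For the equality claim, the ``if'' direction is trivial because conjugation by a unitary preserves $\Tr(\cdot^2)$. For the ``only if'' direction, I would assume $P' = P$ for every $\r$. Then the positive semidefinite operator $\mcN(\r^2) - \mcN(\r)^2$ has zero trace and is therefore the zero operator, giving $\mcN(\r^2) = \mcN(\r)^2$ on all states. By polynomial extension in the coherence vector together with $\mcN(I) = I$, this promotes to $\mcN(A^2) = \mcN(A)^2$ for every Hermitian $A$, and polarization produces the Jordan identity $\mcN(AB + BA) = \mcN(A)\mcN(B) + \mcN(B)\mcN(A)$. Specializing to pure states $\r = \ketbra{\psi}$, the condition $\mcN(\r) = \mcN(\r)^2$ shows $\mcN(\ketbra{\psi}) = \ketbra{\phi_\psi}$ is rank-one, and taking the trace of the Jordan identity with $A = \ketbra{\psi_1}$, $B = \ketbra{\psi_2}$ gives $|\bk{\phi_1}{\phi_2}|^2 = |\bk{\psi_1}{\psi_2}|^2$. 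Wigner's theorem then produces a unitary or antiunitary implementation of $\ket\psi \mapsto \ket{\phi_\psi}$; the antiunitary option would give $\mcN(X) = U X^T U^\dag$, which is positive but not completely positive, contradicting the CP assumption. Hence $\mcN(X) = U X U^\dag$ for some unitary $U$.

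The main obstacle is the equality case: purity preservation alone constrains $\mcN$ only to be a Jordan $*$-homomorphism on $\mcB(\mcH)$, and the concluding step -- ruling out the antiunitary branch of Wigner's dichotomy -- is where complete positivity (rather than mere positivity) is essential, since without CP the transposition channel on a single qubit would provide a counterexample to the ``only if'' direction.
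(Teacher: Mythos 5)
Your proposal is correct, but it takes a genuinely different route from the paper's. For the inequality, the paper iterates the Cauchy--Schwarz inequality for the Hilbert--Schmidt inner product: writing $P' = \<\r,\mcN^\dag[\mcN(\r)]\>$ and using that the adjoint of a unital CPTP map is again unital CPTP, it builds the recursion $P^{(n)} \le P^{1/2}(P^{(n+1)})^{1/2}$ and telescopes to $P' \le P^{\sum_n 2^{-n}} = P$. Your single application of Kadison--Schwarz plus trace preservation, $P = \Tr\,\mcN(\r^2) \ge \Tr\,\mcN(\r)^2 = P'$, is shorter and more transparent, at the price of invoking the operator inequality for $2$-positive unital maps rather than only scalar Cauchy--Schwarz. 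For the equality case the roles reverse: the paper reads off from its Cauchy--Schwarz chain that equality for all $\r$ forces $\mcN^\dag\circ\mcN = \mc{I}$ and hence unitarity, whereas you run a rigidity argument (traceless PSD operator vanishes $\Rightarrow$ Jordan $*$-homomorphism $\Rightarrow$ transition probabilities preserved $\Rightarrow$ Wigner $\Rightarrow$ complete positivity excludes the antiunitary branch). This is sound --- the polarization step works because $Q(A)=\mcN(A^2)-\mcN(A)^2$ is homogeneous of degree two, hence vanishes on the full PSD cone, which \emph{is} closed under addition --- but it is much heavier machinery. The one point worth flagging is that you need the non-bijective, finite-dimensional version of Wigner's theorem, since a priori you only know the induced map on rays preserves transition probabilities (and is therefore injective); this is fine for $d<\infty$ but is not the textbook statement. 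Your closing observation that complete positivity, rather than mere positivity, is what kills the transpose branch is exactly right and is left implicit in the paper's argument (which instead relies on the fact that a CPTP map with CPTP inverse is unitary).
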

See \cref{app:lemma1-proof} for a proof.

\begin{mylemma}
\label{lemma2}
For unital CPTP maps $\mcN$ we have:
\bes
\begin{align}
    \label{eq:lemma2-1}
\vect{c} &= \vect{0} ,\\
    \label{eq:lemma2-2}
\|\vect{v}'\| &= \|M\vect{v} \| \le \|\vect{v} \|  .
\end{align}
\ees
Equality in \cref{eq:lemma2-2} holds for all $\vect{v}$ iff $\mcN$ is unitary, in which case $M$ is norm-preserving (hence orthogonal).
\end{mylemma}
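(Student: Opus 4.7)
The plan is to reduce both claims to results already established in the excerpt, namely \cref{eq:M-c}, \cref{eq:v-bounded}, and \cref{lemma1}, so that the only new input needed is the unitality condition $\mcN(I)=I$.

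First I would dispatch \cref{eq:lemma2-1}. Starting from \cref{eq:M-c-2},
\[
c_i = \tfrac{1}{d}\<F_i,\mcN(I)\> = \tfrac{1}{d}\<F_i,I\> = \tfrac{1}{d}\Tr(F_i),
\]
where the second equality uses unitality and the third uses the Hermiticity of $F_i$ from \cref{eq:nice-op}. Since the nice operator basis satisfies $\Tr(F_i)=0$ for all $i\ge 1$, every component vanishes and hence $\vect{c}=\vect{0}$. Substituting into \cref{eq:v'} immediately gives $\vect{v}'=M\vect{v}$, which establishes the equality part of \cref{eq:lemma2-2}.

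Next I would establish the inequality in \cref{eq:lemma2-2} by connecting the coherence-vector norm to the purity. By \cref{eq:v-bounded}, for any state $\rho$ with coherence vector $\vect{v}$ and purity $P$ we have $\|\vect{v}\|^2=P-\tfrac{1}{d}$, and the same identity holds for $\rho'=\mcN(\rho)$ with purity $P'$ and coherence vector $\vect{v}'$. Since $\mcN$ is unital and CPTP, \cref{lemma1} yields $P'\le P$, hence
\[
\|\vect{v}'\|^2 = P'-\tfrac{1}{d} \le P-\tfrac{1}{d} = \|\vect{v}\|^2,
\]
which is the desired bound. Because this argument is a chain of equivalences on one side and a single inequality coming from \cref{lemma1}, equality in $\|\vect{v}'\|=\|\vect{v}\|$ for all $\rho$ (equivalently, for all $\vect{v}$ in the allowed set) is equivalent to equality $P'=P$ for all $\rho$, which by \cref{lemma1} happens iff $\mcN$ is unitary.

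The one small subtlety, which I expect is the only non-routine point, is showing that when $\mcN$ is unitary the matrix $M$ is \emph{orthogonal}, not merely norm-preserving on a bounded subset. The legal coherence vectors span an open neighborhood of $\vect{0}$ in $\mathbb{R}^{d^2-1}$ (take $\rho=\tfrac{1}{d}I + \eps F_j$ for small $\eps$), so $M$ is norm-preserving on a set containing a basis scaled by arbitrarily small factors; by linearity of $M$, $\|M\vect{v}\|=\|\vect{v}\|$ extends to all $\vect{v}\in\mathbb{R}^{d^2-1}$. The polarization identity then upgrades norm preservation to inner-product preservation, i.e.\ $M^T M=I$, so $M$ is orthogonal. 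This completes the proof of \cref{lemma2}.
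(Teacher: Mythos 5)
Your proof is correct and follows essentially the same route as the paper: \cref{eq:lemma2-1} via the tracelessness of the $F_i$ combined with unitality in \cref{eq:M-c-2}, and \cref{eq:lemma2-2} via the purity identity $\|\vect{v}\|^2 = P - 1/d$ from \cref{eq:v-bounded} together with \cref{lemma1}, including the equality case. Your final paragraph usefully fills in a detail the paper only asserts parenthetically — that norm preservation on the physically allowed coherence vectors extends by homogeneity to all of $\mathbb{R}^{d^2-1}$ and then upgrades to orthogonality via polarization.
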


\begin{proof}
To prove \cref{eq:lemma2-1}, note that it follows from \cref{eq:M-c} that $c_i = \frac{1}{{d}}
        \Tr[F_i \sum_\alpha K_\alpha K_\alpha^\dag ] = \frac{1}{{d}} \Tr[F_i]=0$.
        
To prove \cref{eq:lemma2-2}, note that from \cref{eq:v-bounded,eq:v'} we have $ \|M\vect{v} \| =  \|\vect{v}'\| = \sqrt{P'-1/d}$, where $P'=\Tr[(\r')^2]$, $\r'=\mcN(\r)$. If $\|M\vect{v} \|> \|\vect{v} \| = \sqrt{P-1/d}$ then $P'> P$, which contradicts \cref{lemma1}. Moreover, from \cref{lemma1}, $P'=P$ iff $\mcN$ is unitary. Since $P'=P$ is equivalent to $ \|M\vect{v} \| =  \|\vect{v}'\| = \|\vect{v}\|$, we have equality for all $\vect{v}$ iff $\mcN$ is unitary.
    \end{proof}

\subsection{Non-unital maps}
\label{sec:non-unital}

From here on, when we consider non-unital noise maps we restrict our analysis to the following class, which are contractive under the Hilbert-Schmidt (HS) norm:

\begin{mydefinition}
\label{def:HS-contractive}
A (finite-dimensional) map $\mathcal{N}$ is called \emph{HS-contractive} if $\exists r < 1$ s.t. for all states $\rho_1\ne \rho_2$ we have $\|\mathcal{N}(\rho_1) - \mathcal{N}(\rho_2)\|_2 \leq r \|\rho_1 - \rho_2\|_2$.
\end{mydefinition}

This definition of contractivity is different from the standard one for CPTP maps, which are well known to be contractive under the trace norm, i.e., $\|\mathcal{N}(\rho_1) - \mathcal{N}(\rho_2)\|_1 \leq \|\rho_1 - \rho_2\|_1$ for any pair of states $\rho_1,\rho_2$ and any CPTP map $\mc{N}$, including all non-unital maps~\cite{nielsen2010quantum}. Other notions of contractivity also exist, such as for general positive maps between matrix spaces that include the zero element, in which case a non-unital map is always non-contractive~\cite{Perez-Garcia:2006aa}; see \cref{app:norm} for details.

\begin{mylemma}
\label{lem:M<1}
A map is HS-contractive if and only if its matrix $M$ satisfies $\|M\|<1$.
\end{mylemma}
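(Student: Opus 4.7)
The plan is to translate HS-contractivity on density operators into an operator-norm statement about $M$ using the nice operator basis. The crucial observation is that if $\rho_1,\rho_2$ are two states with coherence vectors $\vect{v}_1,\vect{v}_2$, then $\rho_1 - \rho_2 = (\vect{v}_1-\vect{v}_2)\cdot \vect{F}$ (the identity components cancel since both have trace $1$), and by orthonormality of the $F_j$,
\[
\|\rho_1 - \rho_2\|_2 = \|\vect{v}_1 - \vect{v}_2\|.
\]
Applying $\mcN$ and using \cref{prop:cohvec-affine}, the affine shift $\vect{c}$ cancels in the difference, so
\[
\|\mcN(\rho_1) - \mcN(\rho_2)\|_2 = \|M(\vect{v}_1 - \vect{v}_2)\|.
\]
Thus HS-contractivity is equivalent to the existence of $r<1$ with $\|M\vect{u}\| \leq r\|\vect{u}\|$ for every $\vect{u}$ that arises as a difference of two coherence vectors of states.

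The $(\Leftarrow)$ direction is then immediate: assuming $\|M\|<1$, pick $r=\|M\|$, and for any two states we get $\|\mcN(\rho_1)-\mcN(\rho_2)\|_2 \leq \|M\|\,\|\vect{v}_1-\vect{v}_2\| = r\|\rho_1-\rho_2\|_2$.

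For the $(\Rightarrow)$ direction, I need to show that the set of realizable differences $\vect{v}_1 - \vect{v}_2$ is ``rich enough'' to detect the full operator norm of $M$. The key sub-lemma: for any unit vector $\hat{u}\in\mathbb{R}^{d^2-1}$ there exists $\eps_0>0$ such that $\rho(\eps) \equiv \frac{1}{d}I + \eps\,\hat{u}\cdot\vect{F}$ is a valid density operator for all $0<\eps<\eps_0$. This is because $\hat{u}\cdot\vect{F}$ is Hermitian and traceless (as each $F_j$ is, for $j\geq 1$), so $\rho(\eps)$ has trace one, and its eigenvalues are a small perturbation of $1/d>0$ and hence positive for sufficiently small $\eps$. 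Taking $\rho_1 = \rho(\eps)$ and $\rho_2 = I/d$, the difference of coherence vectors is exactly $\eps\hat{u}$, and HS-contractivity gives $\|M(\eps\hat{u})\| \leq r\,\eps$, i.e., $\|M\hat{u}\|\leq r$. Taking the supremum over unit vectors yields $\|M\|\leq r < 1$.

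The only mild subtlety is the ``for all states $\rho_1\ne\rho_2$'' quantifier in \cref{def:HS-contractive} versus the ``for all $\vect{u}\in\mathbb{R}^{d^2-1}$'' quantifier for the operator norm; the sub-lemma above is what bridges the two, and linearity of $M$ lets the $\eps$ cancel. No heavy machinery is needed — the proof is essentially a change of coordinates from $\mcB(\mcH)$ to the coherence vector representation plus the observation that the maximally mixed state is an interior point of the state space.
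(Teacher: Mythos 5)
Your proof is correct and follows essentially the same route as the paper's: translate both sides of \cref{def:HS-contractive} into the coherence-vector picture via the orthonormality of the nice operator basis, so that HS-contractivity becomes $\|M\vect{u}\|\le r\|\vect{u}\|$ on differences of coherence vectors. The one place you go beyond the paper is the $(\Rightarrow)$ direction: the paper simply asserts that the inequality holds ``for all $\vect{v}\ne\vect{0}$'' and in particular for the norm-achieving vector, whereas you explicitly verify (via the interior-point sub-lemma, taking $\rho_1=\frac{1}{d}I+\eps\,\hat{u}\cdot\vect{F}$ and $\rho_2=I/d$) that every direction in $\mathbb{R}^{d^2-1}$ is realized as a difference of valid states, which is exactly the step needed to pass from the restricted quantifier of the definition to the full operator norm. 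Your version is therefore slightly more rigorous, but the underlying argument is the same.
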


\begin{proof}
In terms of a nice operator basis and the corresponding coherence vector, we can write $\|\rho_1 - \rho_2\|_2 = \|(\vect{v}_1-\vect{v}_2)\cdot\vect{F}\|_2$. Let $\vect{v}=\vect{v}_1-\vect{v}_2$ and using the properties of $\{F_i\}$ in \cref{eq:nice-op}, we have $\|\vect{v}\cdot\vect{F}\|_2=\|\vect{v}\|$.
Let $\rho'=\mathcal{N}(\rho)$ and $\vect{v}'=M\vect{v}+\vect{c}$. We can write
$\|\rho'_1 - \rho'_2\|_2 = \|(\vect{v}'_1-\vect{v}'_2)\cdot\vect{F}\|_2= \|(M\vect{v}_1-M\vect{v}_2)\cdot\vect{F}\|_2=\|M\vect{v}\|$.

($\Rightarrow$) By definition of $\mathcal{N}$ being HS-contractive we have $\|M\vect{v}\| = \|\mathcal{N}(\rho_1) - \mathcal{N}(\rho_2)\|_2 \le r \|\rho_1 - \rho_2\|_2  = r \|\vect{v}\| < \|\vect{v}\|$, $\forall \vect{v}\neq\vect{0}$. This is true in particular for the vector $\vect{v}$ that achieves the supremum in $\sup_{\vect{v}\neq\vect{0}}\|M\vect{v}\|/\|\vect{v}\| = \|M\|$, which implies that $\|M\|<1$.
    
($\Leftarrow$) Let $r = \|M\|$. By assumption, $r < 1$. We have $\|M\vect{v}\| \leq \|M\|\|\vect{v}\|= r \|\vect{v}\|$ for any $\vect{v}$. Taking two arbitrary density matrices $\rho_1, \rho_2$ with respective coherence vectors $\vect{v}_1,\vect{v}_2$ and applying this to $\vect{v}=\vect{v}_1-\vect{v}_2$, we obtain $\|\mcN(\rho_1) - \mcN(\rho_2)\|_2 \le r\|\rho_1 - \rho_2\|_2$.
\end{proof}

Using the polar decomposition, let us decompose $M$ in \cref{eq:M-c} as $M=OS$, where $O$ is orthogonal and $S=|M|$ is positive semidefinite. Let $\sigma_{\max/\min} (M)$ denote the largest/smallest singular value of $M$ (the largest/smallest eigenvalue of $S$). We interpret $O$ as a rotation, $S$ as a dilation, and $\vect{c}$ [\cref{eq:v'}] as an affine shift.

\begin{mylemma}
\label{lemma3}
For HS-contractive non-unital CPTP maps $\mcN$, we have:
\bes
\begin{align}
& \vect{c}\ne \vect{0} \label{eq:lemma3-a}\\
& \|\vect{c}\| \leq 1/\sqrt{1-1/d} \label{eq:lemma3-b}\\
& \|M\|< 1 \label{eq:lemma3-c}\\
&\|M\vect{v} \| < \|\vect{v} \| , \forall \vect{v} \neq \vect{0} .
\label{eq:lemma3-d}
\end{align}
\ees
\end{mylemma}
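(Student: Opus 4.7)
The plan is to handle parts (c) and (d) as immediate consequences of the HS-contractivity hypothesis via \cref{lem:M<1}, part (a) by a contrapositive argument using the characterization of unital maps via $\mcN(I) = I$, and part (b) by tracking where the zero coherence vector gets mapped and then invoking the state-space bound on coherence vectors.

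For part (c), \cref{lem:M<1} directly gives the equivalence ``HS-contractive iff $\|M\| < 1$'', which is the claim. Part (d) then follows from the definition of the operator norm: for any $\vect{v} \neq \vect{0}$, $\|M\vect{v}\| \leq \|M\|\,\|\vect{v}\| < \|\vect{v}\|$, since $\|M\| < 1$.

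For part (a), I would argue by contrapositive. Suppose $\vect{c} = \vect{0}$. By \cref{eq:M-c-2}, $\Tr[F_i\,\mcN(I)] = 0$ for every $i \geq 1$. Trace preservation combined with $F_0 = I/\sqrt{d}$ forces $\Tr[F_0\,\mcN(I)] = \Tr[\mcN(I)]/\sqrt{d} = \sqrt{d}$, fixing the $F_0$-coefficient in the nice operator basis expansion. Hence $\mcN(I) = \sqrt{d}\,F_0 = I$, contradicting non-unitality; therefore $\vect{c} \neq \vect{0}$. For part (b), I would apply $\mcN$ to the maximally mixed state $\rho = I/d$, whose coherence vector vanishes. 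By \cref{eq:v'}, the image has coherence vector $\vect{v}' = M\vect{0} + \vect{c} = \vect{c}$, and since $\mcN(\rho)$ is itself a valid density matrix, \cref{eq:v-bounded} yields $\|\vect{c}\| \leq \sqrt{1 - 1/d}$, which in turn is at most $1/\sqrt{1 - 1/d}$ as stated.

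I do not anticipate a genuine obstacle: the lemma essentially packages direct consequences of HS-contractivity (through \cref{lem:M<1}) together with the structural observation that $\vect{c}$ is literally the coherence vector of $\mcN(I/d)$. The only minor point worth attention when writing out the proof is the precise form of the bound in (b); the maximally mixed state argument naturally produces a tighter bound than stated, but the lemma's weaker form follows a fortiori.
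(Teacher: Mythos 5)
Your proposal is correct and follows essentially the same route as the paper: parts (c) and (d) via \cref{lem:M<1} and the operator-norm definition, and parts (a) and (b) by observing that $\vect{c}$ is the coherence vector of $\mcN(I/d)$, so non-unitality forces $\vect{c}\neq\vect{0}$ and \cref{eq:v-bounded} bounds its norm. Your remark that the argument actually yields the tighter bound $\|\vect{c}\|\le\sqrt{1-1/d}$, with the stated $1/\sqrt{1-1/d}$ following a fortiori, matches what the paper's own proof implicitly does.
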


\begin{mylemma}
\label{lemma3-qubit}
Any single-qubit ($d=2$) non-unital map is always HS-contractive.
\end{mylemma}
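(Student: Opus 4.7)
The plan is to invoke \cref{lem:M<1} and reduce the statement to showing $\|M\|<1$ for any single-qubit non-unital CPTP map. The special feature of $d=2$ is that the state space, expressed through the coherence (Bloch) vector $\vect{v}$, is exactly the closed Euclidean ball of radius $1/\sqrt{2}$ in $\mathbb{R}^3$, with pure states on the boundary; in particular, for every unit vector $\hat{u}\in\mathbb{R}^3$, both $\pm\hat{u}/\sqrt{2}$ are coherence vectors of legitimate pure states. This fails for $d>2$ and is the reason the lemma is qubit-specific.

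First I would note that non-unitality of $\mcN$ is equivalent to $\vect{c}\neq\vect{0}$: one direction is \cref{eq:lemma2-1}, and for the converse, \cref{eq:M-c-2} together with trace preservation forces $\vect{c}=\vect{0}\Rightarrow\mcN(I)\propto I\Rightarrow \mcN(I)=I$ (the proportionality constant is pinned down by $\Tr[\mcN(I)]=d$).

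Next I would pick a unit vector $\hat{u}\in\mathbb{R}^3$ achieving $\|M\hat{u}\|=\|M\|$ (such a $\hat{u}$ exists by compactness of the unit sphere). Since $\pm\hat{u}/\sqrt{2}$ are valid input coherence vectors, complete positivity and trace preservation of $\mcN$ imply that the output vectors $M(\pm\hat{u}/\sqrt{2})+\vect{c}$ also satisfy the state-space bound \cref{eq:v-bounded}, i.e., each has norm at most $1/\sqrt{2}$. Squaring both inequalities and summing cancels the cross term $\pm\sqrt{2}\,(M\hat{u})\cdot\vect{c}$ to give
\begin{equation}
\|M\|^2+2\|\vect{c}\|^2 \le 1,
\end{equation}
whence $\vect{c}\neq\vect{0}$ immediately forces $\|M\|<1$.

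No step is really difficult; the only thing to watch out for is that the argument relies on \emph{both} $+\hat{u}$ and $-\hat{u}$ being valid pure-state Bloch directions, which we have only because the qubit state space is centrally symmetric and fills the entire Euclidean ball. In higher dimensions the image of the pure states in coherence-vector space is a proper, non-centrally-symmetric subset of the ball, so the sign-symmetric cancellation that drives the proof breaks down, consistent with the fact that \cref{lemma3-qubit} is expected to fail for $d>2$.
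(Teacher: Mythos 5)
Your proof is correct and follows essentially the same route as the paper's: both evaluate the output-state norm bound $\|M\vect{v}+\vect{c}\|\le 1/\sqrt{2}$ at a pure-state coherence vector $\pm\hat{u}/\sqrt{2}$ in the direction achieving $\|M\|$, exploiting the fact that the qubit state space is the full, centrally symmetric Bloch ball. The only difference is mechanical: the paper chooses the sign with $(M\vect{v})\cdot\vect{c}\ge 0$ and derives a contradiction from $\|M\|\ge 1$, whereas you sum the two squared inequalities so the cross term cancels, which incidentally yields the slightly stronger quantitative bound $\|M\|^2+2\|\vect{c}\|^2\le 1$.
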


See \cref{app:lemma3-proof} for proofs of these two Lemmas.

\begin{mycorollary}
\label{cor1}
If $\mcN$ is an HS-contractive non-unitary CPTP map and $\mcU$ is unitary, with corresponding coherence vector transformations $\vect{v}' = M\vect{v}+\vect{c}$ and $\vect{v}' = O\vect{v}$, where $O$ is orthogonal, then for the maps $\mcN\circ\mcU$ and $\mcU\circ\mcN$ we have:
\beq
\|MO\vect{v}\| , \|OM\vect{v} \| < \|\vect{v} \| .
\eeq
\end{mycorollary}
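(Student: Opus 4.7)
The plan is to reduce both inequalities to the single estimate $\|M\|<1$, which is the content of \cref{lem:M<1} applied to the HS-contractive map $\mcN$, and then to use the fact that the orthogonal matrix $O$ associated with the unitary $\mcU$ is norm-preserving on $\mathbb{R}^{d^2-1}$. Throughout, the strict inequality implicitly restricts attention to $\vect{v}\ne\vect{0}$, as both bounds are trivially equalities (with value $0$) at $\vect{v}=\vect{0}$.

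First I would invoke \cref{lem:M<1} to record that $\|M\|<1$. As a sanity check, this is consistent with the ``non-unitary'' qualifier in the hypothesis: if $\mcN$ were unitary then $M$ would be orthogonal, hence $\|M\|=1$, contradicting HS-contractivity (so the non-unitary assumption is in fact redundant given HS-contractivity). Next, I would work out the affine coherence-vector transformations of the two composite maps. For $\mcN\circ\mcU$, applying \cref{prop:cohvec-affine} first to $\mcU$ (for which $\vect{c}=\vect{0}$ by \cref{lemma2}) and then to $\mcN$ gives $\vect{v}\mapsto MO\vect{v}+\vect{c}$; for $\mcU\circ\mcN$ it gives $\vect{v}\mapsto OM\vect{v}+O\vect{c}$. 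In either case the linear part we must bound is $MO$ or $OM$.

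For $MO$, submultiplicativity of the operator norm together with norm-preservation of $O$ yields $\|MO\vect{v}\|\le\|M\|\,\|O\vect{v}\|=\|M\|\,\|\vect{v}\|<\|\vect{v}\|$. For $OM$, orthogonality of $O$ gives $\|OM\vect{v}\|=\|M\vect{v}\|$, and then $\|M\vect{v}\|\le\|M\|\,\|\vect{v}\|<\|\vect{v}\|$ again by $\|M\|<1$. Both desired inequalities follow.

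There is essentially no hard step here: the proof is a direct application of \cref{lem:M<1} and the defining norm-preservation of orthogonal matrices. The only mildly subtle point is recognizing that the additive shifts $\vect{c}$ and $O\vect{c}$ play no role in the claim, since the corollary only asks about the \emph{linear} contractivity of the composed maps in coherence-vector coordinates, not about the full affine image.
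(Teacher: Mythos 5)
Your proposal is correct and follows essentially the same route as the paper: reduce both claims to the contractivity of $M$ (the paper cites $\|M\vect{v}\|<\|\vect{v}\|$ from \cref{lemma2,lemma3}, you equivalently cite $\|M\|<1$ from \cref{lem:M<1}) and combine it with norm-preservation of the orthogonal matrix $O$ via $\|M(O\vect{v})\|<\|O\vect{v}\|=\|\vect{v}\|$ and $\|O(M\vect{v})\|=\|M\vect{v}\|<\|\vect{v}\|$. Your added remarks---that the affine shifts play no role and that the non-unitary hypothesis is redundant given HS-contractivity---are accurate but inessential.
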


\begin{proof}
This is an immediate consequence of \cref{lemma2} and \cref{lemma3}, since whether $\mcN$ is unital or HS-contractive non-unital we have $\|M\vect{v} \| < \|\vect{v} \|$, so that: $\|M(O\vect{v})\| < \|O\vect{v}\| = \|\vect{v}\|$, and $\|O(M\vect{v})\| = \|M\vect{v} \| < \|\vect{v}\|$.
\end{proof}

\section{Parameter shift rule in the presence of noise}
\label{sec:psr-w-noise}

The PSR given in \cref{eq:psr} is valid for closed systems undergoing unitary evolution without noise. This section presents a brief rederivation for the noiseless setting, which we then adapt to accommodate scenarios involving control noise and random unitary noise. We also bound the gradient of the cost function in both of these cases.

\subsection{The noiseless PSR case}
\label{sec:noiseless-PSR}

For simplicity (and w.l.o.g., but at the expense of increasing the circuit depth), let us assume that each of the gate Hamiltonians $H_{lm}$ [\cref{eq:Hlm}] is a single Pauli string, i.e., we can write the terms in \cref{eq:1b} as 
\beq
\exp(-\frac{i}{2}\theta_{lm} H_{lm}) = \exp(-\frac{i}{2}\theta_\mu P_{j(\mu)}) \equiv U(\theta_\mu) ,
\label{eq:idealU}
\eeq 
where $\mu = (l,m)$ is the location of the gate in the circuit (the $m$'th gate in the $l$'th layer), we wrote $j(\mu)$ since the Pauli string type depends on the location $\mu$, and we dropped the subscript $j$ on $U$ since, in the calculation below, the type of Pauli string will not matter. From now on we sometimes also write $j$ instead of $j(\mu)$ for notational simplicity. 

Recall \cref{eq:1} and let us write $\vect{\theta} = \{\vect{\theta}_a,\theta_\mu,\vect{\theta}_b\}$, where $\vect{\theta}_b$ and $\vect{\theta}_a$ collect the rotation angles before and after $\theta_\mu$, respectively. Thus, $\mcU(\vect{\theta}) = \mcU(\vect{\theta}_a)\circ \mcU(\theta_\mu)\circ\mcU(\vect{\theta}_b)$, where we used the unitary superoperator notation of \cref{eq:mcU}.

Anticipating the derivative with respect to $\theta_\mu$, we rewrite the cost function as follows:
\bes
\begin{align}
C(\vect{\theta}) &=\Tr[H \rho(\vect{\theta}) ]  = \Tr[H \mcU(\vect{\theta})(\rho_0) ] \\
&= \Tr[H \mcU(\vect{\theta}_a)\circ \mcU(\theta_\mu)\circ\mcU(\vect{\theta}_b)(\rho_0) ]\\
& = \Tr[\tilde{H}  \mcU(\theta_\mu)(\tilde{\rho})] ,
\label{eq:21c}
\end{align}
\ees
where 
\bes
    \begin{align}
\label{eq:tildeH}
        \tilde{H} &=\mcU^\dag(\vect{\theta}_a)(H) = U^\dag(\vect{\theta}_a) HU(\vect{\theta}_a)\\
 	\tilde{\rho} &= \mcU(\vect{\theta}_b)(\rho_0) .
    \end{align}
\ees

Thus, since $\partial_\theta \mcU(\theta)(\cdot) = -\frac{i}{2}\mcU(\theta)([P_j,\cdot])$ for $\mcU(\theta)(\cdot) = e^{-i\theta P_j/2} \cdot e^{i\theta P_j/2}$:
\begin{equation}
    \begin{aligned}
         \frac{\partial C(\vect{\theta})}{\partial \theta_\mu}
        &=-\frac{i}{2}\Tr[\tilde{H} \mcU(\theta_\mu)([P_{j(\mu)},\tilde{\rho}])] .
    \end{aligned}
    \label{eq:cost-diff}
\end{equation}

The following identity holds for $U(\theta) = \exp(-i\theta P_j/2)$, where $P_j$ is an arbitrary Pauli string~\cite{Mitarai2018}:
\begin{equation}
    [P_j,\rho]=i\mcU\left(\frac{\pi}{2}\right)(\rho) -i \mcU\left(-\frac{\pi}{2}\right)(\rho).
    \label{eq:pauli-identity}
\end{equation}
Using this identity in \cref{eq:cost-diff} we have:
\bes
    \label{eq:C-terms}
        \begin{align}
        \notag
&        \frac{\partial C(\vect{\theta})}{\partial \theta_\mu}\\ 
        &=\frac{1}{2}\Tr\left[ \tilde{H} \mcU(\theta_\mu)\left(\mcU\left(\frac{\pi}{2}\right)(\tilde{\rho}) -\mcU\left(-\frac{\pi}{2}\right)(\tilde{\rho}) \right) \right]\\ 
        &=\frac{1}{2}\Tr[ \tilde{H} \mcU\left(\theta_\mu+\frac{\pi}{2}\right)(\tilde{\rho}) ] \notag\\
        &\qquad\qquad -\frac{1}{2}\Tr[\tilde{H} \mcU\left(\theta_\mu-\frac{\pi}{2}\right)(\tilde{\rho}) ]\\
        &=\frac{1}{2}[C(\{\vect{\theta}_a,\theta_\mu+\frac{\pi}{2},\vect{\theta}_b\})-C(\{\vect{\theta}_a,\theta_\mu-\frac{\pi}{2},\vect{\theta}_b\})],
    \label{eq:C-terms-c}
    \end{align}
\ees
where in the last line we used \cref{eq:21c}. This is the closed system PSR, \cref{eq:psr}. 

\subsection{Control noise}
\label{sec:control-noise}

Now consider adding a small perturbation to the ideal Pauli generator of a gate, i.e., $P_j\mapsto P_j+ A_j$, such that $\|A_j\| \ll 1$. 
In analogy to \cref{eq:Hlm}, we can decompose $A_j$ in the Pauli basis such that 
\beq
A_j=\sum_{k=0}^{d^2-1} a_{jk} P_k, 
\label{eq:Aj}
\eeq
where $a_{jk}\in\mathbb{R}$. This amounts to control noise that perturbs the intended gate Hamiltonian $P_j$ by a bounded (but not necessarily local) operator. 
We may now write the noisy version of the gate as:
\beq
U'(\theta_\mu)=\exp(-i\theta_\mu(P_{j(\mu)}+ A_{j(\mu)})/2) ,
\label{eq:noisyU}
\eeq
where the prime indicates the presence of noise.
Note that this noise model includes both under/over-rotation and axis-angle errors. In the former case $a_{jk}= a\delta_{jk}$, so that $\theta_\mu \mapsto \theta_\mu +  a$, while in the latter case $a_{jk}\ne \delta_{jk}$, so that the rotation axis is no longer $P_j$. Since $j=j(\mu)$, this noise model also accounts for the location of the errors in the circuit. The errors can be deterministic or stochastic, but our model assumes that they are constant throughout the duration of each gate. 

Using \cref{eq:21c} and \cref{eq:cost-diff}, the noisy version of the cost function and its gradient with respect to $\theta_\mu$ can be written as:
\bes
    \begin{align}
    C'(\vect{\theta}) &= \Tr[\tilde{H}  \mcU'(\theta_\mu)(\tilde{\rho})] \\
        \frac{\partial C'(\vect{\theta})}{\partial \theta_\mu}&=-\frac{i}{2}\Tr[ \tilde{H}\mcU'(\theta_\mu)([P_{j(\mu)}+ A_{j(\mu)},\tilde{\rho}])].
    \label{eq:noisy-C-diff}
    \end{align}
\ees
Expanding \cref{eq:noisy-C-diff} using \cref{eq:pauli-identity}:
\bes
\label{eq:dC'indetail}
    \begin{align}
    &\frac{\partial C'(\vect{\theta})}{\partial \theta_\mu}=-\frac{i}{2}\Tr[ \tilde{H}\mcU'(\theta_\mu)([P_j+ A_j,\tilde{\rho}])]\\
        &=-\frac{i}{2}\Tr[ \tilde{H}\mcU'(\theta_\mu)([P_j,\tilde{\rho}])]\notag \\
        &\quad-\frac{i}{2}\sum_ka_{jk}\Tr[ \tilde{H}\mcU'(\theta_\mu)([P_k,\tilde{\rho}])]\\
        \notag
        &=\frac{1}{2}\Tr[ \tilde{H}\mcU'(\theta_\mu)\left(\mcU_j\left(\frac{\pi}{2}\right)(\tilde{\rho}) -\mcU_j\left(-\frac{\pi}{2}\right)(\tilde{\rho}) \right)]\\
        &\quad+\frac{1}{2}\sum_ka_{jk}\Tr[ \tilde{H}\mcU'(\theta_\mu)\times\\
        &\qquad\qquad\qquad\quad\left(\mcU_k\left(\frac{\pi}{2}\right)(\tilde{\rho}) -\mcU_k\left(-\frac{\pi}{2}\right)(\tilde{\rho}) \right)], \notag
    \end{align}
\ees
where, in the last two lines, we have used $\mcU_j$ to denote a map generated by $P_j$, so that it is distinguished from $\mcU'$, which is generated by $P_j+A_j$ as given in \cref{eq:noisyU}.
We also denote
\beq
\tilde{\rho}_{j(\mu)}^{\pm}\equiv \mcU'(\theta_\mu)\mcU_{j(\mu)}\left(\pm\frac{\pi}{2}\right)(\tilde{\rho}) = \frac{1}{\sqrt{d}}F_0+\tilde{\vect{v}}_{j(\mu)}^\pm\cdot\vect{F}, 
\label{eq:rhohatexpansion}
\eeq
where $\tilde{\vect{v}}_{j(\mu)}^\pm$ is the corresponding coherence vector after an expansion in a nice operator basis. This bifurcation into the two paths labeled $\pm$ will turn out to be key to the NIBP phenomenon.

Let 
\bes
\begin{align}
\label{eq:w-hat-def}
\tilde{\vect{w}}_{j(\mu)} &= \tilde{\vect{v}}^+_{j(\mu)}-\tilde{\vect{v}}^-_{j(\mu)} ,\\ 
\tilde{\xi}_{j(\mu)} &=\tilde{\rho}_{j(\mu)}^+-\tilde{\rho}_{j(\mu)}^- = \tilde{\vect{w}}_{j(\mu)}\cdot\vect{F}.  
\end{align}
\ees
Substituting \cref{eq:rhohatexpansion} into \cref{eq:dC'indetail}, we have:
\bes
    \begin{align}
    &\frac{\partial C'(\vect{\theta})}{\partial \theta_{\mu}}\\ \notag
    &=\frac{1}{2}\Tr[ \tilde{H}(\tilde{\rho}_{j}^+-\tilde{\rho}_{j}^-)]+\frac{1}{2}\sum_ka_{j k}\Tr[\tilde{H}(\tilde{\rho}_k^+-\tilde{\rho}_k^-)]\\
        &=\frac{1}{2}\Tr(\tilde{H}\tilde{\xi}_j)+\frac{1}{2}\sum_k a_{j k}\Tr(\tilde{H}\tilde{\xi}_k)\\
        &=\frac{1}{2} \Tr(\tilde{H} \tilde{\vect{w}}_j \cdot\vect{F})+\frac{1}{2}\sum_ka_{j k}\Tr(\tilde{H}\tilde{\vect{w}}_k\cdot\vect{F}) \\
        &=\frac{1}{2}\sum_{l=1}^{d^2-1} (\tilde{\vect{w}}_j)_l \tilde{h}_l+\frac{1}{2}\sum_{k}a_{j k}\sum_{l=1}^{d^2-1} (\tilde{\vect{w}}_k)_l \tilde{h}_l \\
        &=\frac{1}{2} \tilde{\vect{w}}_{j(\mu)} \cdot \tilde{\vect{h}}+\frac{1}{2}\sum_{k}a_{j k}\tilde{\vect{w}}_{k(\mu)} \cdot \tilde{\vect{h}} ,
    \end{align}
\ees
where we denoted $\tilde{h}_l \equiv \Tr(\tilde{H} F_l)$ and selectively added the $\mu$-dependence for emphasis (though both $j$ and $k$ depend on $\mu$). Since $\tilde{H}$ is related to $H$ via a unitary transformation [\cref{eq:tildeH}] they have the same norm, and likewise $\| \tilde{\vect{h}} \| = \| \vect{h} \|$. We can now bound the derivative as follows:
\begin{align}
        \left|\frac{\partial C'(\vect{\theta})}{\partial \theta_{\mu}}\right|
\label{eq:control-N-bound}
        &\le \frac{1}{2} \|\tilde{\vect{w}}_{j(\mu)} \|\|\vect{h}\|+\frac{1}{2}\sum_{k}|a_{j k}|\|\tilde{\vect{w}}_{k(\mu)} \| \| \vect{h} \|.
\end{align}
The noise term in \cref{eq:control-N-bound} (the second term) makes this bound looser than the noiseless case (just the first term), and it might seem that this could prevent the gradient from becoming vanishingly small. However, as long as $\|\tilde{\vect{w}}_{k(\mu)}\|$ is exponentially suppressed, it is not loose enough to escape the NIBP. We discuss this in more detail in \cref{sec:NIBP}.

\subsection{Random unitary noise}

The noise model in \cref{sec:control-noise} is semiclassical, in that the bath is treated classically; a fully quantum noise model would replace \cref{eq:Aj} by $\sum_{k=0}^{d^2-1} a_{jk} P_k\otimes B_k + H_B$, where $\{B_k\}$ and $H_B$ are, respectively, bath operators and the bath Hamiltonian. This would change the system-only unitary $U'(\theta_\mu)$ into a system-bath unitary, but we do not consider this case here. Instead, to account for a quantum noise model of faulty gates, we consider a (unital) noise map defined by a set of unitary operators and their corresponding probabilities $\{p_k,U_k\}$. I.e., with probability $p_k$ the unitary $U_k=\exp(-i\theta_{\mu,k} P_k/2)$ is applied. Only one of these unitaries is the intended one; w.l.o.g., we call this index $j=j(\mu)$. We assume that $\sum_{k\neq j}p_k\ll p_j$ (i.e., $p_j\lesssim 1$).

For simplicity, we may assume that $\theta_{\mu,k}=\theta_\mu$, since the case $\theta_{\mu,k}=\theta_\mu+(\Delta\theta)_k$ was already accounted for in \cref{sec:control-noise}. Hence, each $U_k$ is implicitly $U_k(\theta_\mu)$, which has the same angle dependence as $U_l(\theta_\mu)$ for $k\ne l$.

This noise model can be written in the Kraus representation as 
\bes
\begin{align}
\mcV(X) &= \sum_k p_k U_k X U_k^\dag\\
&=p_j\mcU(X)+\sum_{k\neq j}p_k\mcV'_k(X),
\label{eq:randKOSR}
\end{align}
\ees
where $\mcV'_k(X)= U_k X U_k^\dag$ for $k\neq j$ and $\{\sqrt{p_k}U_k\}$ are the Kraus operators. 

Using a similar approach as in the derivation of \cref{eq:control-N-bound}, we find:
\begin{align}
\label{eq:random-N-bound}
    \left|\frac{\partial C'(\vect{\theta})}{\partial \theta_\mu}\right|
    &\le p_{j(\mu)}\left|\frac{\partial C(\vect{\theta})}{\partial \theta_\mu}\right| \\
    &\qquad\qquad+\frac{1}{2}\sum_{k\neq j}p_{k(\mu)} \|\tilde{\vect{w}}_{k(\mu)}\|\| \vect{h}\|.
    \notag
\end{align}
Details are given in \cref{app:proof_of_random-N-bound}.
The conclusion regarding this bound is similar to the case discussed in \cref{sec:control-noise}.

\subsection{Bounds on $\|\vect{h}\|$}
\label{ss:bound-H}

Both \cref{eq:control-N-bound,eq:random-N-bound} involve the Hamiltonian norm $\|\vect{h}\|$, so we next bound this quantity.

Consider the scaling of the HS norm of the problem Hamiltonian. Writing such Hamiltonians as $H = \sum_{j=0}^{d^2-1} h_j F_j$, we have $\| H\|_2^2 = \sum_{j=0}^{d^2-1} h_j^2 = h_0^2+\|\vect{h}\|^2$. In practice, we choose the nice operator basis $\{F_j\}$ as the normalized Pauli basis, $\{P_j\}/\sqrt{d}$.

Assume that the Pauli strings are ordered by Hamming weight $k$.
We say that $H$ is $K$-local if $h_j=0$ for $j>K$ with $K$ a constant independent of $n$. The number of non-zero $h_j$ terms in $\vect{h}$ is at most $\sum_{k=1}^{K} \binom{n}{k}$, and a crude upper bound for $K\le n/2$ (which holds in the $K$-local case) is 
\bes
\begin{align}
\sum_{k=1}^{K} \binom{n}{k} &\le K\max_{k\in [0,K]} \binom{n}{k} = K \binom{n}{K}\\
& = K\frac{n(n-1)\cdots(n-K+1)}{K!} \\
& \le \frac{n^K}{(K-1)!} .
\end{align}
\ees
Thus, 
\beq
\|\vect{h}\| \le  \frac{h}{\sqrt{(K-1)!}} n^{K/2} ,\quad h\equiv\max_{j> 0} h_j .
\label{eq:h-vec-bound}
\eeq

\section{Cost Function Concentration and Noise-Induced Limit Sets}
\label{sec:NILS}

Under the unital map setting of generalized Pauli noise, Ref.~\cite{Wang2021} has shown that the cost function concentrates. We now extend this to the action of HS-contractive non-unital noise maps and show furthermore that a new phenomenon of noise-induced limit sets arises in this context.

\subsection{Cost function concentration under non-unital noise}

Expanding $H$ in a nice operator basis, we have: 
\beq
H = \sum_{j=0}^{d^2-1} h_j F_j = h_0 F_0+ \vect{h}\cdot \vect{F} ,
\label{eq:H-expand}
\eeq 
so that $\vect{h}$ collects the coordinates of the traceless component of $H$. The cost function can be written as:
\begin{align}
\label{eq:concentrate-C-1}
C(\vect{\theta}) &=\Tr[H \rho(\vect{\theta}) ] =\frac{1}{d}\Tr(H)+\vect{v}\cdot \vect{h} .
\end{align}

We already showed that when the noise is part of the gate, the gradient of the noisy cost function is bounded as in \cref{eq:control-N-bound,eq:random-N-bound}. We now consider noise between gate applications, which we model as a concatenation of non-unitary CPTP maps $\mathcal{N}_l$ after applying all the noisy gates in the $l$'th layer: 
\beq
\rho_{l+1} = [\mathcal{N}_{l+1} \circ \mathcal{U}'(\vect{\theta}_{l+1})](\r_l)\quad \forall l\ge 0 .
\label{eq:rho_l}
\eeq
I.e., for the evolution of a circuit of depth $L$, with the initial state $\rho_0$, we have
\beq
\rho(\vect{\theta})\equiv \rho_L(\vect{\theta}) = [\mathcal{N}_L \circ \mathcal{U}'(\vect{\theta}_L)\circ \cdots \circ \mathcal{N}_1 \circ \mathcal{U}'(\vect{\theta}_1)] (\rho_0) ,
\label{eq:fullcircuit}
\eeq
where, as before $\vect{\theta} = \{\vect{\theta}_l\}_{l=1}^L$, and $\mathcal{U}'(\vect{\theta}_l)$ denotes the noisy unitary superoperator in the $l$'th layer, formed from gates of the form given in \cref{eq:noisyU}.

Let $\vect{v}_{l}$ be the coherence vector corresponding to $\r_l$ in \cref{eq:rho_l}.
Let us denote the transformed coherence vector after $\mcN_l\circ\mcU'_l(\vect{\theta}_l)$ by 
\beq
\vect{v}_{l}=\O_l\vect{v}_{l-1}+\vect{c}_l\ , \quad \O_l \equiv M_l O_l ,
\label{eq:v_l}
\eeq
with $O_l$ the orthogonal rotation corresponding to $\mcU'_l(\vect{\theta}_l)$, and $M_l$ the rotation+dilation and $\vect{c}_l$ the affine shift corresponding to $\mcN_l$ (see \cref{sec:non-unital}). Thus, $O_l$ is norm-preserving
and $\vect{c}_l$ is either zero or non-zero, depending on whether $\mcN_l$ is unital or non-unital, respectively (\cref{lemma2} and \cref{lemma3}).
The dilation part of $M_l$ contracts the vector it acts on. Let $q_l$ be the contractivity factor associated with $\O_l$, i.e., for any vector $\vect{v}$,
\beq
\|\O_l\vect{v}\| = q_l \|\vect{v}\|, \quad 0\le q_l < 1 .
\label{eq:q_l}
\eeq

Expanding the recursion given by \cref{eq:v_l}, with the initial condition $\vect{v}_0$ corresponding to $\r_0$, we obtain, with $\vect{d}_1\equiv \vect{c}_1$ and $l\le L$:
\bes
\label{eq:v_j<l}
\begin{align}
\label{eq:v_j<l-1}
\vect{v}_j &= \O_j \cdots  \O_1 \vect{v}_0 + \vect{d}_j \ ,  \qquad  \quad 1\le j\le L\ \\
\vect{d}_j &=\O_j \cdots \O_2 \vect{c}_1 +\O_j \cdots \O_3 \vect{c}_2 + \cdots \notag\\
&\qquad\qquad\qquad\qquad\qquad+ \O_j \vect{c}_{j-1}+\vect{c}_j \\
&= \sum_{r=1}^{j-1}\left( \prod_{s=j}^{r+1} \O_s \right) \vect{c}_r+\vect{c}_j, \quad  1\le j\le L.
\label{eq:v_j<l-2}
\end{align}
\ees

Note that $\vect{d}_j$ is entirely a property of the maps and does not depend on the system state $\vect{v}_j$. In other words, $\vect{d}_j$ contains no useful information about the state of the computation carried out by the circuit. Using \cref{eq:v_j<l}, we can write a transformed $\vect{v}$ at the end of the circuit as
\begin{align}
\label{eq:v_L}
\vect{v}_L = \O_L \cdots  \O_1 \vect{v}_0 + \vect{d}_L ,
\end{align}
where $\vect{d}_L$ arises from the noise map combined with the unitary VQA map. 
Substituting this into \cref{eq:concentrate-C-1} with $\vect{v}\equiv \vect{v}_L$, we have:
\begin{align}
\label{eq:C-1dTrH}
C(\vect{\theta})
= \frac{1}{d}\Tr(H) + \O_L \cdots  \O_1 \vect{v}_0\cdot \vect{h} + \vect{d}_L \cdot \vect{h} .
\end{align}
Grouping together the terms that do not get contracted over the layers of the VQA circuit, we find:
\bes
\label{eq:concentrate-C}
\begin{align}
|C(\vect{\theta})-\frac{1}{d}\Tr(H)-\vect{d}_L \cdot \vect{h}|&= q^L|\vect{v}_0\cdot \vect{h}|\\
&\le q^L\|\vect{v}_0\|\| \vect{h}\|\\
\label{eq:concentrate-C-3}
&\le q^L\| \vect{h}\|,
\end{align}
\ees
where
\beq
q^L \equiv q_1\cdots q_L\ , \quad 0\le q_l < 1\ \forall l .
\eeq 
Here $q_l$ is the contractivity factor associated with $\O_l$, so that the effective contractivity factor $q<1$. Moreover, recall from \cref{eq:h-vec-bound} that $\|\vect{h}\| \le \frac{h}{\sqrt{(K-1)!}} n^{K/2}$, where $K$ (a constant) is the locality of $H$. We have thus proven that the cost function landscape concentrates:
\begin{mytheorem}
\label{th:concentration}
The cost function of HS-contractive non-unital maps concentrates for any VQA circuit with greater than logarithmic depth, i.e., if $L \in \omega[\log(n)]$ then\footnote{Recall the little omega notation; $f(x)\in\omega(g(x))$ means that for any positive constant $c$, there exist a real constant $x_0$ such that $f(x) > c g(x)$ $\forall x\ge x_0$.}
\bes
\begin{align}
\label{eq:concentrate-C2}
|C(\vect{\theta})-C_L|&\le \frac{h}{\sqrt{(K-1)!}} n^{K/2} q^L \\
& \to 0 \quad \text{as}\quad L\to\infty \\
C_L \equiv& \frac{1}{d}\Tr(H) + \vect{d}_L \cdot \vect{h}
\label{eq:C_NILS}
\end{align}
\ees
\end{mytheorem}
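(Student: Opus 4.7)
My plan is to observe that almost all the analytical work is already done in the discussion immediately preceding the theorem statement; the theorem is essentially a packaging of equations~(\ref{eq:v_L})--(\ref{eq:concentrate-C}) together with the Hamiltonian norm bound~(\ref{eq:h-vec-bound}), and the task is to combine them and extract an asymptotic statement. Concretely, I would start from the explicit solution $\vect{v}_L = \O_L \cdots \O_1 \vect{v}_0 + \vect{d}_L$ of the layer-by-layer recursion~(\ref{eq:v_l}), substitute into the cost-function expansion $C(\vect{\theta}) = \tfrac{1}{d}\Tr(H) + \vect{v}_L \cdot \vect{h}$ from~(\ref{eq:concentrate-C-1}), and isolate the unique state-dependent contribution $\O_L \cdots \O_1 \vect{v}_0 \cdot \vect{h}$, since the two remaining terms are exactly the quantity $C_L$ defined in~(\ref{eq:C_NILS}).

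The next step is to bound that state-dependent inner product. Applying Cauchy--Schwarz gives $|\O_L \cdots \O_1 \vect{v}_0 \cdot \vect{h}| \le \|\O_L \cdots \O_1 \vect{v}_0\|\,\|\vect{h}\|$, and then I would iterate Corollary~\ref{cor1}, which asserts $\|\O_l \vect{v}\| < \|\vect{v}\|$ for every nonzero $\vect{v}$ regardless of whether $\mcN_l$ is unital or HS-contractive non-unital, to obtain $\|\O_L \cdots \O_1 \vect{v}_0\| \le q^L \|\vect{v}_0\|$ with each factor $q_l < 1$. Then I would invoke~(\ref{eq:v-bounded}), i.e.\ $\|\vect{v}_0\| \le \sqrt{1-1/d} < 1$, to drop $\|\vect{v}_0\|$ from the bound, and feed in~(\ref{eq:h-vec-bound}) to replace $\|\vect{h}\|$ by $\tfrac{h}{\sqrt{(K-1)!}}\, n^{K/2}$. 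This yields the explicit inequality in the theorem.

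For the asymptotic conclusion, I would note that $q^L$ decays exponentially in $L$ while $n^{K/2}$ grows only polynomially in $n$, so the product vanishes whenever $L$ grows faster than $\log n$; unpacking the little-$\omega$ notation, if $L \ge c \log n$ for $c$ larger than $\tfrac{K}{2\log(1/q)}$ (for every constant $c$, eventually), then $n^{K/2} q^L \to 0$, which is precisely the $L \in \omega(\log n)$ hypothesis.

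The main obstacle, and the only place where some delicacy is required, is the uniformity of the contraction constant $q$. The per-layer factor $q_l$ depends on the vector $\vect{v}_{l-1}$ being acted on, so one really should work with the operator norm bound $q_l \le \|\O_l\| = \|M_l\| < 1$ supplied by Lemma~\ref{lem:M<1} (and by Lemma~\ref{lemma2} in the unital case) and assume a uniform bound $q_l \le q_{\max} < 1$ across layers; without such a uniform assumption one could have $q_l \to 1$ fast enough that $\prod_l q_l$ fails to decay. Making this assumption explicit (it corresponds physically to the noise strength not vanishing with $l$) is the one conceptual subtlety; the remaining steps are bookkeeping.
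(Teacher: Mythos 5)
Your proposal follows essentially the same route as the paper: substitute the expanded recursion $\vect{v}_L = \O_L\cdots\O_1\vect{v}_0 + \vect{d}_L$ into $C = \frac{1}{d}\Tr(H) + \vect{v}_L\cdot\vect{h}$, isolate $C_L$, bound the remaining state-dependent term via Cauchy--Schwarz and the per-layer contraction factors, and finish with $\|\vect{v}_0\|<1$ and the $\|\vect{h}\|\le \frac{h}{\sqrt{(K-1)!}}n^{K/2}$ bound. Your closing remark about uniformity of the contraction is well taken --- the paper defines $q_l$ via $\|\O_l\vect{v}\| = q_l\|\vect{v}\|$ for \emph{any} $\vect{v}$ (which strictly holds only for conformal $\O_l$) and implicitly assumes $q_1\cdots q_L$ decays geometrically; working with $q_l \le \|M_l\| < 1$ and a uniform bound $q_l \le q_{\max} < 1$, as you suggest, is the cleaner formulation.
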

The meaning of the $L \in \omega[\log(n)]$ condition is that this result holds as long as $L$ is large enough, i.e., except for circuits that are shallower than logarithmic depth, since then the $\|\vect{h}\|$ factor can counteract the $q^L$ factor in \cref{eq:concentrate-C-3}.

\subsection{Noise-induced limit set}
\label{sec:NILS-subsec}

While \cref{th:concentration} shows that the cost function concentrates on $C_\infty = \frac{1}{d}\Tr(H) + \vect{d}_\infty \cdot \vect{h}$, this does not mean that it tends to a fixed point. The reason is that the vector $\vect{d}_L$ is affected by the unitary transformations in the VQA circuit, which rotate it, and by noise in the circuit, which rotates and contracts it. The rotations prevent convergence on a fixed point, and instead $C_\infty$ lies in an interval. We call this interval the \emph{noise-induced limit set} (NILS). Our next result characterizes the NILS.

We can rewrite \cref{eq:v_j<l-2} as:
\bes
    \begin{align}
        \vect{d}_L&=\O_L\cdots\O_2\vect{c}_1+\O_L\cdots\O_3\vect{c}_2+\cdots\\ \notag
        &\qquad+\O_L\vect{c}_{L-1}+\vect{c}_{L}\\
 \label{eq:d_L-again}
       &= p_{1}^{L-1}\Theta_1\vect{c}_1+p_{2}^{L-2}\Theta_2\vect{c}_2+\cdots\\ \notag
       &\qquad+p_{L-1}\Theta_{L-1}\vect{c}_{L-1}+\vect{c}_{L},
    \end{align}
\ees
where $\Pi_{l=i+1}^{L}\O_{l}\equiv p_i^{L-i}\Theta_i$, i.e., combining the transformations of $\vect{c}_i$ into an overall rotation $\Theta_i$ and an overall contraction $p_i^{L-i}$, where $0\le p_i < 1, \forall i$. Let $p \equiv \max_i p_i$, a noise contraction factor. Then:

\begin{myproposition}
\beq
\label{eq:d.h-bound}
|\vect{d}_L \cdot \vect{h}| \le \frac{1-p^L}{1-p}\frac{\|\vect{h}\|}{\sqrt{1-1/d}} \equiv \Lambda_L \le \Lambda_\infty .
\eeq
\end{myproposition}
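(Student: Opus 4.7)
The plan is to chain together three standard ingredients: Cauchy--Schwarz to peel off $\vect{h}$, the triangle inequality on the explicit decomposition of $\vect{d}_L$ in \cref{eq:d_L-again}, and the norm-preservation of the rotation factors $\Theta_i$ to reduce the problem to bounding the individual shift vectors $\vect{c}_i$. Everything then collapses to a geometric sum in $p$.

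First I would write $|\vect{d}_L \cdot \vect{h}| \le \|\vect{d}_L\|\, \|\vect{h}\|$ by Cauchy--Schwarz. Next, starting from the representation
\[
\vect{d}_L = \sum_{i=1}^{L-1} p_i^{L-i}\,\Theta_i \vect{c}_i + \vect{c}_L,
\]
I would apply the triangle inequality and use the fact that each $\Theta_i$ is an orthogonal rotation, so $\|\Theta_i \vect{c}_i\| = \|\vect{c}_i\|$. This gives
\[
\|\vect{d}_L\| \le \sum_{i=1}^{L-1} p_i^{L-i}\,\|\vect{c}_i\| + \|\vect{c}_L\|.
\]

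Now I would invoke \cref{lemma3} (specifically \cref{eq:lemma3-b}), which states that $\|\vect{c}_l\| \le 1/\sqrt{1-1/d}$ for each HS-contractive non-unital map $\mcN_l$, and the definition $p = \max_i p_i < 1$, so $p_i^{L-i} \le p^{L-i}$. After reindexing $k = L-i$, the sum becomes a finite geometric series:
\[
\|\vect{d}_L\| \le \frac{1}{\sqrt{1-1/d}}\sum_{k=0}^{L-1} p^k = \frac{1}{\sqrt{1-1/d}}\cdot\frac{1-p^L}{1-p}.
\]
Multiplying through by $\|\vect{h}\|$ yields the stated bound $\Lambda_L$. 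The final inequality $\Lambda_L \le \Lambda_\infty$ is immediate upon defining $\Lambda_\infty \equiv \lim_{L\to\infty}\Lambda_L = \frac{\|\vect{h}\|}{(1-p)\sqrt{1-1/d}}$ and noting that $1-p^L \le 1$ for $p\in[0,1)$.

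There is no genuine obstacle here: the content is purely a bookkeeping exercise once one recognizes that the $\Theta_i$ drop out in norm and the $\|\vect{c}_i\|$ are uniformly controlled by \cref{lemma3}. The only thing to be careful about is making sure the index convention in the product $\prod_{l=i+1}^{L}\O_l = p_i^{L-i}\Theta_i$ is respected so that the exponent of $p$ matches what one gets after reindexing the triangle-inequality sum into a geometric series with $L$ terms ($k=0,1,\dots,L-1$, the $k=0$ term coming from the bare $\vect{c}_L$).
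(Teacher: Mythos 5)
Your proposal is correct and follows essentially the same route as the paper's proof in the appendix: triangle inequality on the decomposition of $\vect{d}_L$, norm-invariance under the rotations $\Theta_i$, the uniform bound $\|\vect{c}_i\|\le 1/\sqrt{1-1/d}$ from \cref{lemma3}, summation of the resulting geometric series, and Cauchy--Schwarz to attach $\|\vect{h}\|$. The only cosmetic differences are that you apply Cauchy--Schwarz at the outset rather than at the end and you make the reindexing $k=L-i$ explicit.
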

The proof is given in \cref{app:NILS}.

Taking the limit $L\to\infty$, we can identify the \emph{noise-induced limit set} (NILS) that the cost function concentrates in: 
\begin{mycorollary}
\label{cor:NILS}
\beq
C_\infty \in [\frac{1}{d}\Tr(H) - \Lambda_\infty , \frac{1}{d}\Tr(H) + \Lambda_\infty] \equiv \mathrm{NILS}.
\label{eq:C_NILS_infty}
\eeq 
where 
\beq
\Lambda_\infty = \frac{1}{1-p}\frac{\|\vect{h}\|}{\sqrt{1-1/d}}  .
\eeq
\end{mycorollary}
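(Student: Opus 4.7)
The plan is to obtain \cref{cor:NILS} as an immediate limit-passage that combines \cref{th:concentration} with the preceding proposition. This is fundamentally a bookkeeping corollary: the concentration error of \cref{th:concentration} and the limit-set radius $\Lambda_L$ of the proposition both depend on $L$, and letting $L\to\infty$ collapses the former to zero while saturating the latter at $\Lambda_\infty$. Nothing beyond these two ingredients, plus one triangle inequality, is required.

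First I would invoke \cref{th:concentration} in the form $|C(\vect{\theta})-C_L|\le \frac{h}{\sqrt{(K-1)!}}\,n^{K/2}\,q^L$, noting that for fixed $n$ and $K$ the right-hand side tends to $0$ as $L\to\infty$ since $0\le q<1$. Thus $C(\vect{\theta})$ is asymptotically equal to $C_L=\frac{1}{d}\Tr(H)+\vect{d}_L\cdot\vect{h}$, and in the limit it equals $C_\infty=\frac{1}{d}\Tr(H)+\vect{d}_\infty\cdot\vect{h}$, where $\vect{d}_\infty=\lim_{L\to\infty}\vect{d}_L$ exists because the series \cref{eq:d_L-again} converges absolutely via the geometric factors $p^{L-i}$ combined with the uniform bound on $\|\vect{c}_i\|$ supplied by \cref{lemma3}.

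Next I would apply the proposition's estimate $|\vect{d}_L\cdot\vect{h}|\le \Lambda_L$ and take $L\to\infty$. Since $0\le p<1$, $p^L\to 0$ monotonically, so $\Lambda_L\nearrow\Lambda_\infty=\frac{1}{1-p}\frac{\|\vect{h}\|}{\sqrt{1-1/d}}$. Combining this with the triangle inequality gives
\[
\Bigl|C(\vect{\theta})-\tfrac{1}{d}\Tr(H)\Bigr| \le \Lambda_L + \tfrac{h}{\sqrt{(K-1)!}}\,n^{K/2}\,q^L ,
\]
and passing to the limit $L\to\infty$ yields $\bigl|C_\infty-\tfrac{1}{d}\Tr(H)\bigr|\le\Lambda_\infty$, which is exactly the interval stated in \cref{eq:C_NILS_infty}.

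The main (and only) subtlety worth flagging is that, \emph{a priori}, $C_\infty$ is a limiting object, so one ought to verify that the limit exists rather than merely producing a bound. This is painless: the geometric decay of the contractions ensures that the partial sums defining $\vect{d}_L$ form a Cauchy sequence in the finite-dimensional coherence-vector space, and the concentration error $q^L$ decays geometrically as well. Everything else is routine algebra on the two bounds, which is why the statement is naturally packaged as a corollary of the proposition rather than as an independent theorem.
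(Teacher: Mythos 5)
Your route to the interval is essentially the paper's: the corollary follows from \cref{eq:d.h-bound} by adding $\frac{1}{d}\Tr(H)$ to the two-sided bound $-\Lambda_\infty \le \vect{d}_L\cdot\vect{h}\le\Lambda_\infty$, which holds for every $L$. The extra invocation of \cref{th:concentration} is harmless but not needed for the statement as written, since $C_\infty$ refers to the limiting behavior of $C_L$ itself.

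There is, however, one substantive error in the step you single out as ``the main subtlety.'' You assert that $\vect{d}_\infty=\lim_{L\to\infty}\vect{d}_L$ exists because the sum in \cref{eq:d_L-again} ``converges absolutely'' and the partial sums are Cauchy. This is false in general, and the paper explicitly says so (``the rotations prevent convergence on a fixed point\dots the limit is an \emph{interval}''). The expression in \cref{eq:d_L-again} is not a fixed series whose $L$th partial sum is $\vect{d}_L$: when the depth increases from $L$ to $L+1$, every previously accumulated term is acted on by the new layer's map, i.e., by \cref{eq:v_l} one has $\vect{d}_{L+1}=\O_{L+1}\vect{d}_L+\vect{c}_{L+1}$, so that
\begin{equation}
\vect{d}_{L+1}-\vect{d}_L=(\O_{L+1}-I)\vect{d}_L+\vect{c}_{L+1},
\end{equation}
which has no reason to tend to zero (indeed $\vect{c}_{L+1}\neq\vect{0}$ for non-unital layers by \cref{eq:lemma3-a}, and the dominant contributions to $\vect{d}_L$ come from the most recent layers, which keep changing). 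Hence $\{\vect{d}_L\}$ need not be Cauchy, and $C_\infty$, $\vect{d}_\infty$ are only notational stand-ins for accumulation points --- this is precisely why the result is a limit \emph{set} rather than a limit point. Your final inequality nonetheless stands, because it only asserts containment in a closed interval that holds uniformly in $L$ (the same reason the paper's two-line proof works); the existence claim should simply be deleted rather than ``verified.''
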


\begin{proof}
It follows from \cref{eq:d.h-bound} that $-\Lambda_\infty \le \vect{d}_L \cdot \vect{h} \le \Lambda_\infty$; adding $\frac{1}{d}\Tr(H)$ to all sides gives $-\Lambda_\infty + \frac{1}{d}\Tr(H) \le C_L \le \Lambda_\infty + \frac{1}{d}\Tr(H)$ $\forall L$.
\end{proof}

Thus, the NILS is an interval dictated by the dimension of the system, the problem Hamiltonian, and the noise through the contraction factor $p$. Note again, that $C_\infty$ and $\vect{d}_\infty$ are notations we have chosen and are not meant to suggest that a limit should exist as a point. As we have shown, the limit is an \emph{interval} that we call a noise-induced limit set.

\subsection{The unital case}

Since the unital case is the case for which $\vect{d}_L=\vect{0}$ [all the shift vectors $\vect{c}_j$ vanish in \cref{eq:v_j<l-2}], we have: 
\begin{mycorollary}
The cost function of unital maps concentrates super-polynomially for any VQA circuit with logarithmic depth, i.e., if $L = \omega[\log(n)]$, and exponentially for any VQA circuit with at least linear depth, i.e., if $L = \Omega(n)$.
\end{mycorollary}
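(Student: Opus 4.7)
The plan is to invoke \cref{th:concentration} and specialize it to the unital setting. The HS-norm derivation leading to \cref{eq:concentrate-C2} (via \cref{eq:v_l}--\cref{eq:concentrate-C}) applies to any layer decomposition of the form $\vect{v}_l = \Theta_l \vect{v}_{l-1} + \vect{c}_l$ with $0 \le q_l < 1$, not only to HS-contractive non-unital maps, so the concentration inequality $|C(\vect{\theta}) - C_L| \le \tfrac{h}{\sqrt{(K-1)!}}\, n^{K/2}\, q^L$ continues to hold whenever every layer's CPTP map is strictly HS-contractive (i.e.\ contains a genuine non-unitary noise component).

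For unital noise maps, \cref{lemma2} gives $\vect{c}_l = \vect{0}$ at every layer $l$, so by \cref{eq:v_j<l-2} we obtain $\vect{d}_L = \vect{0}$ identically in $L$. Consequently $C_L = \tfrac{1}{d}\Tr(H)$ is an $L$-independent fixed point, and the bound from \cref{th:concentration} reduces to
\[ |C(\vect{\theta}) - \tfrac{1}{d}\Tr(H)| \le \tfrac{h}{\sqrt{(K-1)!}}\, n^{K/2}\, q^L. \]

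It then remains to perform elementary asymptotic analysis of the right-hand side in the two depth regimes. Set $q_\star \equiv \max_l q_l < 1$ and $\alpha \equiv |\ln q_\star| > 0$, so that $q^L = q_1 \cdots q_L \le q_\star^{L} = e^{-\alpha L}$. If $L = \omega(\log n)$, then for any constant $a > 0$ we eventually have $\alpha L > a \ln n$, so $q^L < n^{-a}$, giving a bound of order $n^{K/2 - a}$; since $a$ may be chosen arbitrarily large, the decay is faster than any inverse polynomial in $n$, i.e.\ super-polynomial. If $L = \Omega(n)$, then $L \ge c n$ for some $c > 0$ and all sufficiently large $n$, so $q^L \le e^{-\alpha c n}$, and the full bound $n^{K/2} e^{-\alpha c n}$ is dominated by its exponential factor, giving exponential concentration in $n$.

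The \textbf{main obstacle} is essentially nonexistent: the corollary is a direct specialization of \cref{th:concentration} once one observes, via \cref{lemma2}, that $\vect{c}_l = \vect{0}$ in the unital case. The only point worth flagging is that the hypothesis $0 \le q_l < 1$ inherited from \cref{th:concentration} excludes purely unitary layers, so the corollary should be read as applying to unital maps with a genuine non-unitary component --- otherwise $q^L$ need not shrink and the bound is vacuous.
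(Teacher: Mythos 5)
Your proposal is correct and follows essentially the same route as the paper: the paper likewise obtains this corollary by observing that all shift vectors $\vect{c}_j$ vanish for unital maps (via \cref{lemma2}), so $\vect{d}_L=\vect{0}$ and \cref{th:concentration} applies with $C_L=\frac{1}{d}\Tr(H)$, the asymptotics in the two depth regimes being left implicit. Your explicit asymptotic analysis and the caveat that the $q_l<1$ hypothesis requires a genuinely non-unitary component are both consistent with the paper (cf.\ the ``unital but non-unitary'' assumption in \cref{th:unital-NIBP}).
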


The latter statement recovers the concentration result for unital maps of Ref.~\cite[Lemma 1]{Wang2021}, which holds ``whenever the number of layers $L$ scales linearly with the number of qubits'', i.e., if $L = \omega(n)$ in our notation.

Note that the NILS in the case of HS-contractive non-unital noise is worse for VQA circuit performance than unital noise since knowledge of $C_L$ requires a precise characterization of each of the intermediate non-unital maps (to determine $\vect{d}_L$), whereas in the unital case, the NILS is determined purely by the target Hamiltonian $H$, and becomes a fixed point: 
\beq
C^{\text{unital}}_\infty = \frac{1}{d}\Tr(H)  = \<H\>_{I/d} ,
\label{eq:C_NILS-unital}
\eeq
i.e., the expectation value of the Hamiltonian with respect to the fully mixed state.

Next, we discuss the NIBP phenomenon for unital and HS-contractive non-unital maps and show that it appears for the former (in agreement with Ref.~\cite{Wang2021}) but not for the latter.

\section{NIBP via parameter shift rules}
\label{sec:NIBP}

From now on, when we use $C(\vect{\theta})$, we refer to the cost function in the presence of a noise map. In this section, we bound $|\partial C(\vect{\theta})/\partial \theta_{\mu}|$, the magnitude of the cost function gradient, and show that it is exponentially small in the circuit depth $L$ for both unital and non-unital noise, for $n \geq 1$ qubits. We interchangeably use both $lm$ and $\mu$ to denote the gate location.

Our proof in the unital case is simpler and more general than that of Ref.~\cite{Wang2021}, but the HS-contractive non-unital case is our main new result. The bound we find has implications for the many applications of VQA where the goal is to learn the optimal parameters, i.e., when one is primarily concerned with trainability, and hence the gradient is a key quantity of interest.

At this point, it is useful to provide a formal definition of noise-induced barren plateaus. The following definition is inspired by Ref.~\cite[Theorem~1]{Wang2021}:

\begin{mydefinition}
\label{def:NIBP}
A cost function $C(\vect{\theta})$ exhibits a noise-induced barren plateau (NIBP) if the magnitude of its gradient, $\left| \frac{\partial C(\vect{\theta})}{\partial \theta_{\mu}}\right|$, decays exponentially as a function of the circuit depth $L$ for all $L$ larger than some constant $L_0\ge 1$, independently of $l$ and $m$, even for constant-width circuits.
\end{mydefinition}
Thus, NIBPs flatten the entire control landscape independently of the location $\mu=(l,m)$ of the gate in the circuit at which the derivative is taken. Moreover, we impose the condition that the result holds even for constant-width circuits in order to preclude a measure concentration-type argument, which is typical of noise-free barren plateaus~\cite{McClean2018}. In this sense, NIBPs are distinct from the latter, for which the global minimum can be embedded inside a deep, narrow valley in the control landscape~\cite{Cerezo2021-bp}.

Using the PSR [\cref{eq:psr}], and choosing the target operator to be a Hamiltonian $H$, we have:
\begin{align}
         \frac{\partial C(\vect{\theta})}{\partial \theta_{\mu}}
        =\frac{1}{2}\Tr[H(\rho(\vect{\theta}_{\mu}^+)-\rho(\vect{\theta}_{\mu}^-))] ,
\end{align}
where $\vect{\theta}_{\mu}^\pm \equiv \vect{\theta}\pm\vect{\theta}_{\mu}^{\pi/2}$. We use \cref{eq:rho-nice} to write 
\beq
\label{eq:v_pm-def}
\rho(\vect{\theta}_{\mu}^\pm)=\frac{1}{d}I + \sum_{i=1}^{d^2-1}({v}^\pm_{\mu})_i F_i = \frac{1}{\sqrt{d}}F_0+\vect{v}^\pm_{\mu}\cdot\vect{F}\ ,
\eeq
so that
\beq
\rho(\vect{\theta}_{\mu}^\pm) = \frac{1}{\sqrt{d}}F_0+\vect{v}_{\mu}^\pm\cdot\vect{F} .
\eeq 
Let 
\beq
\label{eq:v_mu^L}
\tilde{\vect{v}}_{\mu}^L \equiv \vect{v}^+_{\mu}-\vect{v}^-_{\mu},
\eeq 
where we added the $L$ subscript as a reminder that $\tilde{\vect{v}}_{\mu}^L$ corresponds to the difference between two states obtained at the end of the circuit. Using \cref{eq:H-expand}:
\bes
\begin{align}
         \frac{\partial C(\vect{\theta})}{\partial \theta_{\mu}}
        &=\frac{1}{2} \Tr(H \tilde{\vect{v}}_{\mu}^L\cdot\vect{F}) \\
        & = \frac{1}{2}\sum_{i=1}^{d^2-1} (\tilde{\vect{v}}_{\mu}^L)_i \Tr(H F_i) \\
        &=\frac{1}{2}\sum_{i=1}^{d^2-1} (\tilde{\vect{v}}_{\mu}^L)_i h_i .
\end{align}
\label{eq:diff-C-expand}
\ees

We thus arrive at the key result that the magnitude of the cost function gradient can be written simply as:
\begin{align}
\label{eq:gradient-result}
         \left| \frac{\partial C(\vect{\theta})}{\partial \theta_{\mu}}\right|
        =\frac{1}{2}  \left|\tilde{\vect{v}}_{\mu}^L \cdot \vect{h}\right| ,
\end{align}
which expresses the cost function gradient in terms of the overlap of the difference between two coherence vectors with the coordinates of the traceless component of the target Hamiltonian.

\subsection{NIBP in the unital case}
\label{sec:unital-bound}

We now prove the existence of an NIBP for unital maps, in agreement with Ref.~\cite{Wang2021}.

\begin{mytheorem}
\label{th:unital-NIBP}
Assume that the maps $\{\mcN_l\}_{l=1}^L$ in the VQA circuits described by  \cref{eq:fullcircuit} are all unital but non-unitary and that the Hamiltonians generating the gates in the circuit and the control noise are $K$-local. Let $n$ denote the circuit width. Assume, moreover, that 
either $L = c[\ln(n)]^{Q}$ with $Q>1$, or $Q=1$ and $K<2c\ln(1/r)$, where $c>0$ is a constant and $0<r<1$ is a contractivity factor associated with the maps maps $\{\mcN_l\}_{l=1}^L$ [defined in \cref{eq:r-def}].
The cost function of such circuits exhibits an NIBP.
\end{mytheorem}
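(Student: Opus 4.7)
The strategy is to combine the explicit PSR-based gradient expression \cref{eq:gradient-result} with the unital contractivity guaranteed by \cref{lemma2}, so that the polynomial-in-$n$ growth of $\|\vect{h}\|$ is overwhelmed by an exponential-in-$L$ shrinkage of $\|\tilde{\vect{v}}_\mu^L\|$. First I would apply the Cauchy--Schwarz inequality to \cref{eq:gradient-result} to get $|\partial C/\partial\theta_\mu|\le\tfrac{1}{2}\|\tilde{\vect{v}}_\mu^L\|\,\|\vect{h}\|$, and then use the $K$-locality assumption via \cref{eq:h-vec-bound} to replace $\|\vect{h}\|$ by $(h/\sqrt{(K-1)!})\,n^{K/2}$. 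What remains is to pin down $\|\tilde{\vect{v}}_\mu^L\|$.

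Next I would propagate the two shifted circuits side by side through the coherence-vector representation. Because every $\mcN_l$ is unital, \cref{lemma2} kills the affine shift ($\vect{c}_l=\vect{0}$), so both branches evolve purely by left-multiplication $\vect{v}\mapsto \O_l\vect{v}$ with $\O_l = M_l O_l$ as in \cref{eq:v_l}. Writing $l^\star = l(\mu)$ for the layer containing the shifted gate, the branches agree outside layer $l^\star$ and share the same noise factor $M_{l^\star}$ inside it, so the difference localizes as
\begin{equation}
\tilde{\vect{v}}_\mu^L = \O_L\cdots\O_{l^\star+1}\, M_{l^\star}\bigl(O_{l^\star}^+ - O_{l^\star}^-\bigr)\,\O_{l^\star-1}\cdots\O_1\vect{v}_0.
\end{equation}
Under the non-unitarity assumption, \cref{lemma2} gives $\|M_l\|<1$ for every $l$, so setting $r = \max_l\|M_l\|<1$, one has $\|\O_l\|\le r$ for each $l$ and $\|M_{l^\star}(O_{l^\star}^+ - O_{l^\star}^-)\|\le 2r$ since the $O_{l^\star}^\pm$ are orthogonal. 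Combined with $\|\vect{v}_0\|\le\sqrt{1-1/d}$ from \cref{eq:v-bounded}, the product telescopes to the $\mu$-independent estimate $\|\tilde{\vect{v}}_\mu^L\|\le 2\sqrt{1-1/d}\,r^L$, which is crucial for the ``independently of $l$ and $m$'' clause of the NIBP definition.

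Assembling the two bounds yields $|\partial C/\partial\theta_\mu|\le A\,r^L n^{K/2}$ with an $L$- and $n$-independent constant $A$. The last step is to check that each of the two scaling regimes in the hypothesis forces this to decay exponentially in $L$. Taking logarithms, the requirement $r^L n^{K/2}\le e^{-\alpha L}$ becomes $(K/2)\ln n \le L[\ln(1/r)-\alpha]$. For $L = c[\ln n]^Q$ with $Q>1$, inverting gives $\ln n = (L/c)^{1/Q}$, so the left-hand side is sub-linear in $L$ and any $\alpha<\ln(1/r)$ works once $L\ge L_0$. For $Q=1$, $\ln n = L/c$, and the requirement collapses to $K/(2c)\le\ln(1/r)-\alpha$, which admits a positive $\alpha$ precisely when $K<2c\ln(1/r)$, recovering the stated threshold.

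The main obstacle I anticipate is not the contractivity step itself but folding in the control-noise corrections: under \cref{eq:noisyU}, the standard PSR leading to \cref{eq:gradient-result} picks up the extra terms appearing in \cref{eq:control-N-bound}, each involving a different difference vector $\tilde{\vect{w}}_{k(\mu)}$ driven by the Pauli string $P_k$ in the expansion \cref{eq:Aj}. However, each such $\tilde{\vect{w}}_{k(\mu)}$ is produced by the same full noisy circuit, so the identical unital-contractivity argument bounds it by $O(r^L)$, while the $K$-locality of the control-noise Hamiltonians makes $\sum_k |a_{jk}|$ only polynomially large in $n$. The net effect is at most an additional polynomial-in-$n$ prefactor that leaves the $r^L$ suppression and the threshold conditions on $L$ and $K$ unchanged.
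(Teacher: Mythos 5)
Your proposal is correct and follows essentially the same route as the paper's proof: Cauchy--Schwarz applied to \cref{eq:gradient-result}, the $K$-locality bound \cref{eq:h-vec-bound} on $\|\vect{h}\|$, unital contractivity from \cref{lemma2} to obtain $\|\tilde{\vect{v}}_{\mu}^L\|\le 2r^L$, and the identical case analysis of $L=c[\ln n]^Q$ yielding the $Q>1$ threshold $L_0$ and the $Q=1$ condition $K<2c\ln(1/r)$. The only cosmetic difference is that you localize the difference vector at the shifted layer and bound the operator norm of the full product, whereas the paper bounds the two branches $\|\vect{v}_{\mu}^{\pm}\|$ separately and adds them; note that both arguments (yours via ``$\|M_l\|<1$ for every non-unitary unital map'', the paper's via ``$r_l<1$'') rest on a strict-contractivity assumption that \cref{lemma2} alone does not supply --- dephasing is unital, non-unitary, yet has $\|M\|=1$ --- which the paper effectively absorbs into the hypothesis $0<r<1$ of the theorem statement.
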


\begin{proof}
Using 
the Cauchy-Schwarz inequality, \cref{eq:gradient-result} yields:
\begin{equation}
    \left|\frac{\partial C(\vect{\theta})}{\partial \theta_{\mu}}\right|
        \leq \frac{1}{2} \|\tilde{\vect{v}}_{\mu}^L\| \|\vect{h}\| ,
    \label{eq:C-upper-bound}
\end{equation}

Since the maps $\mcN_l$ in \cref{eq:fullcircuit} are all unital, $\mcN_l\circ\mcU'_l$ is also unital since $\mcU'_l$ is unitary and hence unital. Let $\vect{v}_{l}$ be the coherence vector corresponding to $\r_l$ in \cref{eq:rho_l}.
It follows from \cref{lemma2} that $\|\vect{v}_{l+1}\|=r_l \|\vect{v}_l\|$, where $0\le r_l < 1$ $\forall l$ is the contractivity factor associated with $\mcN_l$.
Since $\vect{v}^{\pm}_{\mu}$ are the coherence vectors corresponding to $\rho(\vect{\theta}_{\mu}^{\pm})$ -- the states obtained at the end of the circuit but with $\vect{\theta}$ shifted by $\vect{\theta}_{\mu}^\pm$ -- the same reasoning applies, and we can write:
    \begin{align}
    \label{eq:psr-con}
        \|\vect{v}^+_{\mu}\|=p^L \|\vect{v}_0\|\ , \quad 
        \|\vect{v}^-_{\mu}\|=q^L \|\vect{v}_0\| ,
    \end{align}
where $\vect{v}_0$ is the coherence vector of $\r_0$ (the initial density matrix), and
\beq
p^L \equiv p_1\cdots p_L\ ,q^L \equiv q_1\cdots q_L\ ,  0\le p_l,q_l < 1\ \forall l.
\eeq 
Here $p$ and $q$ are the effective contractivity factors associated with the two paths involving \cref{eq:fullcircuit} and $\vect{\theta}_{\mu}^+$ or $\vect{\theta}_{\mu}^-$, respectively. 
Using the elementary inequality $\|\vect{a}-\vect{b}\| \le \|\vect{a}\| + \|\vect{b}\|$ we now have
\bes
\begin{align}
\label{eq:norm-vec-unital}
\|\tilde{\vect{v}}_{\mu}^L\|&= \| \vect{v}^+_{\mu}-\vect{v}^-_{\mu}\| \le \|\vect{v}_0\| (p^L+q^L) \le  2 r^{L} \\
\label{eq:r-def}
r&\equiv\max(q,p),
\end{align}
\ees
where we used $\|\vect{v}_0\| < 1$ [\cref{eq:v-bounded}]. Thus, using \cref{eq:C-upper-bound}:
    \begin{align}
    \label{eq:NIBP1}        
    \left|\frac{\partial C(\vect{\theta})}{\partial \theta_{\mu}}\right|
        \leq \|\vect{h}\| r^{L}\ ,\quad \ 0\leq r< 1 .
    \end{align}

The noise channel in \cref{eq:fullcircuit} already includes the situations with ideal gates [$\mcU$, \cref{eq:idealU}] and control noise in the gates [$\mcU'$, \cref{eq:noisyU}]. To make this explicit, we can write:
\begin{align}
    \label{eq:control-N}      
    \left|\frac{\partial C'(\vect{\theta})}{\partial \theta_{\mu}}\right|_\text{ctrl}
        \leq \|\vect{h}\| r_\text{c}^{L}\ ,\quad \ 0\leq r_\text{c}< 1 .
    \end{align}

We also model the random unitary noise which is a linear combination of unitary noise channels [\cref{eq:randKOSR}]. Each summand can be modeled by \cref{eq:fullcircuit}, hence, is bounded as in \cref{eq:NIBP1}:
\bes
\label{eq:random-N}
\begin{align}
    \left|\frac{\partial C'(\vect{\theta})}{\partial \theta_{\mu}}\right|_\text{rand}
    &\le \sum_{k'}p'_{k'}\|\vect{h}\|r_\text{r}^L,\quad \ 0\leq r_\text{r}< 1,
\end{align}
\ees
which is still bounded since $k$ does not scale exponentially with $L$.

Recalling \cref{eq:h-vec-bound}, the magnitude of the gradient of the cost function [whether \cref{eq:NIBP1}, \cref{eq:control-N}, or \cref{eq:random-N}] satisfies the bound
\beq
\left| \frac{\partial C(\vect{\theta})}{\partial \theta_{\mu}}\right|,\left|\frac{\partial C'(\vect{\theta})}{\partial \theta_{\mu}}\right|_\text{ctrl},\left|\frac{\partial C'(\vect{\theta})}{\partial \theta_{\mu}}\right|_\text{rand} \le
                  g n^{K/2} r^{L} ,
\eeq
where $g=g'g''$, and $g'=\frac{h}{\sqrt{(K-1)!}}$ and $g''$ are positive constants independent of $n$.

If $L = c[\ln(n)]^{Q}$ ($c>0$), i.e., the circuit has sublogarithmic ($0<Q<1$), logarithmic ($Q=1$), or superlogarithmic ($Q>1$) depth, then $n = \exp[(L/c)^{1/Q}] $, so that 
\beq
g n^{K/2} r^{L} = O\left( \exp[\frac{1}{2}K(L/c)^{1/Q} - \ln(1/r) L]\right) .
\label{eq:log-decay}
\eeq 
This quantity decays exponentially provided $\frac{1}{2}K(L/c)^{1/Q} < \ln(1/r) L$. Solving this inequality for $L$, we see that for any $Q>1$, this is again exponentially suppressed in the circuit depth $L$ for all $L>L_0$, where 
\beq
L_0 =  c^{1-Q}\left(\frac{K/2}{\ln(1/r)}\right)^{\frac{Q}{Q-1}} .
\eeq
In both cases the cost function gradient decays exponentially with the circuit depth $L$, i.e., we have an NIBP as per \cref{def:NIBP}. The circuit under extra control noise and random unitary noise could not escape NIBP if the original noisy circuit exhibits NIBP.

When $Q=1$ (logarithmic circuit depth), the r.h.s. of \cref{eq:log-decay} decays exponentially in $L$ if $K<2c\ln(1/r)$. This can be interpreted as an upper bound on the locality of the Hamiltonian and the control noise in terms of the largest contractivity factor ($r$) of the unital noise maps in the circuit. If this condition is satisfied then we again find an NIBP.
\end{proof}

Note that if $K\ge 2c\ln(1/r)$, or if $Q<1$ (sublogarithmic circuit depth), then we cannot conclude from our bounds that the circuit exhibits an NIBP. In other words, an NIBP may still occur but this cannot be inferred from our analysis.

\subsection{The non-unital case}
\label{sec:nonunital-bound}

Now assume that at least one of the maps $\mcN_l$ in \cref{eq:fullcircuit} is non-unital. We will show that in contrast to the unital case, in the non-unital case $ \left| \frac{\partial C(\vect{\theta})}{\partial \theta_{lm}}\right|$ can be lower-bounded by a quantity that is non-vanishing even for arbitrarily deep circuits. This means that there is no guarantee of an NIBP in the non-unital case.

Recall \cref{eq:v_l}. The effect of shifting $\vect{\theta}$ by $\vect{\theta}_{lm}^\pm$ at a single location $\mu =(l,m)$ in the circuit is that in layer $l$, and only in this layer, we have two different unitaries $\mcU'_l(\vect{\theta}^{\pm}_{lm})$, and correspondingly two different orthogonal rotations $O_{lm}^{\pm}$: 
\beq
\vect{v}^{\pm,l}_{lm}=\O_{lm}^\pm \vect{v}_{l-1} +\vect{c}_l  \ , \quad \O_{lm}^\pm\equiv M_{l}O_{lm}^\pm .
\label{eq:v_l^pm}
\eeq
Note that prior to this location, i.e., for all $l'<l$, the bifurcation into the two paths labeled $\pm$ has not yet happened, which is why $\vect{v}_{l-1}$ does not carry a $\pm$ label. 

\subsubsection{No guarantee of an NIBP: an example}

As a simple example that demonstrates why there is no guarantee of an NIBP in the non-unital case,  
assume that all $\mcN_l$ are unital except for the last two, i.e., $\mcN_l$ is non-unital only for $l=L-1$ and $l=L$. Writing the last two coherence vectors explicitly then gives:
\bes
\begin{align}
\label{eq:51a}
\vect{v}_{Lm}^{\pm,L} &= \O_{Lm}^{\pm}\vect{v}_{L-1} + \vect{c}_L\\
\label{eq:51b}
\vect{v}_{L-1} &= \O_{L-1}\vect{v}_{L-2} + \vect{c}_{L-1} ,
\end{align}
\ees
and substituting \cref{eq:51b} into \cref{eq:51a} yields:
\beq
\vect{v}_{Lm}^{\pm,L} = \O_{Lm}^{\pm}\O_{L-1}\vect{v}_{L-2} +  \O_{Lm}^{\pm}\vect{c}_{L-1} + \vect{c}_{L} .
\eeq
The term $\O_{Lm}^{\pm}\O_{L-1}\vect{v}_{L-2}$ is identical to the terms that appear in the unital case, so we know from the proof of \cref{th:unital-NIBP} that its norm is $O\left(e^{-L}\right)$; therefore, for simplicity, let us neglect it entirely.
Subtracting then yields:
\bes
\label{eq:vL-lower}
\begin{align}
\tilde{\vect{v}}^L_{Lm} =  \vect{v}_{Lm}^{+,L} -\vect{v}_{Lm}^{-,L}  = M_L(O_{Lm}^{+} - O_{Lm}^{-})\vect{c}_{L-1} .
\end{align}
\ees
We know from \cref{lemma3} that $0<\|\vect{c}_{L-1}\| < 1$. Thus, the vector $(O_{Lm}^{+} - O_{Lm}^{-})\vect{c}_{L-1}$ has a non-zero $L$-independent norm determined by the two different rotations $O_{Lm}^{\pm}$. Applying $M_L$ to it can rescale its norm by, at most, a constant ($L$-independent) factor. Thus, the argument leading to the exponentially small upper bound on $\|\tilde{\vect{v}}^L_{Lm}\|$ in \cref{eq:norm-vec-unital} does not hold in this case. Instead, we now have, from \cref{eq:gradient-result}:
\begin{align}
         \left| \frac{\partial C(\vect{\theta})}{\partial \theta_{Lm}}\right|
        =\frac{1}{2} \left| \tilde{\vect{v}}^L_{Lm} \cdot \vect{h}\right| .
        \label{eq:non-unital-cost}
\end{align}
It follows from standard Levy's lemma-type arguments that two randomly chosen unit vectors in $\mathbb{R}^D$ have overlap $\sim 1/\sqrt{D}$ (see, e.g., Ref.~\cite{random-vecs} for a variety of intuitive arguments). In our case, there is no \textit{a priori} relation between $\tilde{\vect{v}}^L_{Lm}$ and $\vect{h}$ that would compel them to lie in some joint smaller-dimensional subspace, and since the Hamiltonian is $K$-local, the effective dimension $D$ of $\vect{h}$ is $\sum_{k=1}^K {n\choose k}=O(n^K)$ as discussed in \cref{ss:bound-H}. 
This polynomially small overlap is, however, determined by the circuit width $n$ rather than its depth $L$, so it will not be an NIBP as per \cref{def:NIBP}. See \cref{app:BD-proof} for a proof sketch and simulation results. Of course, any circuit for which width and depth are related will impose a barren plateau-type result via \cref{eq:non-unital-cost} that does depend on $L$; this type of noise-free barren plateau is the one first demonstrated in Ref.~\cite{McClean2018}.

\subsubsection{No guarantee of an NIBP: general argument}

We now show that the example above can be generalized, in particular without assuming that $l=L$. Specifically:

\begin{mytheorem}
\label{th:non-unital-no-NIBP}
Assume that a circuit described by \cref{eq:fullcircuit} contains any sequence of HS-contractive non-unital maps $\{\mc{N}_i\}_{i=1}^{l-1}$ for which $\max_{1\le i \le l-1} \sigma_{\max}(M_i) \in [0,\mu)$, where $\mu\in(0,1/2]$, $l\ge 3$, and $\sigma_{\max}(M_i)$ is the largest singular value of $M_i$, the real rotation+dilation matrix associated with the map $\mcN_i$. 
Assume in addition that the last maps $\{\mcN_i\}_{i=l}^L$ in the circuit are all HS-contractive non-unital, $L-l=O(1)$ for large $L$, and $\{\sigma_{\min}(M_i)>c\}_{i=l}^L$, where $c>0$ and $\sigma_{\min}(M_i)$ denotes the smallest singular value of $M_i$.
The cost function of such a circuit does not exhibit an NIBP.
\end{mytheorem}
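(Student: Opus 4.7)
The plan is to exhibit a gate location $(l,m)$ for which $|\partial C/\partial\theta_{lm}|$, as computed via \cref{eq:gradient-result}, is bounded below by an $L$-independent constant, thereby violating \cref{def:NIBP}. The starting point is the identity $|\partial C/\partial\theta_{lm}|=\tfrac{1}{2}|\tilde{\vect{v}}^L_{lm}\cdot\vect{h}|$ together with the recursions \cref{eq:v_l,eq:v_l^pm}. I would first unfold these to obtain
\begin{equation*}
\tilde{\vect{v}}^L_{lm}=\left(\prod_{i=l+1}^{L} M_iO_i\right)M_l\bigl(O^+_{lm}-O^-_{lm}\bigr)\vect{v}_{l-1},
\end{equation*}
using the fact that the affine shifts $\vect{c}_l,\dots,\vect{c}_L$ and all noise maps and unitaries following layer $l$ act identically on both branches of the PSR and therefore cancel in the subtraction. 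This generalizes \cref{eq:vL-lower} to an arbitrary bifurcation location $l$, not just $l=L$.

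Next I would apply the singular-value bounds: each $O_i$ is orthogonal while, by hypothesis, $\sigma_{\min}(M_i)>c$ for $i\ge l$, giving iteratively
\begin{equation*}
\|\tilde{\vect{v}}^L_{lm}\|\;\ge\;c^{L-l+1}\,\bigl\|(O^+_{lm}-O^-_{lm})\vect{v}_{l-1}\bigr\|.
\end{equation*}
Since $L-l=O(1)$, the prefactor $c^{L-l+1}$ is bounded below uniformly in $L$. To control the rightmost factor I would invoke the prefix assumption $\max_{i<l}\sigma_{\max}(M_i)<\mu\le 1/2$: by \cref{eq:v_j<l-1,eq:v_j<l-2}, $\vect{v}_{l-1}=\O_{l-1}\cdots\O_1\vect{v}_0+\vect{d}_{l-1}$, where the first term has norm at most $\mu^{l-1}\|\vect{v}_0\|\le\mu^{l-1}$ and is therefore negligible for large $l$, while $\vect{d}_{l-1}$ retains the un-contracted most recent shift $\vect{c}_{l-1}\neq\vect{0}$ (\cref{lemma3}); the earlier shifts contribute at most a geometric series in $\mu\le 1/2$ bounded via \cref{eq:lemma3-b}. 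For generic gate parameters $\theta_{lm}$, the operator $O^+_{lm}-O^-_{lm}$ does not annihilate $\vect{v}_{l-1}$, so $\|(O^+_{lm}-O^-_{lm})\vect{v}_{l-1}\|$ is bounded below by a constant independent of $L$.

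The final step is to convert this norm bound into a lower bound on the inner product $\tilde{\vect{v}}^L_{lm}\cdot\vect{h}$, which is the quantity that actually controls the gradient, and I expect this to be the main obstacle: one has to rule out accidental orthogonality between $\tilde{\vect{v}}^L_{lm}$ and $\vect{h}$. I would handle this by varying the gate index $m$ within the last few layers so that the direction of $\tilde{\vect{v}}^L_{lm}$ sweeps through a basis of $\mathbb{R}^{d^2-1}$, guaranteeing a non-negligible overlap with $\vect{h}$ for at least one such choice, or by invoking a genericity/random-vector heuristic analogous to the one used in the example following \cref{eq:non-unital-cost}. Such an overlap may shrink polynomially in the circuit width $n$, but since \cref{def:NIBP} demands exponential decay in the circuit depth $L$ specifically, and our lower bound on $|\partial C/\partial\theta_{lm}|$ is $L$-independent once $L-l=O(1)$ is fixed, this suffices to preclude an NIBP.
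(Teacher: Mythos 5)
Your proposal is correct and follows essentially the same route as the paper's proof: the post-bifurcation affine shifts cancel in the PSR difference, the prefix condition $\max_{i<l}\sigma_{\max}(M_i)<\mu\le 1/2$ guarantees that $\vect{d}_{l-1}$ retains a constant contribution from $\vect{c}_{l-1}$ while the $\vect{v}_0$ term decays like $\mu^{l-1}$, the $O(1)$ tail of maps with $\sigma_{\min}(M_i)>c$ preserves a constant fraction of $\|(O^+_{lm}-O^-_{lm})\vect{d}_{l-1}\|$, and the residual overlap with $\vect{h}$ is dismissed as a width-dependent (not depth-dependent) effect exactly as in the discussion following \cref{eq:non-unital-cost}. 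Your only departures are cosmetic --- you apply the $\sigma_{\min}$ lower bound directly to the full product rather than splitting $\tilde{\vect{v}}^L_{lm}$ into $\tilde{\vect{w}}^L_{lm}+\tilde{\vect{e}}^L_{lm}$ as in \cref{eq:vlmtilde}, which slightly streamlines the bookkeeping --- and you gloss the same caveat the paper makes explicit, namely that positivity of $\|\vect{c}_{l-1}\|-\tilde{c}_l\,\mu/(1-\mu)$ requires $\vect{c}_{l-1}$ not to be dominated by an earlier, larger shift (the transcendental condition below \cref{eq:norm-d_l-1-bound}).
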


Before we present the proof, we remark that \cref{th:concentration} concerns the behavior of the \emph{cost function} in circuits deeper than logarithmic depth ($L\in\omega[\log(n)]$), while \cref{th:non-unital-no-NIBP} concerns the behavior of \emph{derivative of the cost function} in the last few layers of deep circuits (large $L$). The class of deep circuits in \cref{th:non-unital-no-NIBP} is included in the class considered in \cref{th:concentration} since deep circuits satisfy $L\in\omega[\log(n)]$. The concentration of the cost function in \cref{th:concentration} does not prevent the non-concentration of the derivative in \cref{th:non-unital-no-NIBP}. This is because the cost function in \cref{th:concentration} concentrates within a \emph{finite interval} (the NILS;  see \cref{sec:NILS-subsec}), which means that a pair of cost functions $C(\vect{\theta}_1)$ and $C(\vect{\theta}_2)$ for circuits of the same depth $L$ can have a finite difference. Using PSRs, this allows for scenarios where the derivative of the cost function in \cref{th:non-unital-no-NIBP} remains finite, i.e., non-concentrated.

\begin{proof}
We start from \cref{eq:v_j<l}.
There is no additional bifurcation after the $l$’th layer, so that using \cref{eq:v_l^pm}, and the recursion given by \cref{eq:v_l} again, we obtain for $1 \le j \le L-l$:
\bes
\label{eq:v_j>l}
\begin{align}
\label{eq:v_j>l-1}
\vect{v}_{lm}^{\pm,l+j}  &= \O_{l+j} \cdots \O_{l+1}\vect{v}^{\pm,l}_{lm} + \vect{d}_{l+j} \\
\label{eq:v_j>l-2}
\vect{d}_{l+j} &=  \O_{l+j} \cdots \O_{l+2} \vect{c}_{l+1}+  \cdots \notag \\
&\qquad\qquad+ \O_{l+j} \vect{c}_{l+j-1}+\vect{c}_{l+j} .
\end{align}
\ees
Note that it follows from \cref{eq:lemma3-b} applied to \cref{eq:v_j<l-1,eq:v_j>l-1} that
$\|\vect{d}_j\| < 1$ for all $1\le j \le L$.

Combining \cref{eq:v_j<l,eq:v_j>l} we obtain:
\bes
\label{eq:vwef}
    \begin{align}
\label{eq:vwef-1}
        &\vect{v}_{lm}^{\pm,L} \notag\\
        &= \O_{L} \cdots \O_{l+1}\vect{v}^{\pm,l}_{lm} + \vect{d}_{L}\\
 \label{eq:vwef-2}
       &= \O_{L} \cdots \O_{l+1}(\O_{lm}^{\pm} \vect{v}_{l-1} +\vect{c}_l) + \vect{d}_{L}\\
        &= \O_{L} \cdots \O_{l+1}[\O_{lm}^{\pm}(\O_{l-1} \cdots\O_1\vect{v}_{0} +\vect{d}_{l-1}) +\vect{c}_l] \notag\\
        &\qquad\qquad\qquad\qquad\qquad\qquad\qquad\qquad\qquad+ \vect{d}_{L}\\
        \label{eq:vwef-3}
        &= \underbrace{\O_{L} \cdots \O_{l+1}\O_{lm}^{\pm}\O_{l-1}\cdots\O_1 \vect{v}_{0}}_{\vect{w}_{lm}^{\pm,L}}\\ \notag
        &\quad+ \underbrace{\O_{L} \cdots \O_{l+1}\O_{lm}^{\pm}\vect{d}_{l-1}}_{\vect{e}_{lm}^{\pm,L}}  +\underbrace{\O_{L} \cdots \O_{l+1}\vect{c}_l+\vect{d}_{L}}_{\vect{f}_L} .
    \end{align}
\ees
Let $\tilde{\vect{w}}_{lm}^{L} \equiv \vect{w}_{lm}^{+,L} - \vect{w}_{lm}^{-,L}$ and $\tilde{\vect{e}}^L_{lm} \equiv \vect{e}_{lm}^{+,L} - \vect{e}_{lm}^{-,L}$. Then, using \cref{eq:vwef}: 
\beq
\label{eq:vlmtilde}
\tilde{\vect{v}}^L_{lm} \equiv \vect{v}_{lm}^{+,L} - \vect{v}_{lm}^{-,L} =  \tilde{\vect{w}}^L_{lm} +\tilde{\vect{e}}^L_{lm} .
\eeq 

We will show that $\tilde{\vect{w}}^L_{lm}$ can be neglected but $\tilde{\vect{e}}^L_{lm}$ cannot. First:
\bes
\begin{align}
&\|\tilde{\vect{w}}_{lm}^{L}\| \notag\\
&= \| \O_L\cdots \O_{l+1}(\O_{lm}^+ - \O_{lm}^-)\O_{l-1}\cdots \O_1\vect{v}_0 \| \\
&\le \| \O_L\cdots \O_{l+1}(\O_{lm}^+ - \O_{lm}^-)\O_{l-1}\cdots \O_1\|\|\vect{v}_0\| \\
&< \| \O_L\| \cdots \|\O_{l+1}\| \|\O_{lm}^+ - \O_{lm}^-\| \| \O_{l-1}\| \cdots \|\O_1\| \\
&< 2 p_L \cdots p_1
\end{align}
\ees
where the second line follows by definition of the operator norm, in the third line we used submultiplicativity and $\|\vect{v}_0\|<1$, and in the last line we defined $p_l \equiv \| M_l\|$ and used $\|\O_{lm}^+ - \O_{lm}^-\| = \|M_l (O_{lm}^+ - O_{lm}^-)\|$ and $\|(O_{lm}^+ - O_{lm}^-)\|\le 2$ since $O_{lm}^\pm$ are orthogonal and the eigenvalues of any orthogonal matrix are all $\pm 1$. We have $p_l < 1$ by \cref{lemma2} and \cref{lemma3}.
We may thus write
\beq
\label{eq:wtilde-bound}
\| \tilde{\vect{w}}_{lm}^{L}\| < 2 p^L \ , \quad p<1 ,
\eeq
where $p \equiv (\prod_{l=1}^L p_l)^{1/L}$. Thus, $\| \tilde{\vect{w}}_{lm}^{L}\|$ vanishes exponentially in the circuit depth.

Next, let us consider $\tilde{\vect{e}}^L_{lm}$. We can rewrite \cref{eq:v_j<l-2} for $j=l-1$ as:
\beq
\vect{d}_{l-1} = \vect{c}_{l-1}+\sum_{l'=1}^{l-2} \prod_{i=l-1}^{l'+1} \O_i \vect{c}_{l'}\ , \quad l\ge 3 \ ,
\label{eq:d_l-1}
\eeq
where the order in the product reflects operator ordering.

Next, we lower-bound $\|\vect{d}_{l-1}\|$. As a simple example for which $\|\vect{d}_{l-1}\|$ is lower bounded by a positive constant, consider the case where only the last map is non-unital, and the rest are unital, i.e., $\vect{c}_{l-1} \ne \vect{0}$ but $\{\vect{c}_{j} = \vect{0}\}_{j=1}^{l-2}$. Then $\|\vect{d}_{l-1}\| = \|\vect{c}_{l-1}\| >0$.
To make the argument more general, we note that it follows from the triangle inequality $\|A+B\| \ge |\|A\|-\|B\||$ that:
\bes
\label{eq:triangle}
\begin{align}
\label{eq:triangle-a}
\|\sum_{k=1} A_k \| &\ge | \|A_1\| - \|\sum_{k=2} A_k \| | \\ 
\label{eq:triangle-b}
& \ge \|A_1\| - \|\sum_{k=2} A_k \|  \ge \|A_1\| - b ,
\end{align}
\ees
where $b \ge \|\sum_{k=2} A_k \|$.
In the context of bounding $\|\vect{d}_{l-1}\|$, we identify 
\beq
A_1\equiv \vect{c}_{l-1}\ , \quad \{A_k\}_{k=2}\equiv\left\{\prod_{i=l-1}^{l'+1} \O_i \vect{c}_{l'}\right\}_{l'=1}^{l-2} .
\label{eq:A_i}
\eeq 
Using the polar decomposition $M_i = V_i S_i$ with $V_i$ orthogonal and $S_i = \sqrt{M_i^\dag M_i}$ positive semidefinite and symmetric, we have 
\begin{align}
&\| \O_i \vect{c}\| = \|V_i S_i O_i\vect{c}\| = \|S_i O_i\vect{c}\|\\
&\qquad\qquad\qquad\qquad\qquad\leq \lambda_{i,1} \|O_i\vect{c}\| 
= \lambda_{i,1} \|\vect{c}\| ,\notag
\end{align}
where, using \cref{lemma3}, $\lambda_{i,1} = \sigma_{\max}(M_i) <  1$ is the largest singular value of $M_i$, and $\|\vect{c}\| < 1$. The same calculation, pulling out one factor from the left at a time, yields:
\beq
 \| \prod_i  \O_i \vect{c}\| \le \prod_i \lambda_{i,1}\|\vect{c}\| .
\eeq
Thus, letting 
\beq
\tilde{\lambda}_l \equiv \max_{1\le i \le l-1} \lambda_{i,1}\ , \quad \tilde{c}_l\equiv \max_{1\le l'\le l-1}\|\vect{c}_{l'}\|\ ,
\label{eq:maxima}
\eeq
we have:
\bes
\label{eq:63}
\begin{align}
\label{eq:63-1}
& \| \sum_{l'=1}^{l-2} \prod_{i=l-1}^{l'+1} \O_i \vect{c}_{l'}\|  \le \sum_{l'=1}^{l-2} \| \prod_{i=l-1}^{l'+1}  \O_i \vect{c}_{l'}\| \\
\label{eq:63-2}
&\qquad \le \sum_{l'=1}^{l-2} \prod^{l-1}_{i=l'+1} \lambda_{i,1} \|\vect{c}_{l'}\| 
\le \sum_{l'=1}^{l-2} \|\vect{c}_{l'}\|  \tilde{\lambda}_l^{l-l'-1} \\ 
\label{eq:63-3}
& \qquad \le  \tilde{c}_{l} r_l \ , \qquad r_l \equiv \frac{\tilde{\lambda}_l-\tilde{\lambda}_l^{l-1}}{1-\tilde{\lambda}_l}\ , \quad l \ge 3 ,
\end{align}
\ees
where in the first inequality in \cref{eq:63-2} we inverted the order in the product since $l'+1 \le l-1$. Using \cref{eq:d_l-1,eq:triangle,eq:A_i} we thus have:
\bes
\begin{align}
    \|\vect{d}_{l-1}\| &\ge | \|\vect{c}_{l-1}\| - \| \sum_{l'=1}^{l-2} \prod_{i=l-1}^{l'+1} \O_i \vect{c}_{l'}\| | \\
    & \ge \|\vect{c}_{l-1}\| - \tilde{c}_{l} r_l .\label{eq:norm-d_l-1-bound}
\end{align}
\ees

We defined $\tilde{c}_l$ so that $\|\vect{c}_{l-1}\|$ is included in the maximum in \cref{eq:maxima}. Therefore, if $\tilde{c}_l=\|\vect{c}_{l-1}\|$ and $r_l < 1$, then the r.h.s. of \cref{eq:norm-d_l-1-bound} is positive. The condition $r_l < 1$ holds for all $\tilde{\lambda}\in [0,1/2)$. If, instead, $\tilde{c}_l$ corresponds to $\max_{1\le l'\le l-1}\|\vect{c}_{l'}\|$ with $l'<l-1$, then we still have $\|\vect{d}_{l-1}\| > 0$, provided $r_l < \|\vect{c}_{l-1}\|/\tilde{c}_l$; this condition is satisfied for all $\tilde{\lambda}_l\in [0,\mu)$, where $\mu\in(0,1/2)$ is found by solving the transcendental equation $r_l = \|\vect{c}_{l-1}\|/\tilde{c}_l$ for $\tilde{\lambda}_l$.
Thus, we may conclude that a sufficient condition for $\|\vect{d}_{l-1}\| > C$, where $C>0$ is a constant, is that the circuit contains any sequence of HS-contractive non-unital maps $\{\mc{N}_i\}_{i=1}^{l-1}$ for which $\max_{1\le i \le l-1} \sigma_{\max}(M_i) \in [0,\mu)$ and $l\ge 3$, where $\sigma_{\max}(M_i)$ is the largest singular value of $M_i$, i.e., the largest eigenvalue of the dilation $S_i = \sqrt{M_i^\dag M_i}$ corresponding to $\mc{N}_i$.

Next, from \cref{eq:vwef-3} we need to consider $\O_{lm}^\pm \vect{d}_{l-1} = M_l O_{lm}^\pm \vect{d}_{l-1}$. The two vectors $O_{lm}^\pm \vect{d}_{l-1}$ are separated by a distance $d_l = \|(O_{lm}^+ - O_{lm}^-) \vect{d}_{l-1}\|$ equal to the sum of the norms of the orthogonal vectors to their projections onto $\vect{d}_{l-1}$, i.e., 
\bes
\label{eq:norm-d_l-1-bound-2}
\begin{align}
d_l &= 2\|\vect{d}_{l-1}\||\sin(\theta/2)| \\
\cos\theta &= \frac{(O_{lm}^+ \vect{d}_{l-1})\cdot (O_{lm}^- \vect{d}_{l-1})}{\|\vect{d}_{l-1}\|^2}\ . 
\end{align}
\ees
The remaining transformations prescribed by \cref{eq:vwef-3} are $\O_L\cdots\O_{l+1}M_l$, where $\O_i = M_i O_i$ and, using the polar decomposition again, $M_i = V_i S_i$ with $V_i$ orthogonal and $S_i = \sqrt{M_i^\dag M_i}$. The orthogonal rotations preserve the norms of $O_{lm}^\pm \vect{d}_{l-1}$; the dilations $\{S_i\}_{i=l}^L$ shrink these norms by at most the products of their smallest singular values, $\sigma_{\min}(M_i)$. Thus $d_l \mapsto \sigma_{\min}(M_l)\cdots\sigma_{\min}(M_L)d_l$, and as a result:
\beq
\|\tilde{\vect{e}}^L_{lm}\| \ge  \sigma_{\min}(M_l)\cdots\sigma_{\min}(M_L)d_l \ .
\eeq
The final step is to use \cref{eq:wtilde-bound,eq:vlmtilde} and the triangle inequality to write
\bes
\label{eq:final-bound}
\begin{align}
\label{eq:final-bound-1}
\| \tilde{\vect{v}}^L_{lm} \| &= \| \tilde{\vect{w}}^L_{lm} +\tilde{\vect{e}}^L_{lm}\| \ge  \| \tilde{\vect{e}}^L_{lm}\| - \| \tilde{\vect{w}}^L_{lm}\|  \\
\label{eq:final-bound-2}
&\ge \sigma_{\min}(M_l)\cdots\sigma_{\min}(M_L)d_l - 2p^L   .
\end{align}
\ees
Therefore, as long as 
\beq
\{\sigma_{\min}(M_i)>c\}_{i=l}^L \text{ and } L-l\in O(1) ,
\eeq 
then $\| \tilde{\vect{v}}^L_{lm}\|>C$ where $c,C>0$ are both constants. This is because $2p^L$ is exponentially small in $L$, and $\sigma_{\min}(M_l)\cdots\sigma_{\min}(M_L)d_l$ is a product of $O(1)$ positive constants, i.e., itself a positive constant. Thus, $\| \tilde{\vect{v}}^L_{lm}\|$ is lower bounded by a positive constant for $L$ sufficiently large: $L > \log(a^\k d_l/2)/\log(p)$, where $\k=L-l>0$ is an $O(1)$ constant and $a = (\Pi_{i=l}^L\sigma_{\min}(M_i))^{\frac{1}{L-l}}>0$ is another constant.

Reverting to \cref{eq:gradient-result}, the fact that the lower bound on $\|\tilde{\vect{v}}^L_{lm} \|$ is now a positive constant for any $l$ such that $L-l=O(1)$ then means that there is no NIBP in the HS-contractive non-unital case. As in the case of \cref{eq:non-unital-cost}, the overlap in \cref{eq:gradient-result} could still be exponentially small (Levy's lemma), but as argued above, this is not noise-induced.
\end{proof}

\cref{th:non-unital-no-NIBP} assumes that $\sigma_{\min}(M_i)>0$. However, there exist non-unital noise channels for which $\sigma_{\min}(M_i)=0$, e.g., a composite of a bit-flip or phase-flip channel with Kraus operators $\{\sqrt{p}I,\sqrt{1-p}\sigma\}$ where $\sigma\in\{X,Y,Z\}$, at the special symmetry point $p=1/2$, followed by an amplitude damping channel in \cref{eq:AD}; see \cref{app:sigma_min}.  We next discuss the possibility of NIBPs in such cases.

When $\sigma_{\min}(M_i) = 0$, there is an in-principle possibility of encountering NIBPs in circuits subject to non-unital noise. To see why the proof of \cref{th:non-unital-no-NIBP} does not hold in this case, note that $\| \tilde{\vect{v}}^L_{lm} \|$ in \cref{eq:final-bound-2} is not lower bounded by $C>0$ if at least one of $\{\sigma_{\min}(M_i)>c\}_{i=l}^L$ is zero. Hence, there is no guarantee that the circuit will escape an NIBP since $\| \tilde{\vect{v}}^L_{lm} \|$ is not lower bounded above zero.

However, the contraction of $\vect{d}_l$ by $\sigma_{\min}(M_i)$ only happens when $\vect{d}_l$ is aligned in specific directions such that the noise channel maps $\vect{d}_l$ to the zero vector. As the number of qubits in the circuit grows, the dimension of $\vect{d}_l$ grows exponentially, making this scenario highly unlikely by a similar argument to the one below \cref{eq:non-unital-cost}, using the standard Levy's lemma. Hence, we expect that NIBPs are avoided under non-unital noise even for channels for which one or more $\sigma_{\min}(M_i) = 0$.

\subsubsection{Example: amplitude-damping}
\label{sec:AD}

As a physical example of a HS-contractive non-unital map that prevents an NIBP, consider an amplitude-damping map. It suffices to consider the simple case of the amplitude-damping map for a qubit coupled to a zero-temperature bath. The Kraus operators are 
\beq
\label{eq:AD}
K_0 = \ketb{0}{0} + \sqrt{1-p}\ketb{1}{1} , \quad K_1 = \sqrt{p} \ketb{0}{1} , 
\eeq
where $p$ is the probability of relaxation from the excited state $\ket{1}$ to the ground state $\ket{0}$~\cite{Lidar:2019aa}. We find, using \cref{eq:M-c}, that $\vect{c} = (0,0,p)$ and $M = \text{diag}(\sqrt{1-p},\sqrt{1-p},1-p)$, so that $\sigma_{\max}(M) = \sqrt{1-p}$ and $\sigma_{\min}(M) = 1-p$. Thus, according to \cref{th:non-unital-no-NIBP}, for any $p \in [3/4+\eps,1-\eps]$ and $\eps>0$, a sequence of $L$ such amplitude-damping maps acting independently on each qubit of a noisy VQA circuit will prevent the cost function of such a circuit from exhibiting an NIBP.

\section{Simulations}
\label{sec:sim}

We performed numerical simulations employing the Qiskit framework~\cite{Qiskit} to ascertain the ground state energy of specific Hamiltonian under the influence of depolarizing (unital) and amplitude-damping (HS-contractive non-unital) noise. The single-qubit depolarizing map is defined as: 
\beq
\mathcal{N}(\rho) = 
(1-p)\r+ \frac{p}{3} \sum_{\alpha \in \{x,y,z\}} \sigma^\alpha \r \sigma^\alpha ,
\eeq 
where $p$ is the probability of the error. The single-qubit Kraus operators of the amplitude-damping map are given in \cref{eq:AD}.

The simulation was conducted utilizing three-qubit VQAs incorporating the \texttt{TwoLocal} ansatz, composed of $RY(\theta)=\exp(-i\theta \s^y/2)$ rotation gates and CNOTs in a linear entanglement configuration, where qubit $i$ is entangled with qubit $i+1$ $\forall i$ along a chain, with open boundary conditions. The optimization procedure was performed using the Stochastic Perturbation Simulated Annealing (SPSA) algorithm~\cite{spsa-pdf} as the classical optimizer, constrained by a maximum iteration limit of 200 (\texttt{maxiter=200}). The multi-qubit noise map employed in the simulation was constructed from a composition of $n$ one-qubit noise maps, which are HS-contractive for both of the examples we used to represent unital and non-unital maps.

We employ a stochastic procedure to generate a set of $50$ sparse three-qubit Hamiltonians. These Hamiltonians are structured as $H = \sum_{i_1,i_2,i_3} h_{i_1 i_2 i_3} \sigma_{i_1} \otimes \sigma_{i_2} \otimes \sigma_{i_3}$, with the constraint that each element $\sigma_{i_j}$ is drawn from the set ${\sigma^0, \sigma^x, \sigma^z}$ and interactions are restricted to be two-local, i.e., there is at least one $\sigma_{i_j}$ that is $\sigma^0$. The magnitude of $h_{i_1 i_2 i_3}$ is uniformly sampled from $[0,1)$ and later normalized such that $\|H\|_2=1$. The Hamiltonian for the main simulation was restricted to a maximum of three qubits to reduce the effect of non-noise-induced forms of BP. 

In the simulation to demonstrate other types of BP that are not NIBP, we performed similar simulations using $50$ randomly generated $n$-body Hamiltonians $H_n$, with $2\le n \le 9$. We constructed these Hamiltonians so that each has a zero ground state energy and is at most two-local.
These Hamiltonians are represented in the format $H_n = \sum_{\vect{i}} h_{\vect{i}} \sigma_{i_1} \otimes \cdots \otimes \sigma_{i_n}$, where $\sigma_{i_j}$ belong to the set ${\sigma_0, \sigma_x, \sigma_z}$ and $\vect{i}=(i_1,\dots,i_n)$. 

The pseudo-code to generate random $n$-qubit Hamiltonians in this simulation is given in \cref{algo:randH}.

\begin{algorithm}
  \caption{Generating Hamiltonians for simulations}\label{algo:ham}
  \begin{algorithmic}[1]
    \Procedure{Generating a 2-local Hamiltonian}{}
      \State Requirement: $H_0=0$
      \State Randomly generate a 2-local Hamiltonian $H$ of $n$ qubits such that $\|H\|_2=1$.
      \State Find its ground state $H_0$.
      \State Rescale the energy of $H$ to have $H_0=0$ by subtracting the ground state energy.
      \State Renormalize $H$ such that $\|H\|_2=1$.
    \EndProcedure
  \end{algorithmic}
  \label{algo:randH}
\end{algorithm}

\subsection{Demonstration of Theorem \ref{th:unital-NIBP} and \ref{th:non-unital-no-NIBP}}

NIBP refers to the phenomenon where the magnitude of the cost function's gradient with respect to control angles diminishes exponentially in the number of layers [\cref{eq:NIBP1}]. Our analysis demonstrates this characteristic to be consistently applicable solely within the unital noise scenario (\cref{th:unital-NIBP}). Conversely, in the case of HS-contractive non-unital noise, the magnitude of the gradient need not necessarily experience exponential suppression as a function of the number of layers (\cref{th:non-unital-no-NIBP}).
We now present simulation results that support these results.

\begin{figure}[t]
        \includegraphics[width=0.24\textwidth]{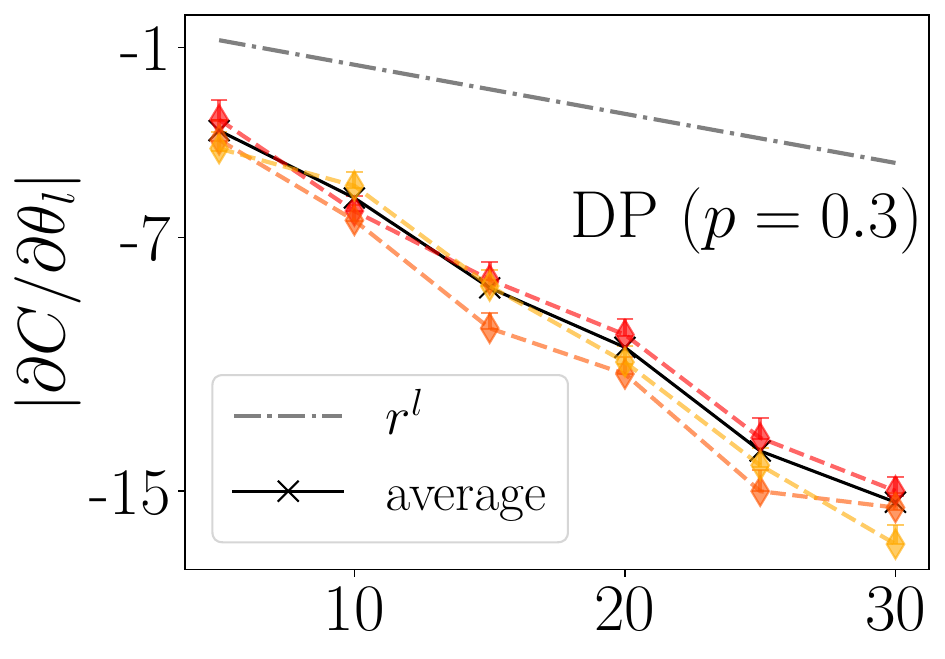}  
        \includegraphics[width=0.225\textwidth]{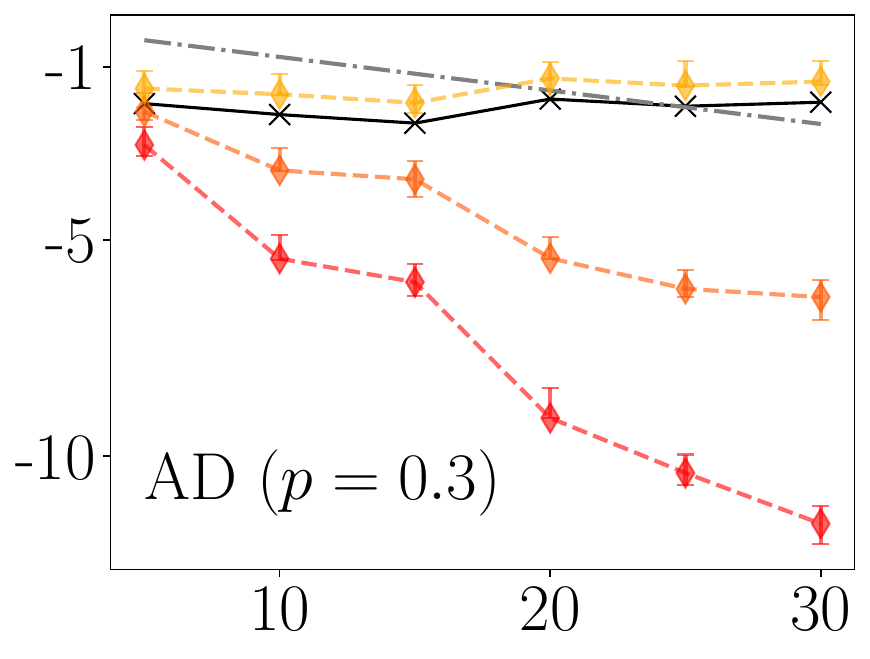}
        \includegraphics[width=0.24\textwidth]{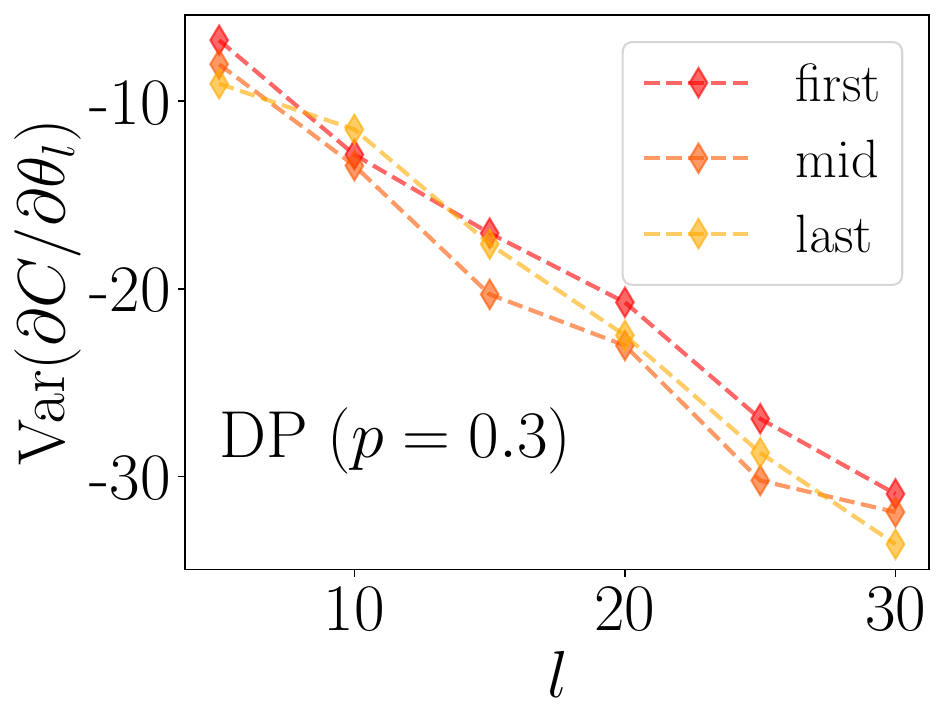} 
        \includegraphics[width=0.225\textwidth]{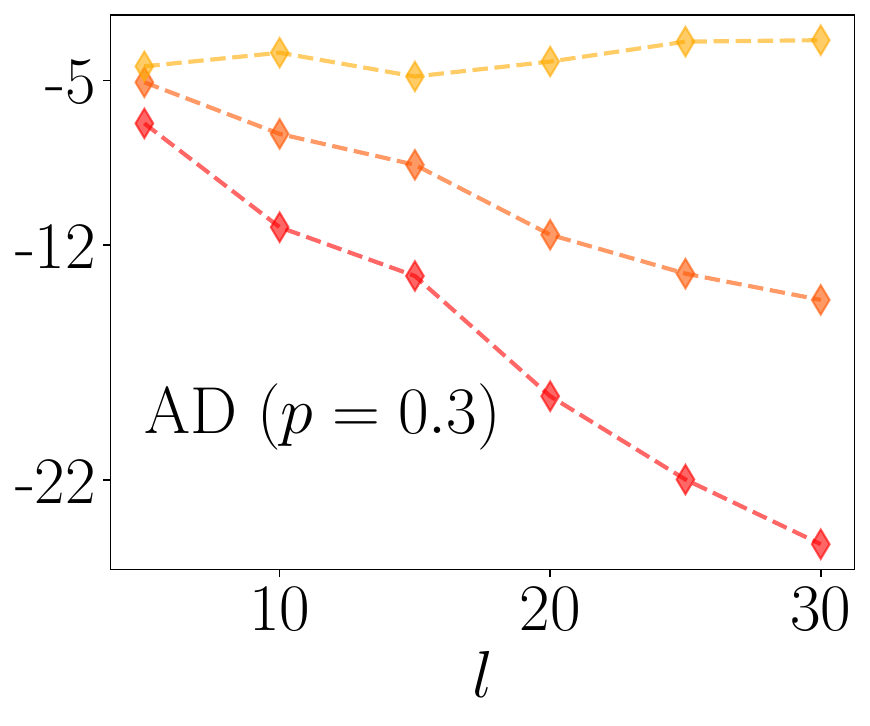}  
    \caption{Mean and variance ($\log_{10}$ scale) of the magnitude of the cost function gradient for depolarizing (left) and amplitude-damping (right) maps with noise probability $p=0.3$ as a function of layers. Error bars in the upper plots represent the range between the maximum and minimum values.} 
    \label{fig:layer-DP-AD}
\end{figure}

\cref{fig:layer-DP-AD} illustrates the magnitude of the cost function gradient (on a logarithmic scale) as a function of layers of the VQAs. Mean values (variances) are plotted in the top (bottom) two plots. Plots on the left (right) correspond to VQAs under depolarizing noise (amplitude-damping noise), both with a noise probability of $p=0.3$. We specifically present three angles corresponding to the initial, middle, and final layers of the VQA to emphasize their distinct behaviors. Additionally, the dot-dashed black line, denoted as $r^l$, is in proportion to the bound derived in \cref{eq:NIBP1} for the unital case. The solid black line is the numerical average over the three angles shown.

Under unital, depolarizing noise (upper and lower left), both the mean and variance of the gradient magnitude exhibit an exponential decay with an increasing number of layers. The error bars displayed in the upper plots represent the range between the maximum and minimum values. This suggests that all simulated mean values remain within the predicted bound. The observed behavior remains consistent irrespective of the specific layer within the VQA from which these angles were selected. The theoretical upper bound (dot-dashed line) is rather loose but consistent with the numerical results.

Under amplitude-damping noise (upper and lower right), distinct behaviors are observed among the three angles. The mean value of the magnitude of the gradient increases as the angle approaches the end of the VQA. The angle selected from the final layer consistently demonstrates a nearly constant gradient magnitude and violates the bound in \cref{eq:NIBP1}, consistent with \cref{th:non-unital-no-NIBP} and as anticipated in our theoretical analysis that angles within the terminal layers of the VQA circuit under HS-contractive non-unital noise may evade the NIBP.

\begin{figure}
        \includegraphics[width=0.24\textwidth]{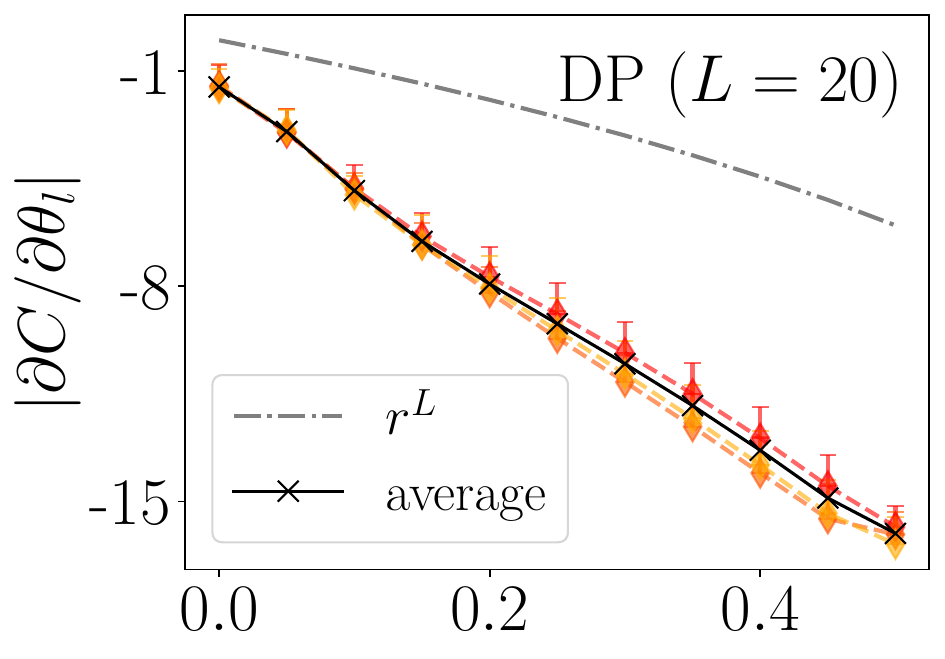}  
        \includegraphics[width=0.225\textwidth]{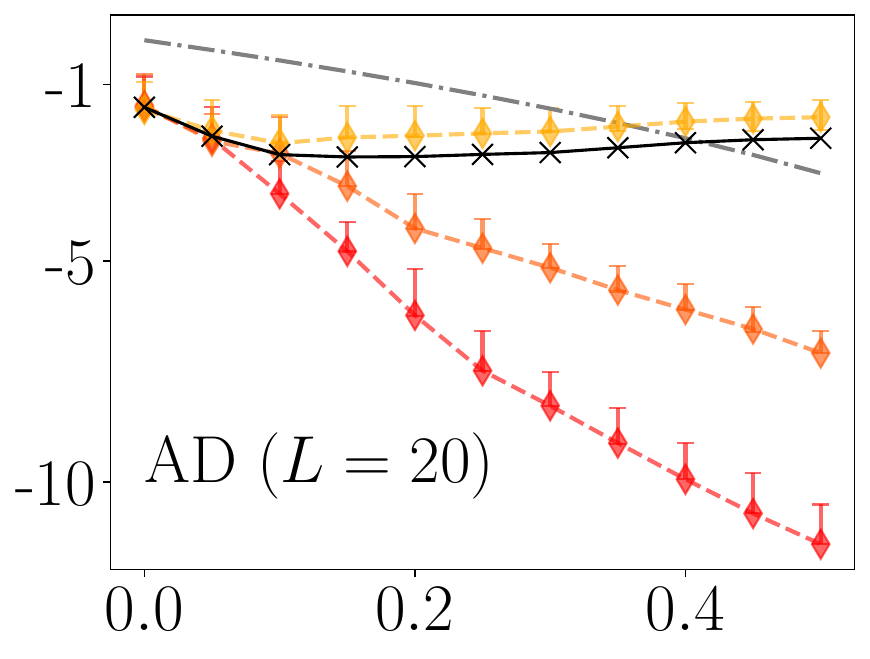}
        \includegraphics[width=0.24\textwidth]{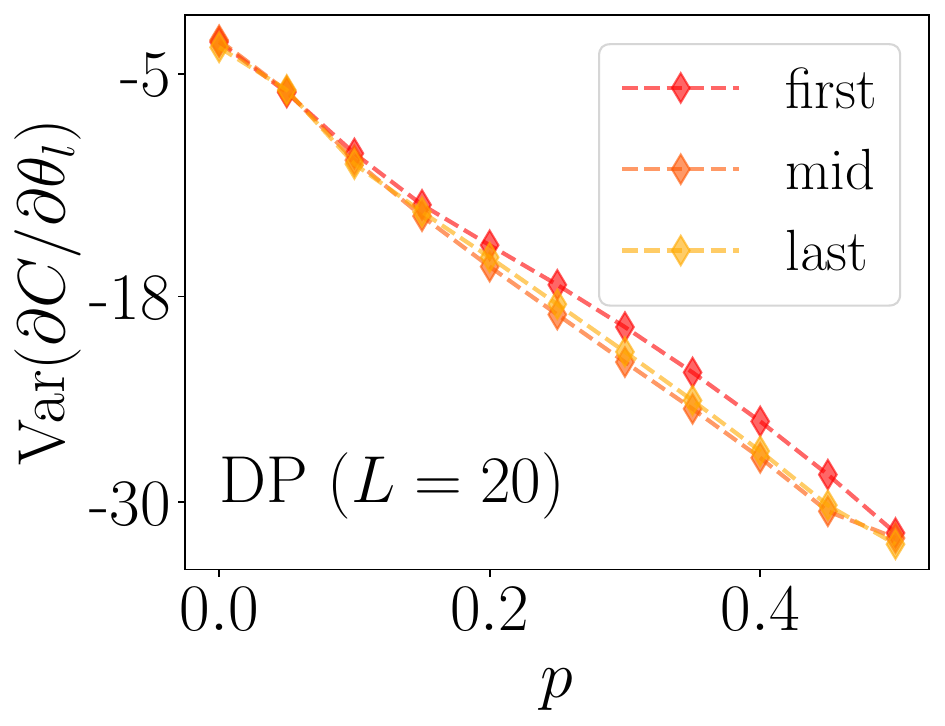} 
        \includegraphics[width=0.225\textwidth]{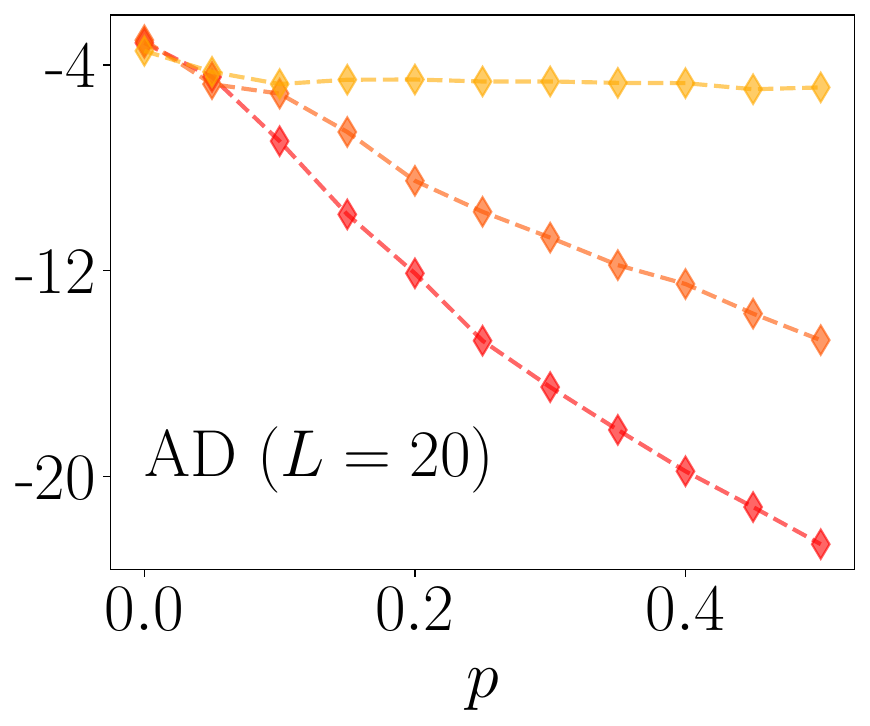}  
    \caption{Mean and variance ($\log_{10}$ scale) of the magnitude of the cost function gradient for depolarizing (left) and amplitude-damping (right) maps as a function of noise probability $p$ in a VQA with 20 layers. Error bars in the upper plots represent the range between the maximum and minimum values.} 
    \label{fig:bound-DP-AD}
\end{figure}

\cref{fig:bound-DP-AD} depicts an alternative view of \cref{eq:NIBP1}, where the magnitude of the gradient is plotted as a function of the noise probability. The total number of layers in the VQA is fixed at $L=20$. The noise probability is varied from $p=0$ to $p=0.5$, corresponding to decreasing the parameter $r$ in \cref{eq:NIBP1}. The behavior seen in \cref{fig:bound-DP-AD} is similar to \cref{fig:layer-DP-AD}, where the magnitude of the gradient fully respects the derived bound only under depolarizing noise.

By fixing the noise probability and varying the number of layers (\cref{fig:layer-DP-AD}) or fixing the number of layers and varying the noise probability (\cref{fig:bound-DP-AD}), our simulations consistently demonstrate that \cref{eq:NIBP1} is well-respected in VQAs under unital noise and violated in VQAs under HS-contractive non-unital noise. This observation aligns with our theoretical predictions in \cref{th:unital-NIBP} and \cref{th:non-unital-no-NIBP}. For a numerical examination of the standard (noise-free) BP, see \cref{app:BD-proof}.

\subsection{Cost function behavior}
Our results indicate that unital noise is subject to both NIBP and other forms of BP, and that, conversely, HS-contractive non-unital noise may potentially evade NIBP despite experiencing a comparable degree of other BPs. The primary question within the framework of VQA pertains to its efficacy in determining the ground state of a specific Hamiltonian. Consequently, in practical applications where the presence of noise is inevitable, the central concern revolves around identifying the type of noise that is comparatively less detrimental. 

\begin{figure}
        \includegraphics[width=0.24\textwidth]{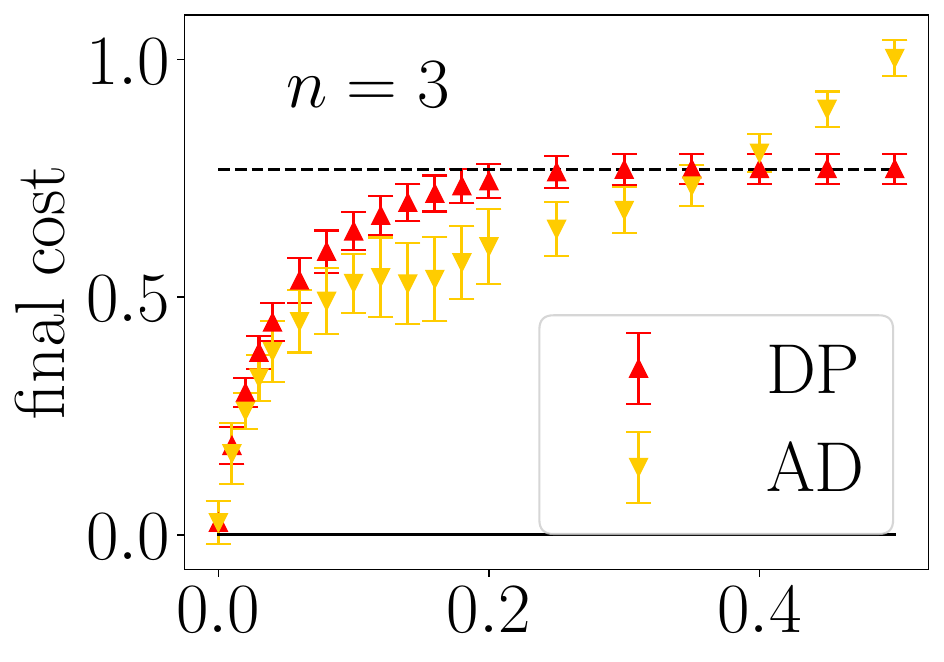}  
        \includegraphics[width=0.22\textwidth]{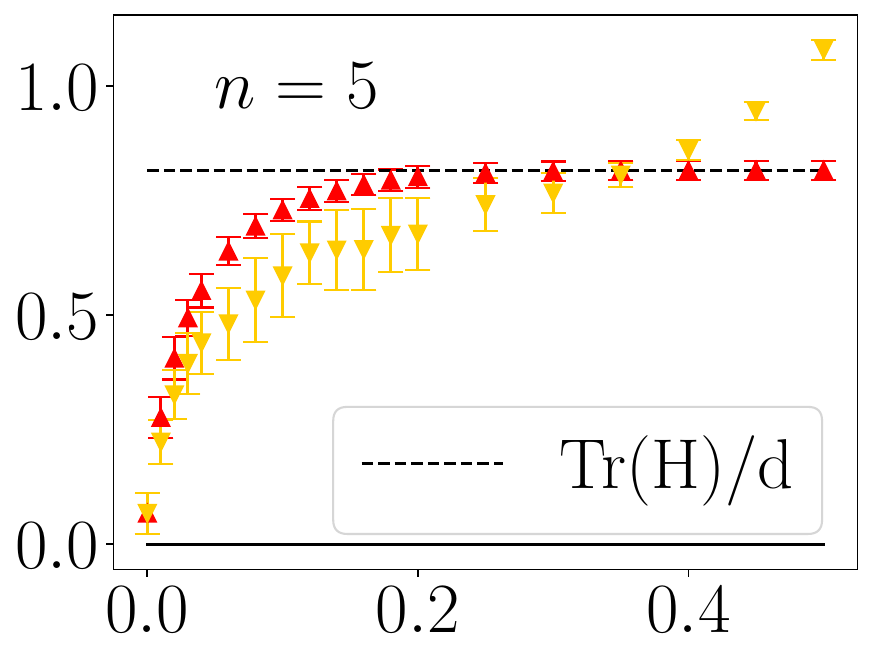}
        \includegraphics[width=0.24\textwidth]{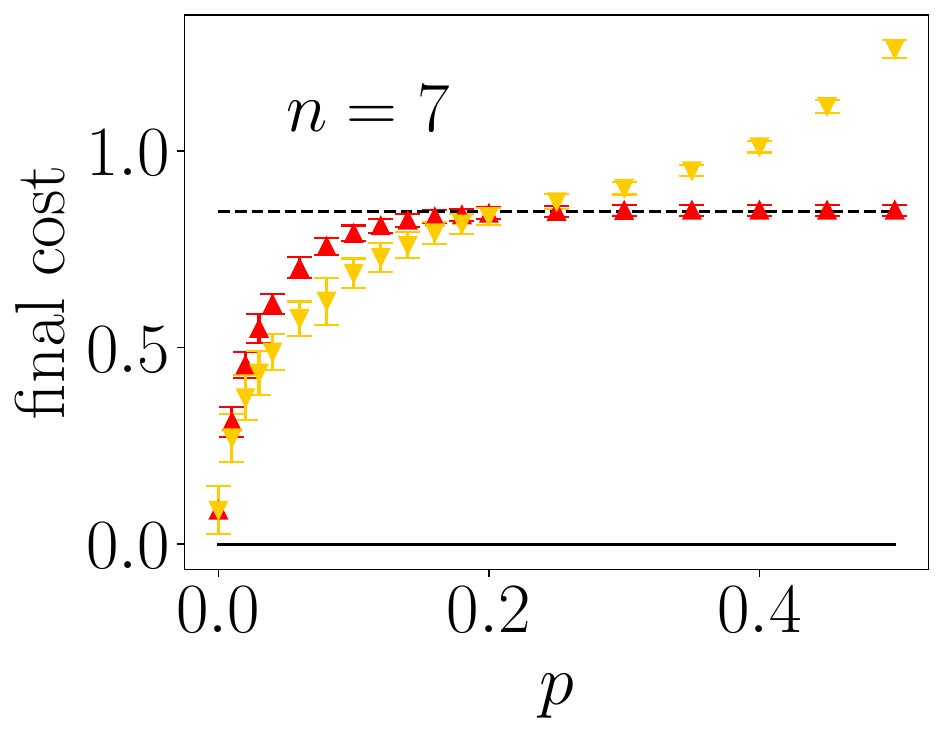} 
        \includegraphics[width=0.22\textwidth]{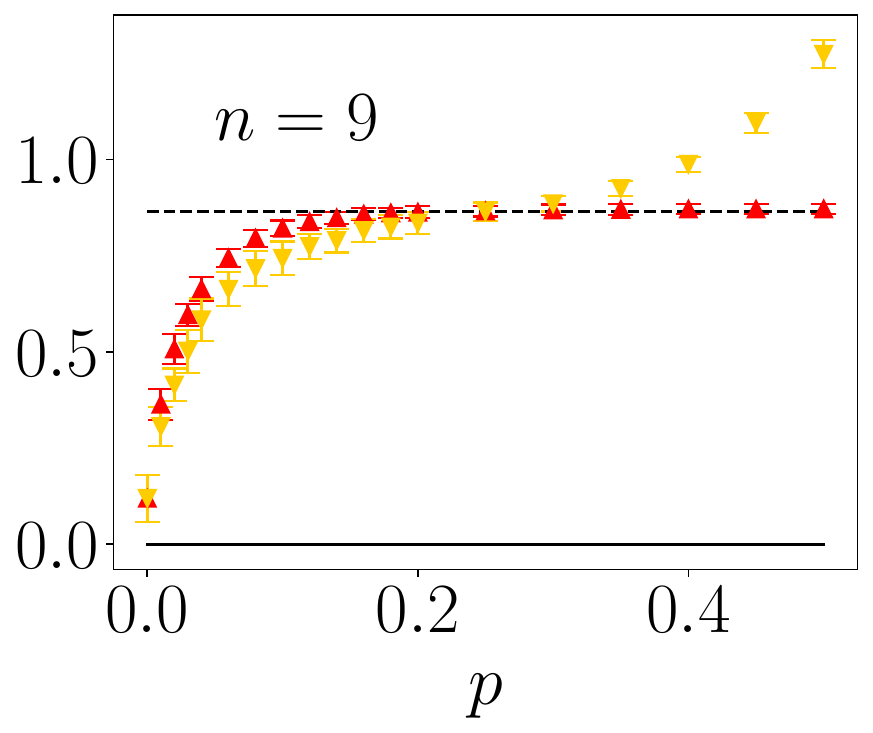}  
    \caption{Final cost function averaged over $50$ random $n$-qubit Hamiltonians with 
zero ground state energy under depolarizing (red up-triangles) and amplitude-damping (yellow down-triangles) maps as a function of noise probability $p$ using VQA with $L=5$ layers. The solid black line at zero denotes the true minimum of the Hamiltonians used in this simulation. The error bars are the standard deviation of the final cost. The dashed black line is the predicted NILS value in the large circuit depth limit in the unital case, \cref{eq:C_NILS-unital}.
} 
    \label{fig:final-cost-n}
\end{figure}

To address this question, we determined the final cost function obtained by VQA circuits when subjected to unital and HS-contractive non-unital noise maps. The result displayed in \cref{fig:final-cost-n} was achieved through the training of circuits responsible for generating the results illustrated in \cref{fig:layer-DP-AD,fig:bound-DP-AD}, showing the average of the instances for $n$-qubit Hamiltonians for $n=3, 5, 7$, and $9$ as a function of noise probability $p$.

The most obvious result from these simulations is that noise has a rapidly increasing, strongly detrimental effect: the final cost function rises rapidly from its optimal value of zero as soon as noise is introduced, whether unital or non-unital. Note that additionally, even at $p=0$, the final cost is greater than zero for $n\ge 5$ in our simulations, suggesting the effect of BPs. 

The difference between unital and non-unital noise is small within the low-noise regime. The non-unital case exhibits a slightly better performance for small $p$, until reaching a crossover point typically observed within the range of $p=0.15-0.3$ (depending on $n$). However, a notable divergence in their behavior becomes evident as the probability of noise increases. In the case of depolarizing noise, the cost function flattens out at approximately $p=0.2$, whereas the cost function continues to rise when subjected to amplitude-damping noise, accelerating around $p=0.3$. This observation seems consistent with our NILS result, where the cost function concentrates on a fixed point that is noise-independent in the unital case but noise-dependent in the HS-contractive non-unital case [\cref{eq:C_NILS}]. Indeed, the final cost function value precisely matches the theoretically predicted $\Tr(H)/d$ in the unital case [\cref{eq:C_NILS-unital}]. 

Note that the theoretical prediction is concerned with the large $L$ limit, not large $p$. However, while the simulations in \cref{fig:final-cost-n} are for fixed $L=5$, increasing $p$ at fixed $L$ is tantamount to increasing $L$ at fixed $p$, as is clear from \cref{eq:concentrate-C2}, where the factor $q^L$ is responsible for the NILS, and $q$ is the contractivity factor, which is, of course, monotonic in $p$.

Another measure of noisy circuit quality is trainability: the variance of the cost function gradient should vanish no faster than $\Omega(1/\text{poly}(n))$ \cite{Cerezo2021-bp}. \cref{eq:final-bound-2} implies that this is the case when $L-l=O(\log(n))$; see \cref{app:trainable} for details. It may appear that this implies an advantage for non-unital noise over unital noise. However, being trainable only implies the potential to be trained, and does not guarantee that the circuit can, in practice, be trained to achieve the global minimum. This is illustrated in \cref{fig:final-cost-n}, where the final cost does not converge to zero for $p>0$ due to the NILS.

\section{Conclusions}
\label{sec:conclusions}

This work expands the study of NIBPs to incorporate arbitrary unital and HS-contractive non-unital noise.  Using a generalization of the parameter shift rule that includes noise, we have derived upper bounds for the scaling of the magnitude of the cost function gradient with respect to circuit width $n$, circuit depth $L$, and noise strength. In the unital case, we have shown that the onset of an NIBP occurs already for circuits of logarithmic depth (\cref{th:unital-NIBP}). In contrast, in the HS-contractive non-unital case, we have shown that VQA circuits need not necessarily exhibit an NIBP. This is true, in particular, when a constant number of final layers in a VQA circuit are subject to HS-contractive non-unital noise (\cref{th:non-unital-no-NIBP}). 

We found that both unital and HS-contractive non-unital circuits exhibit a phenomenon we call a noise-induced limit set (NILS), whereby the cost function concentrates on a fixed value for circuits of greater than logarithmic depth. In the unital case, this is given by the expectation value of the problem Hamiltonian with respect to the fully mixed state [\cref{eq:C_NILS-unital}], but in the HS-contractive non-unital case, the fixed value is determined by the parameters of the noise map as well [\cref{eq:C_NILS}]. 

Our results are validated with numerical simulations for the depolarizing and amplitude-damping maps.
Interesting open questions we do not address here are whether HS-contractive ($\|M\|<1$) or non-HS-contractive $(\|M\|\ge 1)$ non-unital noise maps appear more often in practical scenarios, how to characterize them according to a given set of Kraus operators, and what the measure of HS-contractive non-unital maps is in the space of all non-unital maps.

Overall, our work shows that NIBPs present a significant challenge for VQAs, even after mitigation of the standard (noiseless) BP problem. A combination of error suppression, mitigation, and correction methods will be necessary to realize the promise of VQAs, just as is the case for other quantum algorithms running on noisy quantum computers.

\textit{Note added}. After this work was posted to the arXiv a related work appeared that addresses non-unital maps and barren plateaus~\cite{mele2024noiseinduced}.
The non-unital maps considered in this work are $n$-qubit tensor-products maps of $n$ one-qubit non-unital channels. Our results hold for any $n$-qubit non-unital HS-contractive maps ($\|M\|<1$), regardless of whether the map is a tensor-product of one-qubit channels.

\acknowledgments
This research was supported by the ARO MURI
grant W911NF-22-S-0007. This research was developed with funding from the Defense Advanced Research Projects Agency under Agreement HR00112230006 and Agreement HR001122C0063. We thank Rub\'{e}n Ibarrondo, Elias Zapusek and Samson Wang for useful comments, and especially Victor Kasatkin for numerous insightful discussions.

\bibliographystyle{quantum}

\onecolumn\newpage
\appendix

\section{Proof of \cref{eq:v-bounded}}
\label{app:purity-proof}

Using the purity condition $P\equiv\Tr\r^2\leq 1$ in \cref{eq:rho-nice}, we also have 
\bes
\begin{align}
1 &\geq  P = \Tr\left[\left(\frac{1}{d}I + \vect{F}\cdot\vect{v}\right)^2\right] \\
&= \frac{1}{d} + \sum_{i,j=1}^M \Tr(F_iF_j)v_iv_j = \frac{1}{d}+\|\vect{v}\|^2\ ,
\end{align}
\ees
i.e., $\|\vect{v}\| = \sqrt{P-1/d}$, and \cref{eq:v-bounded} follows.

\section{Proof of \cref{eq:M-c,eq:v'}}
\label{app:cM-proof}

Using $\r' = \mcN(\rho) = \sum_\a K_\alpha\rho K_\alpha^\dag$ and the expansion $\r = \frac{1}{{d}} I + \sum_i v_i F_i$, we have the following series of implications:
\bes
    \begin{align}
        \frac{1}{{d}} I + \sum_i v_i'F_i &=  \sum_\a K_\alpha(\frac{1}{{d}}I+\sum_i v_iF_i) K_\alpha^\dag\\
        \sum_i v_i'F_i &= \frac{1}{{d}}\sum_\a K_\alpha K_\alpha^\dag + \sum_{\a i} v_i K_\alpha F_i K_\alpha^\dag-\frac{1}{{d}}I\\
        \sum_i v_i'\Tr(F_jF_i) &=\frac{1}{{d}}\sum_\a \Tr(F_j K_\alpha K_\alpha^\dag)  + \sum_{\a i} v_i\Tr( F_jK_\alpha F_i K_\alpha^\dag)\\
v_j' &= c_j + \sum_i M_{ji}v_i.
    \end{align}
    \label{eq:OSR-i}
\ees
\cref{eq:M-c,eq:v'} now follows from \cref{eq:OSR-i}.

\section{Proof of \cref{lemma1}}
\label{app:lemma1-proof}

Recall that the $p$-norm is defined in \cref{eq:Schatten-p}. The matrix H\"older inequality states that for $1\le a,b \le \infty$ and $\frac{1}{a}+\frac{1}{b}=1$~\cite{Bhatia:book,Baumgartner:11} :
\beq
\<A,B\> \le \|A\|_a \|B\|_b.
\eeq
An important special case for our purposes is $a=b=2$, i.e.:
\beq
\<A,B\> \le \sqrt{\<A,A\>\<B,B\>},
\label{eq:CS}
\eeq
which is just the Cauchy-Schwarz inequality for matrices.

Any linear map $\Psi:\mc{B}(\mcH)\mapsto\mc{B}(\mcH)$ on operators $X\in\mc{B}(\mcH)$ can be written as $\Psi (X)= \sum_\a E_\a X E_\a^{\prime \dag}$, where $\{E_\a,E'_\a\}\in\mc{B}(\mcH)$. Its Hermitian conjugate 
\beq
\<\Psi^\dag(X),Y\> = \<X,\Psi(Y)\> , 
\label{eq:supop-hc}
\eeq
can be written explicitly as $\Psi^\dag(X) = \sum_\a E_\a^\dag X E_\a^{\prime}$~\cite{ODE2QME}.

Therefore, if $\mcN = \{K_\a\}$ is a unital CPTP map, then so is $\mcN^\dag$. The reason is that 
if $\Ph$ is unital then $\Ph(X) = \sum_\a K_\a X K_\a^\dag$ and $\sum_\a K_\a K^\dag_\a = \sum_\a K_\a^\dag K_\a = I$. Thus $\Ph^\dag(X) = \sum_\a K^\dag_\a X K_\a$ has Kraus operators $\{L_\a = K_\a^\dag\}$, and it immediately follows that $\sum_\a L^\dag_\a L_\a = \sum_\a L_\a L^\dag_\a = I$, i.e., also $\Ph^\dag$ is a unital CPTP map. We can now prove \cref{lemma1}.

\begin{proof}
Consider $\r\in\mc{B}_+(\mcH)$ and let $P = \Tr(\r^2) = \<\r,\r\>$ denote its purity. The purity $P'$ of $\r^{(1)} = \mcN(\r) \equiv \mcN^{(0)}(\r)$ can be written as:
\bes
\begin{align}
P^{(1)} &= \<\r^{(1)},\r^{(1)}\> = \<\mcN(\r),\mcN(\r)\> = \< \r,\mcN^\dag [\mcN(\r)]\> \\
&= \< \r,\mcN^{(1)}(\r)\> \ ,
\end{align}
\ees
where in the third equality we used \cref{eq:supop-hc}.
Here, $\mcN^{(1)}\equiv \mcN^{(0)\dag} \circ \mcN^{(0)}$ is the composition of two CPTP maps, so $\mcN^{(1)}$ is itself a CPTP map. Thus $\r^{(2)} = \mcN^{(1)}(\r)$ is a quantum state [i.e., $\r^{(2)}\in\mcB_+(\mcH)$]. 

Define $\forall n\ge 1$ a sequence of quantum maps $\mcN^{(n+1)}\equiv \mcN^{(n)\dag} \circ\mcN^{(n)}$, purities $P^{(n)} = \< \r^{(n)},\r^{(n)}\>$, and states $\r^{(n+1)} = \mcN^{(n)}(\r)$. Then, using the Cauchy-Schwarz inequality for $n\ge 1$:
\bes
\label{eq:P2-bound}
\begin{align}
\label{eq:P2-bound-a}
P^{(n)} &= \< \mcN^{(n-1)}(\r),\mcN^{(n-1)}(\r) \> = \<\r , \mcN^{(n)}(\r)\> \\ 
\label{eq:P2-bound-b}
&\leq \<\r,\r\>^{1/2} \< \r^{(n+1)},\r^{(n+1)}\>^{1/2}\\
&= P^{1/2} (P^{(n+1)})^{1/2}\ ,
\end{align}
\ees
Expanding this recursion, we obtain:
\beq
P^{(1)} \le P^{1/2} P^{1/4} \cdots P^{1/2^n} (P^{(n+1)})^{1/2^n}  \ .
\label{eq:4.8.11}
\eeq
The purity is lower bounded by that of the fully mixed state $I/d$, where $d=\dim{\mc{H}}$: $P(I/d) = \Tr[(I/d)^2] = 1/d$. Therefore $\forall n,d$ we have $1/d \le P^{(n+1)} \le 1$ and hence $\lim_{n\to\infty} (P^{(n+1)})^{1/2^n} = 1$. Thus, upon taking the limit $n\to\infty$ of \cref{eq:4.8.11} we obtain:
\beq
P'  = P^{(1)} \le P^{\sum_{n=1}^\infty 2^{-n}} = P\ .
\eeq
Equality in \cref{eq:P2-bound-b} holds for all $\rho$ iff $\mcN^{(n)}(\r) = \r$ $\forall n$, i.e., $\mcN^{(n)} = \mc{I}$, which means that in particular, after setting $n=1$, $\mcN^{(0)\dag} \circ\mcN^{(0)} = \mc{I}$, so that by definition $\mcN^{(0)} = \mcN$ must be a unitary superoperator: $\mcN(\r) = \mc{U}(\r) = U\r U^\dag$, where $U$ is unitary.
\end{proof}

\section{Contractivity in the sense of \cite{Perez-Garcia:2006aa}}
\label{app:norm}

In \cref{sec:non-unital}, we define the contractivity of a map in terms of the Hilbert-Schmidt norm. This is different from \cite{Perez-Garcia:2006aa}, whose contractivity definition and results we briefly summarize here.

A map $\mc{N}$ between metric spaces $\mc{A}$ and $\mc{B}$ is strictly contractive iff there exists $r < 1$ s.t. $\forall A, B \in \mc{A}$ we have $d_{\mc{B}}(\mc{N}(A), \mc{N}(B)) \leq r d_{\mc{A}}(A, B)$, where $d$ is a distance function. If the above definition is satisfied with $r=1$ then the map $\mc{N}$ is called ``non-expansive''.

Let $\mathcal{M}_n$ be the space of $n\times n$ matrices, $\|\mathcal{N}\|_{p-p}=\sup_{A\in \mathcal{M}_n}\|\mathcal{N}(A)\|_p/\|A\|_p$ the induced $p$-norm, and $\|A\|_p$ (as usual) the Schatten $p$-norm of $A$. 
A positive trace-preserving map $\mathcal{N}:\mathcal{M}_n\rightarrow \mathcal{M}_{n'}$ is contractive when $\|\mathcal{N}\|_{p-p}<1$. For a non-unital map with $n=n'$, $\mathcal{N}$ is always non-contractive, i.e., $\|\mathcal{N}\|_{p-p}>1$ \cite{Perez-Garcia:2006aa}. The crucial difference from our case (and the main reason that this result does not contradict ours) is that $\mathcal{M}_n$ is allowed to contain the $0$ matrix (in addition, $\mathcal{N}$ need not be completely positive), which is, of course, different from the space of valid quantum states. Indeed, the proof of the non-contractivity of non-unital maps is essentially to take $A = 0$ and $B = I$; since $\mathcal{N}$ is trace-preserving, $\Tr(\mathcal{N}(B)) = \Tr(B) = n$ and if $\mathcal{N}(B) \neq I$ its norm must be larger than $B$'s.

\section{Proofs of \cref{lemma3} and \cref{lemma3-qubit}}
\label{app:lemma3-proof}

\subsection{Proof of \cref{lemma3}}
\begin{proof}
To prove \cref{eq:lemma3-a} note that, by definition, non-unital maps do not preserve $I$, i.e., they do not preserve the maximally mixed state given by the coherence vector $\vect{v}=\vect{0}$. The transformation $\vect{0} \rightarrow M\vect{0}+\vect{c}=\vect{c}$ must be non-zero. Hence, $\vect{c}\neq \vect{0}$.

To prove \cref{eq:lemma3-b}, note that 
\beq
\label{eq:12}
\|\vect{v}'\| = \|M\vect{v}+\vect{c}\|\le \sqrt{1-1/d} \ \ \forall \vect{v} \text{ s.t. }\|\vect{v}\|\le \sqrt{1-1/d},
\eeq 
where we used \cref{eq:v-bounded}. This must hold in particular for the maximally mixed state, i.e., when $\vect{v}=\vect{0}$. Hence, $\|\vect{c}\|\leq 1/\sqrt{1-1/d}$.

Since $\mcN$ is HS-contractive, \cref{eq:lemma3-c} follows directly from \cref{lem:M<1}.

To prove \cref{eq:lemma3-d}, recall the operator norm of $M$ is its maximum singular value:
\beq
\sigma_{\max}(M) = \|M\|=\sup_{\vect{v}\neq\vect{0}}\frac{\|M\vect{v}\|}{\|\vect{v}\|}.
\label{eq:op-norm}
\eeq
We use the definition in \cref{eq:op-norm} and write $1> \|M\|\ge \|M\vect{v}\|/\|\vect{v}\|, \forall \vect{v}\neq\vect{0}$, which implies that $\|M\vect{v}\|< \|\vect{v}\|, \forall \vect{v}\neq\vect{0}$.
\end{proof}

\subsection{Proof of \cref{lemma3-qubit}}

To prove that a single-qubit non-unital map is always HS-contractive, we will show that it satisfies $\|M\|<1$ and use \cref{lem:M<1} to complete the proof. 

\begin{proof}
Let $\vect{v}_M$ be a coherence vector satisfying $\|M\|=\|M\vect{v}_M\|/\|\vect{v}_M\|$. Without loss of generality we can assume $\vect{v}_M$ corresponds to a pure state: if it does not, we can normalize it so that $\|\vect{v}_M\|= 1/\sqrt{2}$.
We will prove the claim by contradiction and assume that $\|M\|\ge 1$, which implies $\|M\vect{v}_M\|=\|M\|\|\vect{v}_M\|\ge 1/\sqrt{2}$. Without loss of generality, let $(M\vect{v}_M)\cdot \vect{c}\ge 0$ (replace $\vect{v}_M$ by $-\vect{v}_M$ otherwise). This implies that $\|M\vect{v}_M+\vect{c}\|>\|M\vect{v}_M\|\ge 1/\sqrt{2}$. However, $M\vect{v}_M+\vect{c}$ is a valid quantum state which has $\|M\vect{v}_M+\vect{c}\|\le 1/\sqrt{2}$. Hence, by contradiction, $\|M\|<1$.

Using \cref{lem:M<1}, $\|M\|<1$ implies that the channel is HS-contractive. Hence, any single-qubit non-unital map is always HS-contractive.
\end{proof}

Note that while for $d=2$ positivity is captured entirely by the condition 
$\|\vect{v}\|\le 1/\sqrt{2}$, a condition on $\|\vect{v}\|$ alone is insufficient to ensure positivity for $d>2$; additional constraints must be satisfied (see, e.g., Ref.~\cite[Eqs.~(23) \& (24)]{PhysRevA.68.062322}). Since $\|\vect{v}_M\| \leq \sqrt{1 - 1/d}$ is not the only condition for $\vect{v}_M$ to represent a valid state, a coherence vector $\mathbf{v}_M$ such that $\|\vect{v}_M\| = \sqrt{1 - 1/d}$ could yield an invalid state that is non-positive. This invalidates the proof for general $d$. Hence, the proof above holds only for $d=2$.

\begin{figure*}
        \includegraphics[width=0.33\textwidth]{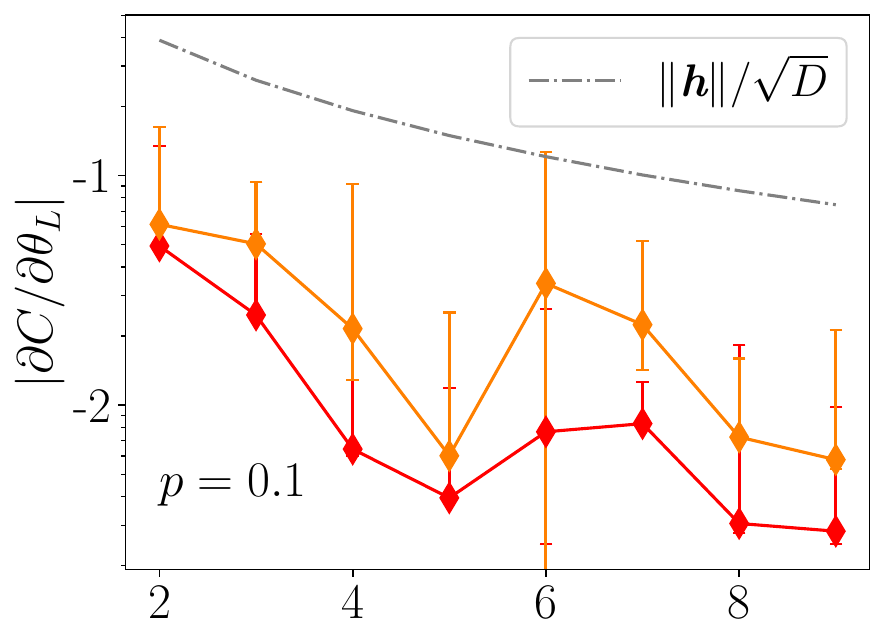}  
        \includegraphics[width=0.32\textwidth]{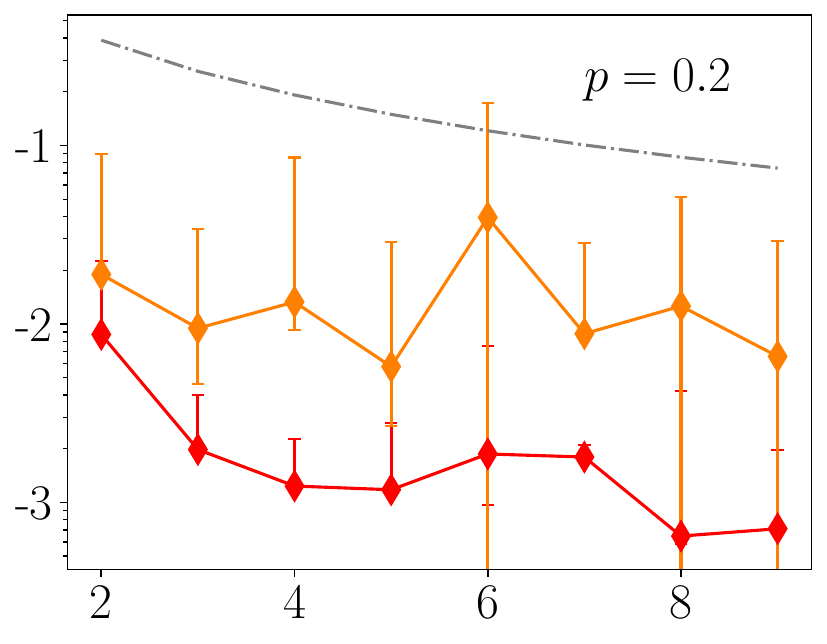}
        \includegraphics[width=0.32\textwidth]{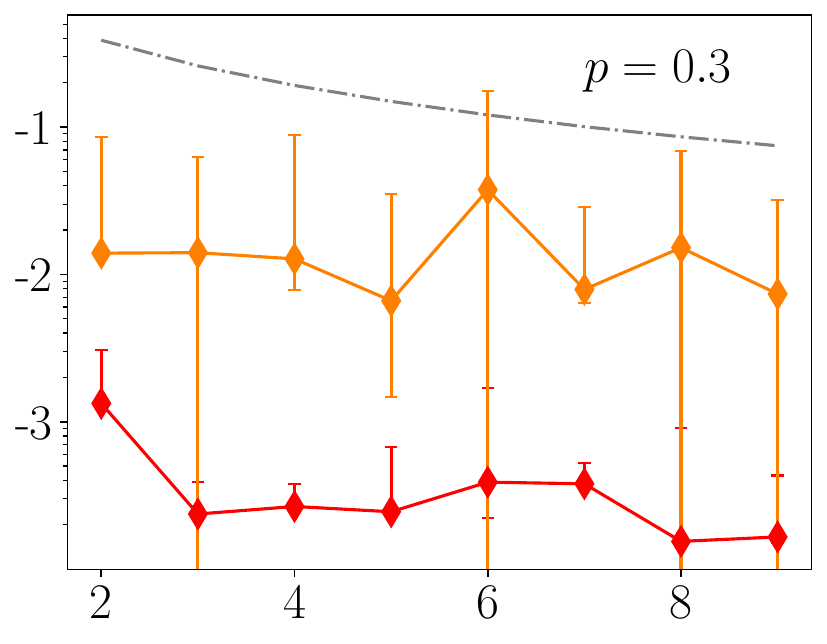}
        \includegraphics[width=0.33\textwidth]{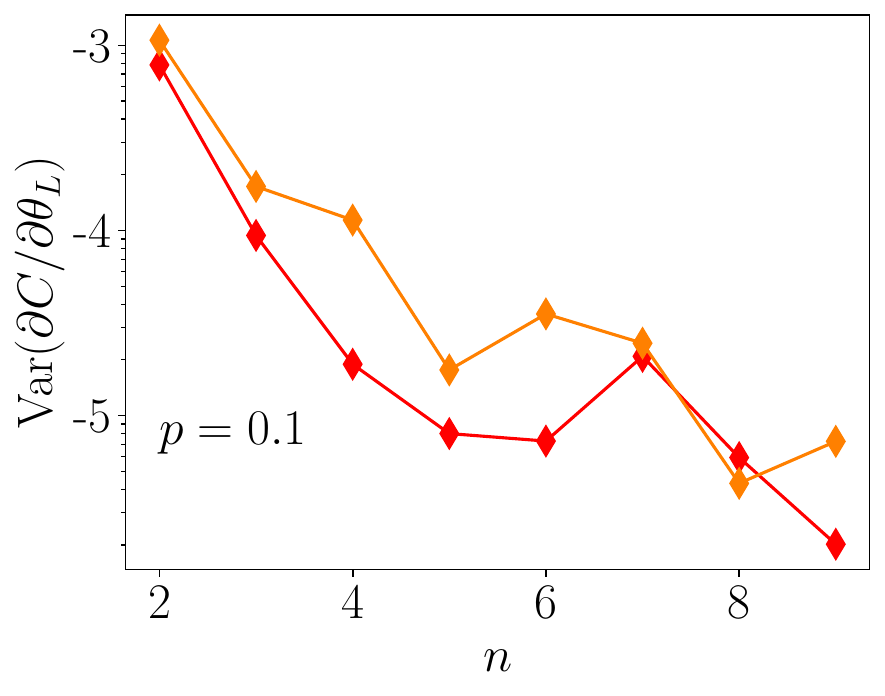} 
        \includegraphics[width=0.32\textwidth]{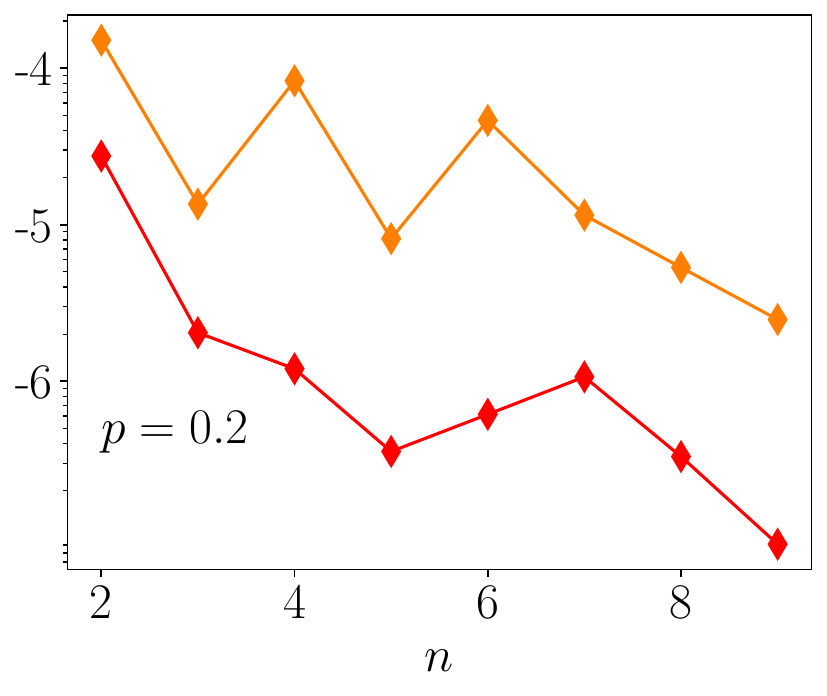}  
        \includegraphics[width=0.32\textwidth]{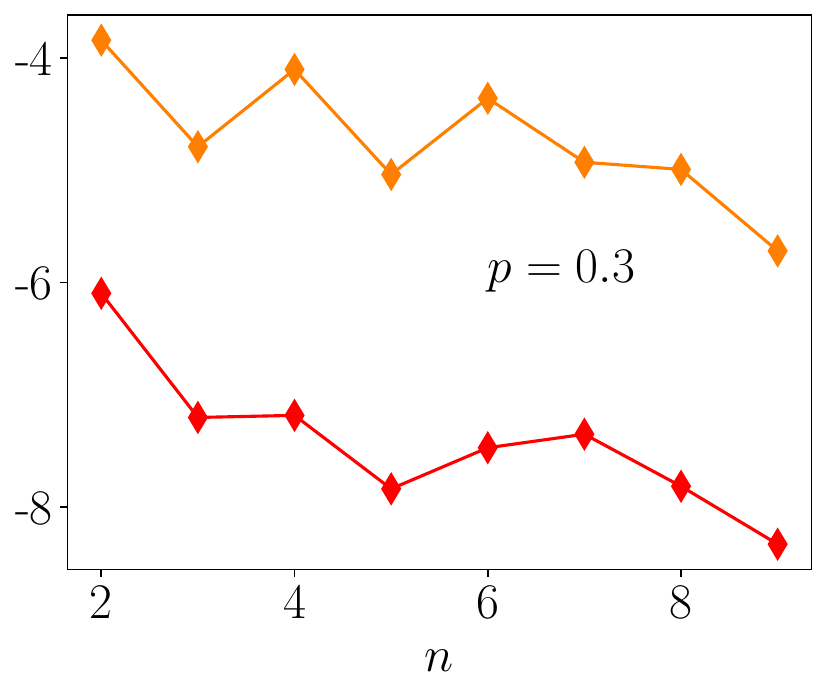}  
    \caption{Magnitude and variance of the gradient of the cost functions for depolarizing (red) and amplitude-damping (orange) maps as a function of the number of qubits for noise probabilities $p=0.1$, $0.2$, and $0.3$ in the left, middle, and right plots, respectively. Error bars in the top row represent the range of the values.} 
    \label{fig:nbound}
\end{figure*}

\section{Proof of \cref{eq:random-N-bound}}
\label{app:proof_of_random-N-bound}

Using \cref{eq:21c}, the noisy cost function is:
\bes
\begin{align}
    &C'(\vect{\theta}) = \Tr[\tilde{H}  \mcV(\theta_\mu)(\tilde{\rho})] \\
    &\quad = p_j\Tr[\tilde{H}  \mcU(\theta_\mu)(\tilde{\rho})]+ \sum_{k\neq j}p_k\Tr[\tilde{H} \mcV'_k(\theta_\mu)(\tilde{\rho})].
\end{align}
\ees
Using \cref{eq:cost-diff}, we now have:
\bes
\begin{align}
    &\frac{\partial C'(\vect{\theta})}{\partial \theta_\mu}
    = p_j\frac{\partial C(\vect{\theta})}{\partial \theta_\mu} 
    -\frac{i}{2}\sum_{k\neq j}p_k \Tr[\tilde{H} \mcV'(\theta_\mu)([P_k,\tilde{\rho}]) ]\\
    &= p_j\frac{\partial C(\vect{\theta})}{\partial \theta_\mu}+\frac{1}{2}\sum_{k\neq j}p_k \Tr[\tilde{H} \mcV'(\theta_\mu)\times\\
    &\qquad\qquad\qquad\qquad\left(\mcU_k\left(\frac{\pi}{2}\right)(\tilde{\rho}) -\mcU_k\left(-\frac{\pi}{2}\right)(\tilde{\rho}) \right) ]\notag \\
    &= p_j\frac{\partial C(\vect{\theta})}{\partial \theta_\mu} +\frac{1}{2}\sum_{k\neq j}p_k \Tr(\tilde{H} \tilde{\xi}_k),
\end{align}
\ees
where $\tilde{\xi}_k=\tilde{\rho}^+_k-\tilde{\rho}^-_k$ and $\tilde{\rho}_k^{\pm}=\mcV'(\theta_\mu)\mcU_k\left(\pm\frac{\pi}{2}\right)(\tilde{\rho})$. Thus,
\beq
    \left|\frac{\partial C'(\vect{\theta})}{\partial \theta_\mu}\right|
    \le p_{j(\mu)}\left|\frac{\partial C(\vect{\theta})}{\partial \theta_\mu}\right| +\frac{1}{2}\sum_{k\neq j}p_{k(\mu)} |\tilde{\vect{w}}_{k(\mu)}\cdot \vect{h}| .
    \eeq
This directly yields the bound on the gradient given in \cref{eq:random-N-bound}.

\section{Proof of \cref{eq:d.h-bound}}
\label{app:NILS}

\begin{proof}
Let us reproduce \cref{eq:d_L-again} for convenience:
    \begin{align*}
        \vect{d}_L = p_{1}^{L-1}\Theta_1\vect{c}_1+p_{2}^{L-2}\Theta_2\vect{c}_2+\cdots+p_{L-1}\Theta_{L-1}\vect{c}_{L-1}+\vect{c}_{L} .
    \end{align*}
Rotating a vector does not change its norm, i.e., $\|\Theta_i\vect{c}_i\|=\|\vect{c}_i\|$. Recall that $\|\vect{c}_i\|\le 1/\sqrt{1-1/d}$ from \cref{lemma3}. Thus, using $p \equiv \max_i p_i$ and $0\le p_i < 1, \forall i$:
\bes
    \begin{align}
        \|\vect{d}_L\|&=\| p_{1}^{L-1}\Theta_1\vect{c}_1+\cdots+p_{L-1}\Theta_{L-1}\vect{c}_{L-1}+\vect{c}_{L}\|\\
        &\le \| p_{1}^{L-1}\Theta_1\vect{c}_1\|+\cdots+\|p_{L-1}\Theta_{L-1}\vect{c}_{L-1}\|+\|\vect{c}_{L}\|\\
        &\le \| p^{L-1}\Theta_1\vect{c}_1\|+\cdots+\|p\Theta_{L-1}\vect{c}_{L-1}\|+\|\vect{c}_{L}\|\\
        &\le  (p^{L-1}+\cdots+p+1)\max_l \|\vect{c}_{l}\|\\
        &\le  (p^{L-1}+\cdots+p+1)\frac{1}{\sqrt{1-1/d}}\\
        &=\frac{1-p^L}{1-p}\frac{1}{\sqrt{1-1/d}} .
    \end{align}
\ees

Using the triangle inequality, we have
\begin{equation}
    \begin{aligned}
        |\vect{d}_L\cdot\vect{h}|&\le  \|\vect{d}_L\|\|\vect{h}\|\\
        &\le  \frac{1-p^L}{1-p}\frac{\|\vect{h}\|}{\sqrt{1-1/d}} \le \frac{1}{1-p}\frac{\|\vect{h}\|}{\sqrt{1-1/d}} ,
    \end{aligned}
\end{equation}
which is \cref{eq:d.h-bound}.
\end{proof}

\section{Dependence on circuit width}
\label{app:BD-proof}

In \cref{eq:non-unital-cost}, we considered the overlap of two randomly chosen $D$-dimensional vectors, which suggests that the gradient of the cost function scales as $1/\sqrt{D}$. This phenomenon was initially discussed in Ref.~\cite{McClean2018} and is the original (noise-free) barren plateau (BP). To rederive it, we follow the approach of Ref.~\cite{proof-random-vec-ortho}. 

Consider two normalized $D$-dimensional vectors $\vect{v}$ and $\vect{h}$, and choose each component of $\vect{h}$ uniformly from the surface of a normalized $D$-Ball, i.e., $\|\vect{v}\|=\|\vect{h}\|=1$. Without loss of generality, we can construct $\vect{h}$ such that its elements $h_i$ are chosen randomly from $\{-1,1\}/\sqrt{D}$, for $i\in[1,D]$.
The expectation value of the inner product of $\vect{v}$ and $\vect{h}$ is $\mathbb{E}[\vect{v}\cdot\vect{h}]=\mathbb{E}[\sum_i v_i h_i]=0$. The variance of their inner product is $\sigma^2[\vect{v}\cdot\vect{h}]=\mathbb{E}[\sum_{i,j} v_i h_i v_j h_j]-\mathbb{E}[\vect{v}\cdot\vect{h}]^2=\sum_{i,j}v_iv_j\mathbb{E}[h_ih_j]=\sum_{i} v_i^2/D=1/D$. The second to last equality uses $h_i^2=1/D$ and $\mathbb{E}[h_ih_j]=0$ for $i\neq j$.

The Chernoff bound states that $\Pr(|X|>\epsilon)<\exp(-\frac{\epsilon^2}{\sigma[X]^2})$. Applying this bound to \cref{eq:non-unital-cost}, we have $\Pr\left(2\left| \frac{\partial C(\vect{\theta})}{\partial \theta_{Lm}}\right|>\epsilon\right) = \Pr(|\vect{v}\cdot\vect{h}|>\epsilon) <e^{-D\epsilon^2}$. Substituting $\epsilon=1/\sqrt{D}$, we obtain $\Pr(|\vect{v}\cdot\vect{h}|>1/\sqrt{D})<1/e$, ensuring that it is likely that $|\vect{v}\cdot\vect{h}|$ stays below $1/\sqrt{D}$.

Note that since our $n$-qubit VQA Hamiltonian is $2$-local, the effective dimension $D$ of $\vect{h}$ is $\sum_{k=1}^2 {n\choose k}=(n^2+n)/2$ as discussed below \cref{eq:non-unital-cost}. Thus, this result can be interpreted as stating that the cost function gradient scale as the inverse of the number of qubits (circuit width).

Next, we examine whether this dependence on circuit width can be observed in numerical simulations of the same type as discussed in \cref{sec:sim}. We again employ a set of $50$ randomly chosen $n$-qubit Hamiltonians, with $2\le n\le 9$.  As stated in \cref{algo:randH}, we set $\|H\|_2=1$; this ensures that $\|\vect{h}\|\le 1$, i.e., does not grow with the effective dimension $D$ of $\vect{h}$. 
\cref{fig:nbound} shows the result of the simulation. 
The two types of noise we simulated exhibit similar patterns, with amplitude-damping having a larger gradient magnitude and variance. Our simulation results exhibit a discernible trend along the dashed-dotted gray line denoted as $\|\vect{h}\|/\sqrt{D}$. The magnitude of the gradient (top row) appears to approach the scaling $1/\sqrt{(n^2+n)/2}$ as the noise probability increases. 
We emphasize that this does not constitute an upper bound for the magnitude; rather, it represents an expected value obtained through averaging over a large number of randomly generated vectors.

\section{Trainability of HS-contractive non-unital circuits}
\label{app:trainable}

We will refer to a circuit as trainable when the variance of its cost function gradient vanishes no faster than $\Omega(1/\text{poly}(n))$~\cite{Cerezo2021-bp}.

Using \cref{eq:final-bound-2} and letting $q=\min_{i\in [l,L]}\sigma_{\min}(M_i)>0$, we write
\beq
\|\tilde{\vect{v}}_\mu^L\|\ge q^{L-l}d_l,
\eeq
where $\mu=(l,m)$, $d_l$ is a constant [recall the argument between \cref{eq:norm-d_l-1-bound,eq:norm-d_l-1-bound-2}], and $0<q<1$ due to $M_i$ being HS-contractive. We dropped the term $2p^L$ in \cref{eq:final-bound-2} since it becomes negligible in the large $L$ limit. We can rewrite \cref{eq:gradient-result} as
\bes
\label{eq:H2}
\begin{align}
    |\partial_\mu C| &=\frac{1}{2}|\tilde{\vect{v}}_\mu^L\cdot\vect{h}|\\
    &=\frac{1}{2}\|\tilde{\vect{v}}_\mu^L\|\|\vect{h}\||\cos(\theta)|\\
    &\ge q^{L-l}\epsilon .
\end{align}
\ees
Here $\epsilon=d_l\|\vect{h}\||\cos(\theta)|$ and 
$\cos(\theta)=(\tilde{\vect{v}}_\mu^L\cdot\vect{h})/\|\tilde{\vect{v}}_\mu^L\|\|\vect{h}\|$ is the BP factor unrelated to noise, i.e., the overlap between two \emph{normalized}, $D$-dimensional, random vectors that stays below $1/\sqrt{D}$ as argued in  \cref{app:BD-proof}.  
The difference is that the norm of $\tilde{\vect{v}}_\mu^L$ now scales down as $q^{L-l}$, so there is an additional noise-induced effect. 

To account for the variance of the cost function gradient, we reinterpret \cref{eq:H2} as a statement about an ensemble of random quantum circuits following the ansatz of the form in \cref{eq:1}, so that $|\partial_\mu C|$ becomes a random variable. Then, recalling Chebyshev's inequality
\beq
\Pr(|\partial_\mu C|\ge \delta)\le \frac{\text{Var}[\partial_\mu C]}{\delta^2},
\eeq
we can use \cref{eq:H2} to write
\beq
\text{Var}[\partial_\mu C]\ge q^{2(L-l)}\epsilon^2 .
\eeq
Therefore, to satisfy the trainability condition $\text{Var}[\partial_\mu C]\ge \Omega(1/\text{poly}(n))$ we require
\beq
L-l=O(\log(n)) .
\eeq
\cref{th:non-unital-no-NIBP} states that a circuit for which the last $O(1)$ layers are HS-contractive non-unital does not suffer from an NIBP. This is true also for the last $O(\log(n))$ layers since \cref{eq:final-bound-2} is still $1/\text{poly}(n)$-small in this case. This implies that the cost function within the last $O(\log(n))$ HS-contractive non-unital layers is trainable, in qualitative agreement with Ref.~\cite{mele2024noiseinduced}, who used a different layer-independent constant.

\section{Example of non-unital channel outside \cref{th:non-unital-no-NIBP}} \label{app:sigma_min}
We expand on the example of a non-unital channel with $\sigma_{\max}(M)>0$ and $\sigma_{\min}(M)=0$, which was mentioned below the proof of \cref{th:non-unital-no-NIBP}. This kind of channel is expected to avoid NIBPs using a Levy's lemma-type argument.

Consider a composite of a bit flip (BF) channel followed by an amplitude damping (AD) channel. The Kraus operators of an AD channel are given in \cref{eq:AD}, while those of a BF channel are
\begin{equation}
    K_0=\sqrt{p}I,\quad K_1=\sqrt{1-p}X,
    \label{eq:BF}
\end{equation}
where $1-p$ is the probability that a bit flip occurs. We can also find, using \cref{eq:M-c}, that $\vect{c} = \vect{0}$ (as is true for all unital channels) and $M = \text{diag}(1,2p-1,2p-1)$. When $p=1/2$, we have $\sigma_{\max}(M) = 1$ and $\sigma_{\min}(M) = 0$. 

Recall that a composition $\Phi(\rho)=\Psi_2(\Psi_1(\rho))$ of two CP maps $\Psi_1(\rho)=\sum_\alpha J_\alpha \rho J_\alpha^\dag $ and $\Psi_2(\rho)=\sum_\beta K_\beta \rho K_\beta^\dag$ is:
\begin{equation}
        \Phi(\rho)=\sum_\gamma L_\gamma \rho L_\gamma^\dag,
\end{equation}
where $\gamma = (\alpha,\beta)$ and $L_\gamma = K_\beta J_\alpha$ are the Kraus operators of the composite channel. 

Using $\Psi_1$ as a BF channel with $p=1/2$ and $\Psi_2$ as an AD channel in \cref{eq:AD}, we can construct $L_\gamma$ as
\begin{equation}
    \begin{aligned}
        &L_{(0,0)}=\begin{pmatrix}
            1/\sqrt{2} & 0\\
            0 & \sqrt{(1-p)/2}
        \end{pmatrix} &&
        L_{(1,0)}=\begin{pmatrix}
            0 & \sqrt{p/2} \\
            0 & 0
        \end{pmatrix}\\
        & L_{(0,1)}=\begin{pmatrix}
            0 & 1/\sqrt{2} \\
            \sqrt{(1-p)/2} & 0
        \end{pmatrix}&&
        L_{(1,1)}=\begin{pmatrix}
            \sqrt{p/2} & 0 \\
            0 & 0
        \end{pmatrix}.
    \end{aligned}
\end{equation}
This set of Kraus operators corresponds to $M=\text{diag}(\sqrt{1-p},0,0)$ and $\vect{c}=(0,0,p)$. We have composed a non-unital channel with $\sigma_{\max}(M)=\sqrt{1-p}$ and $\sigma_{\min}(M)=0$.

Alternatively, we can derive this using the transformation of the coherence vector $\vect{v}$, by a BF channel with $p=1/2$ followed by an AD channel:
\begin{equation}
    \begin{aligned}
        \vect{v}&\rightarrow M_{\text{AD}} M_{\text{BF}}(\vect{v}) + \vect{c}_{\text{AD}}\\
        &=M\vect{v}+\vect{c},
    \end{aligned}
\end{equation}
where $M=M_{\text{AD}} M_{\text{BF}}$ and $\vect{c}=\vect{c}_{\text{AD}}$ as expected.

Using $Z$ in place of $X$ in \cref{eq:BF} to first construct a dephasing channel with $p=1/2$ would also yield $\sigma_{\min}(M)=0$, as would other unital channels at special values of their parameters.

\end{document}